\title{Dynamic Membership for Regular Tree Languages}
\author{Antoine Amarilli}{Univ. Lille, Inria, CNRS, Centrale Lille, UMR 9189 CRIStAL, France \and \url{https://a3nm.net/}}{a3nm@a3nm.net}{https://orcid.org/0000-0002-7977-4441}{
Partially supported by the ANR project EQUUS ANR-19-CE48-0019, by the
Deutsche Forschungsgemeinschaft (DFG, German Research Foundation) – 431183758,
and by the ANR project ANR-18-CE23-0003-02 (“CQFD”).}
\author{Corentin Barloy}{Ruhr University Bochum, Germany \and \url{https://barloy.name/}}{corentin.barloy@rub.de}{https://orcid.org/0000-0001-5420-8761}{
Partially supported by the Deutsche Forschungsgemeinschaft (DFG, German Research Foundation), grant 532727578.}
\author{Louis Jachiet}{Télécom Paris, Institut Polytechnique de Paris, France \and \url{https://louis.jachiet.com/}}{research@jachiet.com}{https://orcid.org/0000-0002-8277-6552}{}
\author{Charles Paperman}{Univ. Lille, Inria, CNRS, Centrale Lille, UMR 9189 CRIStAL, France \and \url{https://paperman.name/}}{charles.paperman@univ-lille.fr}{https://orcid.org/0000-0002-6658-5238}{}
\date{}
\authorrunning{A.~Amarilli, C.~Barloy, L.~Jachiet, C.~Paperman}
\keywords{automaton, dynamic membership, incremental maintenance, forest
algebra}
\newcommand{\NN}{\mathbb{N}}
\newcommand{\muH}{\mu_{\mathrm{H}}}
\newcommand{\muV}{\mu_{\mathrm{V}}}
\newcommand{\odotVV}{\odot_{\mathrm{VV}}}
\newcommand{\odotVH}{\odot_{\mathrm{VH}}}
\newcommand{\oplusHH}{\oplus_{\mathrm{HH}}}
\newcommand{\oplusVH}{\oplus_{\mathrm{VH}}}
\newcommand{\oplusHV}{\oplus_{\mathrm{HV}}}
\newcommand{\SigmaV}{\Sigma^{\mathrm{V}}}
\newcommand{\SigmaH}{\Sigma^{\mathrm{H}}}
\newcommand{\LL}{L}
\newcommand{\bigO}{O}
\newcommand{\N}{\mathbb{N}}
\newcommand{\myparagraph}[1]{\subparagraph*{#1.}}
\newcommand{\cF}{F}
\newcommand{\cT}{F}
\newcommand{\calS}{\mathcal{S}}
\newcommand{\SigmaR}{\Sigma_{-}} %
\newcommand{\SigmaF}{\Sigma_{+}}
\newcommand{\fc}{\mathit{fc}}
\newcommand{\ns}{\mathit{ns}}
\begin{document}

\maketitle

\hideLIPIcs

\begin{abstract}
  We study the \emph{dynamic membership problem} for regular tree languages
  under relabeling updates: we
  fix an alphabet $\Sigma$ and a regular tree language $L$ over~$\Sigma$
  (expressed, e.g., as a tree automaton), we are given a tree $T$ with labels
  in~$\Sigma$, and we must maintain the information of whether the tree $T$
  belongs to~$L$ while handling relabeling updates that change the labels
  of individual nodes in~$T$.

  Our first contribution is to show that this problem admits an $O(\log n / \log
  \log n)$ algorithm for any fixed regular tree language, improving 
  over known $O(\log n)$ algorithms. This generalizes
  the known $O(\log n / \log \log n)$ upper bound
  over words, and it matches the lower bound of $\Omega(\log n / \log \log n)$
  from dynamic membership to some word languages and from the existential
  marked ancestor problem. 

  Our second contribution is to introduce a class of regular languages, dubbed
  \emph{almost-commutative} tree languages, and show that dynamic membership to
  such languages under relabeling updates can be decided in constant time per
  update. Almost-commutative languages generalize both commutative languages and
  finite languages: they are the analogue for trees of the \emph{ZG
  languages} enjoying constant-time dynamic membership over words. Our main
  technical contribution is to show that this class is conditionally optimal
  when we assume that the alphabet features a \emph{neutral letter}, i.e., a letter
  that has no effect on membership to the language. More precisely,
  we show that any regular tree language with a
  neutral letter which is not almost-commutative cannot be maintained in
  constant time under the assumption that the
  prefix-U1 problem from~\cite{amarilli2021dynamic} also does not admit a
  constant-time algorithm.
\end{abstract}

\section{Introduction}
\label{sec:intro}
\emph{Regular tree languages}, and the corresponding
\emph{tree automata}~\cite{comon2008tree}, are a way to enforce
structural constraints on trees while generalizing regular languages
over words. For any fixed tree language $L$ represented as a tree automaton $A$, given
an input tree $T$ with $n$ nodes, we can verify in $O(n)$ whether $T$ belongs to~$L$, simply
by running $A$ over~$T$. However, in some settings, the tree~$T$
may be dynamically updated, and we want to efficiently maintain the
information of whether $T$ belongs to the language. This problem has been
studied in the setting of XML documents and schemas under the name of
\emph{incremental schema
validation}~\cite{balmin2004incremental}.
In this paper we study the theory of this problem in the RAM model with unit
cost and logarithmic word size.
We 
call the problem \emph{dynamic membership} in line with earlier work on regular word
languages~\cite{amarilli2021dynamic}.
We focus on the \emph{computational complexity} of maintaining membership, as
opposed, e.g., to \emph{dynamic
complexity}~\cite{gelade2012dynamic,tschirbs2023dynamic,schmidt2023work}, which studies the expressive
power required (in the spirit of descriptive complexity).
We further focus on the measure of \emph{data complexity}~\cite{vardi1982complexity},
i.e., we assume
that the regular language is fixed, and measure complexity only as a
function of the input tree.
We study efficient algorithms 
for the dynamic membership problem, and (when possible) matching lower bounds.

The naive algorithm for dynamic membership is to re-run $A$ over~$T$ after each
change, which takes $O(n)$. However, more efficient algorithms are
already known. Balmin et al.~\cite{balmin2004incremental} show that
we can maintain membership to
any fixed regular tree language in time $O(\log^2 n)$ under leaf
insertions, node deletions, and substitutions (also known as node relabelings). We focus in this
work on the specific case of relabeling updates, and it is
then known that the complexity can be lowered
to $O(\log n)$: see, e.g., \cite{amarilli2018enumeration} which relies on
tree decomposition balancing techniques~\cite{bodlaender1998parallel}. This is
slightly less favorable than the $O(\log n / \log \log n)$ complexity known for
dynamic membership for regular word
languages~\cite{frandsen1995dynamic,amarilli2021dynamic}.

In terms of lower bounds, the dynamic membership problem for trees under
relabeling
updates is known to require time $\Omega(\log n / \log \log n)$, already for
some very simple fixed languages. These lower bounds can come from two different
routes. 
One first route is the \emph{existential marked ancestor
problem}, where we must maintain a tree under
relabeling updates that mark and unmark nodes, and
efficiently answer queries asking whether a given node
has a marked ancestor. This problem is known to admit
an unconditional $\Omega(\log n /\log \log n)$ lower
bound in the cell probe model~\cite{alstrup1998marked}, and it turns out that
both the updates and the queries can 
be phrased as relabeling updates for a fixed regular tree language.
One second route is 
via the dynamic membership problem in the more restricted setting of word
languages, where there are known $\Omega(\log n /\log \log n)$ lower
bounds for some languages~\cite{skovbjerg1997dynamic,amarilli2021dynamic}.

These known results leave a small complexity gap between $O(\log n)$ and
$\Omega(\log n / \log \log n)$.
Our first contribution in this work is to address this gap, by
presenting an algorithm for the dynamic membership problem 
that achieves complexity $O(\log n /\log \log n)$,
for any fixed regular tree language, and after linear-time preprocessing of the input
tree. This matches the known lower bounds, and
generalizes the known algorithms for dynamic membership to word languages while
achieving the same complexity. Our algorithm 
iteratively processes the tree to merge sets of nodes into clusters and
recursively process the tree of clusters; it is very related to the tree
contraction technique of Gazit et al.~\cite{gazit1988optimal} and
is reminiscent of other algorithms such as the tree contraction technique of
Miller and Reif~\cite{miller1985parallel}, see
also~\cite{bodlaender1998parallel}.
More precisely, our algorithm regroups tree nodes into clusters which have
logarithmic size, ensuring that
clusters fit into a memory word. This ensures that the effects of updates on clusters can be
handled in constant time by the standard technique of tabulating during the preprocessing. Note that clusters may contain holes, i.e., they amount to contexts,
which we handle using the notion of forest algebras.
Then, the algorithm considers the induced tree structure over clusters, and
processes it
recursively, ensuring that the tree size is divided by $\Theta(\log n)$ at each
step. The main challenge in the algorithm is to ensure that a suitable 
clustering of the trees at each level can be computed in linear time.

Having classified the overall complexity of dynamic membership, the next
question is to refine the study and show language-dependent bounds. Indeed,
there are restricted families of regular tree languages over which we can beat
$O(\log n / \log \log n)$. For instance, it is easy to achieve $O(1)$ time per
update when the tree language is finite, or when it is commutative (i.e., only
depends on the number of label occurrences and not the tree structure). What is
more, in the
setting of word languages, the complexity of dynamic membership is in $\Theta(\log
n / \log \log n)$ only for a restricted subset of the word languages, 
namely, those outside the class QSG defined in~\cite{amarilli2021dynamic} (see
also~\cite{frandsen1995dynamic}). Some word languages admit constant-time dynamic
membership algorithms: they were classified in~\cite{amarilli2021dynamic}
conditional to the hypothesis that there is no data structure achieving constant
time per operation for a certain problem, dubbed
\emph{prefix-$U_1$} and essentially amounting to maintaining a weak form of
priority queue.

Our second contribution is to introduce a class of
regular tree languages
for which dynamic membership under relabeling updates can be achieved in
constant time per update, generalizing commutative tree languages and finite
tree languages.
Specifically, we define the class of \emph{almost-commutative} tree
languages: these are obtained as the Boolean closure of regular tree languages
which impose a commutative condition on the so-called \emph{frequent} letters
(those with more occurrences than a constant threshold) and imposing that the projection
to the other letters (called \emph{rare}) forms a specific constant-sized tree.
We show that it is decidable whether a regular tree language is
almost-commutative
(when it is given, e.g., by a tree automaton). 
The motivation for almost-commutative languages is that we can show that dynamic
membership to such
languages can be maintained in $O(1)$ under relabeling updates, generalizing the
$O(1)$ algorithm of~\cite{amarilli2021dynamic} for ZG monoids. 

Our third contribution is to show that, conditionally, the almost-commutative tree languages in fact
\emph{characterize} those regular tree languages that enjoy constant-time dynamic
membership under relabeling updates,
when we assume that the language
features a so-called \emph{neutral letter}.
Intuitively, a letter $e$ is \emph{neutral} for a language $L$ if it is invisible in
the sense that nodes labeled with $e$ can be replaced by the forest of their
children without affecting membership to~$L$.
Neutral letters are often assumed
in the setting of word
languages~\cite{koucky2005bounded,barrington2005first}, and they make
dynamic membership for word languages collapse to the same problem
for their syntactic monoid (as was originally studied
in~\cite{skovbjerg1997dynamic}, and in~\cite{amarilli2021dynamic} as a first
step). 

When focusing on regular tree languages with a neutral letter, we show that for
\emph{any} such language which is \emph{not} almost-commutative, dynamic membership
cannot be maintained in constant time under relabeling updates,
subject to the hypothesis from~\cite{amarilli2021dynamic} that the prefix-$U_1$
problem does not admit a constant-time data structure. Thus, the
$O(1)$-maintainable regular tree languages with a neutral letter can be
characterized conditionally to the same hypothesis as in the case of words. 
We show this conditional lower bound via an (unconditional) algebraic characterization of the class of
almost-commutative languages with neutral letters: they are precisely those
languages with syntactic forest algebras whose vertical monoid satisfies the
equation ZG ($x^{\omega+1} y = y x^{\omega+1}$, i.e., group elements are
central), which was also the (conditional) characterization for
$O(1)$-maintainable word languages with a neutral 
letter~\cite{amarilli2021dynamic}.
The main technical challenge to show
this result is to establish that every tree language with a ZG vertical monoid
must fall in our class of almost-commutative languages: this uses a
characterization of ZG tree languages via an equivalence relation analogous to
the case of ZG on words~\cite{amarilli2023locality}, but with more challenging
proofs because of the tree structure.

\myparagraph{Paper structure}
We give preliminaries in \cref{sec:prelim} and introduce the dynamic membership
problem. %
We then give in \cref{sec:forest} some further preliminaries about
\emph{forest algebras}, which are instrumental to the proof of our results.
We then present our first contribution
in \cref{sec:lll}, namely, an $O(\log n / \log \log n)$ algorithm for
dynamic membership to any fixed regular tree language. 
We then introduce almost-commutative languages in \cref{sec:zgupper} and
show that dynamic membership to these
languages is in $O(1)$. In \cref{sec:zglower} we show our matching lower bound:
in the presence of a neutral letter, and assuming that prefix-$U_1$ cannot be
solved in constant time, then the dynamic membership problem cannot be solved in constant
time for any non-almost-commutative regular language. We conclude 
in \cref{sec:extension}.

Because of space limitations, detailed proofs are deferred to the appendix. Note
that a version of the results of
\Cref{sec:zgupper,sec:zglower} was presented in~\cite{barloy2024complexity} but
never formally published.

\section{Preliminaries}
\label{sec:prelim}
\subparagraph*{Forests, trees, contexts.}
We consider ordered forests of rooted, unranked, ordered trees on a finite
alphabet $\Sigma$, and simply call them \emph{forests}.
Formally, we define \emph{$\Sigma$-forests} and \emph{$\Sigma$-trees} by mutual
induction: a $\Sigma$-forest is a (possibly empty) ordered sequence of
$\Sigma$-trees, and a $\Sigma$-tree consists of a \emph{root node} with a label
in $\Sigma$ and with a $\Sigma$-forest of child nodes.

We assume familiarity with standard terminology on trees
including parent node, children node, ancestors, descendants, root nodes, 
siblings, sibling sets, previous sibling, next sibling, prefix order, etc.; see~\cite{comon2008tree} for details. As we work with ordered
forests, we will always consider root nodes to be siblings, and define the
previous sibling and next sibling relationships to also apply to roots.
We will often abuse notation and identify forests with their sets of nodes,
e.g., for a forest $F$, we define functions on $F$ to mean functions defined on
the nodes of $F$.
We say that two forests have the \emph{same
shape} if they are identical up to changing the labels of their nodes (which are
elements of~$\Sigma$).
The \emph{size} $|F|$ of a $\Sigma$-forest $F$ is its number of nodes, and we
write $|F|_a$ for $a \in \Sigma$ to denote the number of occurrences of~$a$
in~$F$.

\subparagraph*{Forest languages.}
A \emph{forest language} $L$ over an alphabet~$\Sigma$
is a subset of $\Sigma$-forests, and a \emph{tree
language} is a subset of $\Sigma$-trees:
throughout the paper we study forest
languages, but of course our results extend to tree languages as well.
We write $\overline{L}$ for the complement of~$L$.
We will be specifically interested in \emph{regular forest
languages}: among many other equivalent characterizations~\cite{comon2008tree}, 
these can be formalized via finite automata, which we formally introduce below
(first for words and then for forests).

A \emph{deterministic word automaton} over a set $S$ is a tuple $A = (Q, F, q_0, \delta)$ where
$Q$ is a finite set of \emph{states}, $F \subseteq Q$ is a subset of \emph{final
states}, $q_0 \in Q$ is the \emph{initial state}, and $\delta \colon Q \times S
\to Q$ is a transition function. For $w = s_1 \cdots s_k$ a word over~$S$, the
\emph{state reached by~$A$ when reading $w$} is defined inductively as usual: if
$w = \epsilon$ is the empty word then the state reached is~$q_0$, otherwise the
state reached is $\delta(q, s_k)$ with $q$ the state reached when reading $s_1
\cdots s_{k-1}$. The word $w$ is \emph{accepted} if the state reached by~$A$
when reading~$w$ is in~$F$.

We define \emph{forest automata} over forests, which recognize forest languages
by running a word automaton over the sibling sets.
These automata are analogous to the \emph{hedge automata} of~\cite{comon2008tree} (where
horizontal transitions are specified by regular languages), and to
the forest automata of~\cite{bojanczyk2008forest} (where horizontal transitions
are specified by a monoid).
Formally, a \emph{forest automaton} over~$\Sigma$ is a tuple $A = (Q, A', \delta)$ where $Q$ is a
finite set of states,  
$A' = (Q', F', q_0', \delta')$ is a word automaton over the set~$Q$, and
$\delta\colon Q' \times \Sigma \to
Q$ is a \emph{vertical transition function} which associates to every state
$q' \in Q'$ of~$A'$ and letter $a \in \Sigma$ a state $\delta(q',a) \in Q$.
The \emph{run} of~$A = (Q, A', \delta)$ on a $\Sigma$-tree $T$ is a function
$\rho$ from~$T$ to~$Q$ defined inductively as follows:
\begin{itemize}
\item For $u$ a leaf of~$T$ with label $a \in \Sigma$,
we have $\rho(u) = \delta(q_0', a)$ for $q_0'$ the initial state of~$A'$;
\item For $u$ an internal node of~$T$ with label $a \in \Sigma$,
letting $u_1, \ldots, u_k$ be its successive
children, and letting $q_i \coloneq \rho(u_i)$ for all $1 \leq
i \leq k$ be the inductively computed images of the run, 
    for $q$ the state reached in~$A'$ when reading the word
$q_1 \cdots q_k$, we let $\rho(u) \coloneq \delta(q, a)$.
\end{itemize}
We say that a $\Sigma$-forest $F$ is \emph{accepted} by the forest automaton $A$ if, letting $T_1, \ldots,
T_k$ be the trees of~$F$ in order, and letting $q_1, \ldots, q_k$ be the states
to which the respective roots of the $T_1, \ldots, T_k$ are mapped by the
respective runs of~$A$, then the word $q_1 \cdots q_k$ is accepted by the
word automaton~$A'$.
(In particular, the empty forest is accepted iff $q_0' \in F'$.)
The \emph{language} of~$A$ is the set of~$\Sigma$-forests that it accepts.

\subparagraph*{Relabelings, incremental maintenance.}
We consider \emph{relabeling
updates} on forests $F$, that change node labels without changing the tree shape.
A \emph{relabeling update} gives a node $u$ of $F$ (e.g., as a pointer to
that node) and a label $a \in \Sigma$, and changes the label of the
node $u$ to~$a$.

We fix a regular language $L$ of $\Sigma$-forests given by a fixed forest automaton
$A$, and we are given as input a $\Sigma$-forest $F$. We can compute in linear time
whether $A$ accepts $F$. The \emph{dynamic membership problem} for~$L$ is
the task of maintaining whether the current forest belongs to~$L$, under
relabeling
updates. We study the complexity of this problem, defined as the worst-case
running time of an algorithm after each update, expressed as a function of the
size $|F|$ of~$F$. Note that the language~$L$ is assumed to be fixed, and
is not accounted for in the complexity. We work in the RAM model with unit cost
and logarithmic word size, and consider dynamic membership algorithms that run a
linear-time \emph{preprocessing} on the input forest~$F$ to build data structures
used when processing updates. (By contrast, our lower bound results will hold
without the assumption that the preprocessing is in linear time.)

As we reviewed in the introduction, for any fixed regular forest language, it is
folklore that the dynamic membership problem on a tree with $n$ nodes can be solved 
under relabeling updates in time $O(\log n)$ per
update: see for instance \cite{amarilli2018enumeration}. Further, for some
forest languages, some unconditional lower bounds 
in $\Omega(\log n / \log \log n)$ are known. This includes some forest languages defined from
intractable word languages, for instance the forest language $L_1$
on alphabet $\Sigma = \{0, 1, \#\}$ where the word on $\Sigma^*$ formed by the
leaves in prefix order is required to fall in the word language $L_2 \coloneq
(0^*10^*10^*)^* \#\Sigma^*$,
i.e., a unique $\#$ with an even number of $1$'s before it. Indeed, dynamic
membership to $L_2$ (under letter substitutions) admits
an $\Omega(\log n / \log \log n)$ lower bound from the
\emph{prefix-$\mathbb{Z}_2$} problem (see \cite[Theorem 3]{fredman1989cell},
and~\cite{frandsen1995dynamic,amarilli2021dynamic}), hence so does $L_1$.
Intractable forest languages also include
languages corresponding to the \emph{existential marked ancestor} problem
\cite{alstrup1998marked}, e.g., the language $L_3$ on alphabet $\Sigma = \{e, m,
\#\}$ where we have a
unique node $u$ labeled $\#$ and we require that $u$ has an ancestor labeled~$m$. Indeed,
the existential marked ancestor problem of~\cite{alstrup1998marked} allows us to
mark and unmark nodes over a fixed
tree (corresponding to letters $e$ for unmarked nodes and $m$ for marked nodes),
and allows us to query whether a node (labeled~$\#$) has a marked ancestor.
Thus, the existential marked ancestor problem immediately reduces to dynamic
membership for~$L_3$, which inherits the 
$\Omega(\log n / \log \log n)$ lower bound from~\cite{alstrup1998marked}.

\begin{toappendix}
\section{Algebraic Theory of Regular Forest Languages}
\label{sec:algebra}
In this appendix, we develop the algebraic theory of regular languages of forests in more depth than in~\cref{sec:forest}.

\myparagraph{Forest algebra}
Recall our definition of \emph{forest algebras}: a forest algebra is a two-sorted algebra \((V,H)\) along with operations
\(\oplusHH, \oplusHV, \oplusVH, \odotVV\) and \(\odotVH\); and distinguished (neutral) elements \(\epsilon\) and \(\square\). We have also defined the free forest algebra \(\Sigma^{\nabla}= (\SigmaV, \SigmaH)\).

The product \((V_{1},H_{2})\times (V_{2},H_{2})\) of two forest algebras is defined as \((V_{1}\times V_{2}, H_{1}\times H_{2})\) with all operations applied componentwise.
Also recall that we defined morphisms \(\mu_V, \mu_H\) from a forest algebra
\((V,H)\) to another forest algebra \((V',H')\), which we write $\mu$ by abuse
of notation.
Thanks to morphisms, we can construct new forest algebras from a given forest algebra \((V,H)\).
\begin{itemize}
  \item \emph{Subalgebra.} We say that \((V',H')\) is a \emph{subalgebra} of \((V,H)\) if there exists an injective morphism \(\mu\)
    from \((V',H')\) to \((V,H)\). It means that both \(\muV\) and \(\muH\) have to be injective.
  \item \emph{Quotient.} We say that \((V',H')\) is a \emph{quotient} of \((V,H)\) if there exists a surjective morphism \(\mu\)
    from \((V,H)\) to \((V',H')\). It means that both \(\muV\) and \(\muH\) have to be surjective.
      \item \emph{Division.} We say that \((V',H')\) \emph{divides} \((V,H)\) if it is a quotient of a subalgebra of \((V,H)\).
\end{itemize}

A forest language \(L\) over \(\Sigma\) is \emph{recognized} by a forest algebra
\((V,H)\) if there exists a morphism \(\mu\colon \Sigma^{\nabla}\to (V,H)\) and a subset \(H'\subseteq H\) such that
\(L = \mu^{-1}(H')\).

\myparagraph{Comparison to the original definition}
Forest algebra were introduced in~\cite{bojanczyk2008forest} with a slightly different definition.
In~\cite{bojanczyk2008forest}, a forest algebra is also an algebra \((V,H)\) coming with several operations satisfying axioms:
\begin{itemize}
  \item \(\oplusHH, \odotVV, \odotVH\) as in our framework, satisfying Action
    (composition), Faithfulness, and the monoid axioms.
    (Note that they do not require Action (neutral).)
  \item two operations \(\text{in}_{l}\) and \(\text{in}_{r}\) from \(H\) to \(V\) satisfying the \emph{Insertion} axiom: for every \(h,g\in H\),
        \(\text{in}_{l}(h)\odotVH g = h \oplusHH g\) and
        \(\text{in}_{r}(h)\odotVH g = g \oplusHH h\).
\end{itemize}

We will see that these definitions are indeed equivalent.
That is to say that we can define \(\text{in}_{r}\) and \(\text{in}_{l}\) from \(\oplusHV\) and \(\oplusVH\),
and vice versa.

First assume that we are given a forest algebra \((V,H)\) in the sense defined
in this paper, that is to say with two operations \(\oplusHV\) and \(\oplusVH\).
For \(h\in H\), we define \(\text{in}_{l}(h) = h \oplusHV \square\) and \(\text{in}_{r}(h) =\square \oplusVH h\).
We want to show that the Insertion axiom holds for \(\text{in}_{l}\), as the other case is symmetric.
Let \(h,g\in H\), we have that  \(\text{in}_{l}(h) \odotVH g  = (h\oplusHV
\square) \odotVH g = h \oplusHH (\square\odotVH g) \), by the Mixing axiom.
This last part is equal to \(h \oplusHH g\) thanks to Action (neutral).

Now, assume that we have a forest algebra \((V,H)\) in the sense of~\cite{bojanczyk2008forest}, that is to say with two operations \(\text{in}_{l}\) and \(\text{in}_{r}\).
For \(h\in H\) and \(v\in V\), we define \(h\oplusHV v = \text{in}_{l}(h)
\odotVV v\) and \(v\oplusVH h = \text{in}_{r}(h) \odotVV v\).
We want to show that the Mixing axiom holds for \(\oplusHV\), as the other case is symmetric.
Let \(h,g\in H\) and \(v\in V\), we have \((h\oplusHV v) \odotVH g = (\text{in}_{l}(h) \odotVV v) \odotVH g = \text{in}_{l}(h) \odotVH (v \odotVH g) \) by the Action (composition) axiom.
This is equal to \(h\oplusHH (v\odotVH g)\) by Insertion, which is the desired value.
Finally, we have to see that we can prove Action (neutral).
Let \(h\in H\), note that we can find \(v\in V\) such that \(h=v\odotVH\epsilon\) ($v=\text{in}_{l}(h)$ suffices by Insertion).
Then \(\square \odotVH h = (\square \odotVV v) \odotVH \epsilon = v\odotVH \epsilon = h\).
We used Action (composition) once.

\myparagraph{Syntactic forest algebra}
We define here the syntactic forest algebra of a forest language thanks to an equivalence relation similar to the Myhill-Nerode
relation for monoids~\cite[Definition 3.13]{bojanczyk2008forest}.
Let \(L\) be a forest language over \(\Sigma\).
Its \emph{syntactic relation} \(\sim_{L}\) is a pair of equivalence relations: one over \(\SigmaV\) and one over \(\SigmaH\).
We denote both relations with the same symbol \(\sim_{L}\).
For \(F\) and \(F'\) two \(\Sigma\)-forests, we let \(F\sim_{L}F'\) whenever for any \(\Sigma\)-contexts \(C\), we have that \(C(F')\) and \(C(F')\) are either both in \(L\) or both not in \(L\).
For \(C\) and \(C'\) two \(\Sigma\)-contexts, we let \(C\sim_{L}C'\) whenever for any \(\Sigma\)-context \(R\) and \(\Sigma\)-forest \(F\), we have that \(R(C(F))\) and \(R(C'(F))\) are either both in \(L\) or both not in \(L\).
The \emph{syntactic forest algebra} \((V_{L}, H_{L})\) of \(L\) is the set of equivalence classes of \(\SigmaV\) and \(\SigmaH\) under \(\sim_{L}\).
Any forest algebra operation between two equivalence classes can be defined as the equivalence class of the corresponding operation applied to any representatives.
It can be shown that these operations are well-defined, that is to say that the result does not depend on the chosen representatives~\cite[Lemma 3.12]{bojanczyk2008forest}.
The syntactic morphism \(\mu_{L}\) of \(L\) is the morphism that maps every context/forest to its equivalence class in the syntactic relation.

By taking the subset of \(H_{L}\) that consists of the equivalence classes of forests in \(L\), we immediately have that \((V_{L},H_{L})\) recognizes \(L\).
The surjectivity promised in~\cref{sec:forest} is also easy to obtain: for any element \(v\in V_{L}\), we can take any member \(C\) of \(v\) seen as an equivalence class to have
\(\mu_{L}(C)=v\).
Similarly, the minimality, also promised in~\cref{sec:forest}, follows from the definition of \(\sim_{L}\).
Indeed, by the contrapositive, let \(C\) and \(C'\) be two \(\Sigma\)-contexts.
That \(R(C(F))\) and \(R(C'(F))\) belong or not to \(L\) together for all \(\Sigma\)-contexts \(R\) and \(\Sigma\)-forests \(F\) is precisely the definition of \(C\sim_{L}C'\), and thus
it implies \(\mu_{L}(C) = \mu_{L}(C')\).
Moreover, the syntactic forest algebra is also minimal among the recognizers of
\(L\), as we will now show:

\begin{lemma}[{\cite[Proposition 3.14]{bojanczyk2008forest}}]
  \label{lem:syntactic_divides_all_rec}
  Let \(L\) be a regular forest language.
  Then its syntactic forest algebra divides any other forest algebra recognizing \(L\).
\end{lemma}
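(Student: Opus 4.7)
The plan is to prove this in the standard way one proves the analogous statement for monoids: exhibit the syntactic algebra as a quotient of the image subalgebra of any recognizer.

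More precisely, fix a forest algebra $(V,H)$ that recognizes $L$ via a morphism $\alpha \colon \Sigma^{\nabla} \to (V,H)$ and a set $H'' \subseteq H$ with $L = \alpha^{-1}(H'')$. First, I would form the subalgebra $(V',H') := (\alpha_V(\SigmaV),\, \alpha_H(\SigmaH))$ of $(V,H)$; the inclusion into $(V,H)$ is trivially an injective morphism, so $(V',H')$ is a subalgebra in the sense of \cref{sec:algebra}. It then suffices to exhibit a surjective morphism $\phi \colon (V',H') \twoheadrightarrow (V_L,H_L)$, since that will witness that the syntactic algebra is a quotient of a subalgebra of $(V,H)$, i.e., divides it.

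The map I would define is $\phi(\alpha(x)) := \mu_L(x)$ for any $x \in \SigmaV \cup \SigmaH$. The core of the proof is well-definedness: I need to show that $\alpha(x) = \alpha(y)$ implies $x \sim_L y$. For two forests $F, F'$ with $\alpha(F) = \alpha(F')$, and for any $\Sigma$-context $C$, since $\alpha$ is a forest algebra morphism we have $\alpha(C(F)) = \alpha(C) \odotVH \alpha(F) = \alpha(C) \odotVH \alpha(F') = \alpha(C(F'))$, so $C(F) \in L \iff \alpha(C(F)) \in H'' \iff \alpha(C(F')) \in H'' \iff C(F') \in L$, hence $F \sim_L F'$. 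The case of two contexts $C, C'$ is analogous, using both $\odotVV$ and $\odotVH$ to plug $C$ and $C'$ into an arbitrary outer context and inner forest and comparing images under $\alpha$. Once this is established, $\phi$ is well-defined as a pair of maps on the equivalence classes, it is surjective because $\mu_L$ is surjective (as noted just before the statement), and checking the morphism equations $\phi(u \oplusHH v) = \phi(u) \oplusHH \phi(v)$, $\phi(u \odotVV v) = \phi(u) \odotVV \phi(v)$, etc., reduces by choosing preimages under $\alpha$ to the fact that $\mu_L$ is itself a morphism.

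The main obstacle is really just the bookkeeping of the well-definedness check, which must be carried out for both sorts and must cover the interaction of $\alpha$ with each of the five forest-algebra operations; there is no conceptual difficulty, only the need to confirm that every such operation commutes with $\alpha$ so that equal $\alpha$-images yield $\sim_L$-equivalence. Everything else — subalgebra, quotient, surjectivity — is immediate from the setup.
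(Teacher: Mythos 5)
Your proposal is correct and takes essentially the same route as the paper: both reduce to the image subalgebra of the recognizing morphism and then build a surjective morphism onto the syntactic algebra by sending $\alpha(x)\mapsto\mu_L(x)$, with well-definedness coming from the observation that $\alpha(x)=\alpha(y)$ forces $x\sim_L y$ because $\alpha$ commutes with context application. The only cosmetic difference is that the paper first handles the surjective case and then restricts to the image, whereas you start from the image subalgebra directly.
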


\begin{proof}
  The reference in~\cite{bojanczyk2008forest} is not exactly what we claim, but can be easily deduced.
  We give a self-contained proof below for convenience.

  Let \(\mu\colon\Sigma^{\nabla}\to (V,H)\) that recognizes \(L\).
  We assume at first that it is surjective and we will prove that \((V_{L},H_{L})\) is a quotient of \((V,H)\).
  In that case, let \(h\) be an element of \(H\) (the case for \(V\) is identical).
  We claim that any two elements \(F,G\) of \(\mu^{-1}(h)\) are \(\sim_{L}\)-equivalent.
  Indeed, for \(C\) any \(\Sigma\)-context, we have that \(\mu(C(F)) = \mu(C) \odotVH \mu(F)  = \mu(C) \odotVH \mu(G) = \mu(C(G)) \).
  Hence, \(C(F)\) and \(C(G)\) are both in \(L\) if \(\mu(C(F))\) is accepting, and both not in \(L\) otherwise.
  This means that we can build a morphism \(\nu\colon (V,H) \to (V_{L}, H_{L})\) by \(\nu(t)= \mu_{L}(\mu^{-1}(t))\) for \(t\) in \(V\) or \(H\).
  This gives a single value by the preceding observation, and can be checked easily to be a morphism.
  Finally, it is surjective because any \(t\) in \(V_{L}\) or \(H_{L}\) can be lifted into \(\Sigma^{\nabla}\) into some \(X\) such that \(\mu_{L}(X)=t\).
  Thus \(\mu(X)\) gives a value such that \(\nu(\mu(X))=X\).

  If \(\mu\) is not surjective, we notice that it is surjective on its image \(\mu(\Sigma^{\nabla})\), which is a subalgebra of \((V,H)\).
  Thus we can apply the previous part on \(\mu(\Sigma^{\nabla})\) to obtain a division.
\end{proof}

\myparagraph{Boolean operations}
The goal of this paragraph is to show the closure of languages whose syntactic
forest algebra is in ZG under Boolean operations, which we now state
(anticipating the definition of ZG from \cref{def:zg} in \cref{sec:zgupper}):

\begin{claim}
  \label{clm:closure}
  Let $L_1$ and $L_2$ be two regular forest languages whose syntactic forest
  algebras are in ZG. Then the syntactic forest algebras of the intersection $L_1
  \cup L_2$, of the union $L_1 \cap L_2$, and of the complement $\overline{L_1}$ are also in ZG.
\end{claim}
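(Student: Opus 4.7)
The plan is to reduce all three Boolean operations to a single closure statement: the class of forest algebras whose vertical monoid satisfies the ZG equation is closed under division, i.e., under subalgebras, quotients, and finite products. Once this is established, the three cases are handled uniformly.

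Complement is immediate: the equivalence relation $\sim_{L_1}$ on $\SigmaV$ and $\SigmaH$ is unchanged when we complement~$L_1$, because the defining condition ``every surrounding context puts both forests either simultaneously in~$L$ or simultaneously out of~$L$'' is symmetric under complementation. Hence the syntactic forest algebra of $\overline{L_1}$ is literally equal (up to flipping the designated accepting subset of~$H$) to that of~$L_1$, so its vertical monoid still satisfies the ZG equation.

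For intersection and union, I would introduce the product forest algebra $(V_1,H_1)\times(V_2,H_2)$ with componentwise operations, and the product morphism $F\mapsto(\mu_{L_1}(F),\mu_{L_2}(F))$ from $\Sigma^{\nabla}$. This product algebra recognizes $L_1\cap L_2$ via the accepting set $F_1'\times F_2'$ and $L_1\cup L_2$ via $(F_1'\times H_2)\cup(H_1\times F_2')$, where $F_i'\subseteq H_i$ is the accepting set for~$L_i$. By \cref{lem:syntactic_divides_all_rec}, the syntactic forest algebras of $L_1\cap L_2$ and $L_1\cup L_2$ each divide this product. It therefore suffices to check that ZG is preserved under division of forest algebras.

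The main (and essentially only) nontrivial step is this closure under division. The ZG condition is a universal equation $x^{\omega+1}y=yx^{\omega+1}$ on the vertical monoid only, so it suffices to argue at the level of monoids. For a subalgebra, the vertical monoid is a submonoid of~$V$, and any universally quantified equation involving $\omega$-powers is preserved by submonoids (the idempotent power of an element is the same as in the ambient monoid). For a quotient, the vertical monoid is a homomorphic image of~$V$, and $\omega$-equations are preserved by morphisms because morphisms commute with $\omega$-powers. For a product $V_1\times V_2$, given elements $(v_1,v_2)$ and $(u_1,u_2)$, take $\omega$ to be any common multiple of the idempotent powers of $v_1$ in~$V_1$ and of $v_2$ in~$V_2$; then $(v_1,v_2)^{\omega+1}(u_1,u_2) = (v_1^{\omega+1}u_1,v_2^{\omega+1}u_2) = (u_1v_1^{\omega+1},u_2v_2^{\omega+1}) = (u_1,u_2)(v_1,v_2)^{\omega+1}$ by the hypothesis applied in each coordinate.

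I expect no real obstacle: the argument is the standard pseudovariety-style closure, and the only point deserving care is keeping the two-sorted structure of forest algebras in mind while noting that the ZG property only involves the vertical monoid, so the horizontal sort plays no role in the verification.
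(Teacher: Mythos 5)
Your proposal matches the paper's proof essentially step for step: complement is handled by observing that the syntactic forest algebra of $\overline{L_1}$ coincides with that of $L_1$; intersection and union are handled by recognizing $L_1\cap L_2$ and $L_1\cup L_2$ in the product algebra $(V_{L_1},H_{L_1})\times(V_{L_2},H_{L_2})$ and invoking the division lemma; and the remaining work is the closure of ZG under division (subalgebra, quotient, product), argued by checking that morphisms and products behave well with respect to $\omega$-powers. The paper and your proposal make the same observation that the ZG equation only concerns the vertical monoid, so there is no meaningful difference in approach.
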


We prove this result in the rest of this section of the appendix.
First, note that \cref{clm:closure} would follow immediately from the more general theory of pseudovarieties that gives strong relations between certain classes
of regular languages and of forest algebras.
It was introduced by Eilenberg~\cite{Eilenberg1976bookB} in the case of finite monoids.
For finite forest algebras, there is no exposition focusing solely on them.
However, Bojańczyk~\cite{bojanczyk2015monads} extended pseudovariety theory to the very broad scope of monads, in which forest algebras fall
(see~\cite[Section 9.3]{bojanczyk2015monads}).
For reference, the results in~\cite{bojanczyk2015monads} that can be used to
prove the desired closure properties are \cite[Theorem
4.2]{bojanczyk2015monads} and \cite[Corollary
4.6]{bojanczyk2015monads}.

However, we choose to prove these results here in a more elementary and self-contained manner, since we do not need the full power of pseudovariety theory.
We start by settling the case of complementation.

\begin{lemma}
  \label{lem:syntactic_compl}
  The regular forest languages \(L\) and \(\overline{L}\) have the same syntactic forest algebras.
\end{lemma}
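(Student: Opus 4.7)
The plan is to argue directly from the definition of the syntactic congruence $\sim_L$ and observe that it is insensitive to complementation. Recall from the excerpt that for two $\Sigma$-forests $F, F'$, we have $F \sim_L F'$ iff, for every $\Sigma$-context $C$, the forests $C(F)$ and $C(F')$ are both in $L$ or both not in $L$; and analogously for the congruence on contexts via an extra context $R$ and forest $F$.

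The key observation is that the condition ``$C(F)$ and $C(F')$ are both in~$L$ or both not in~$L$'' is logically equivalent to ``$C(F)$ and $C(F')$ are both in $\overline{L}$ or both not in~$\overline{L}$'', since each case amounts to the negation of the other. Hence $\sim_L$ and $\sim_{\overline{L}}$ are the \emph{same} pair of equivalence relations on $\SigmaV$ and $\SigmaH$. The first step of the proof is to spell this out for the forest equivalence; the second step is to do the same for the context equivalence, which is identical in spirit since it is again a biconditional on membership.

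Once the two syntactic relations coincide, the underlying sets $V_L = \SigmaV/{\sim_L}$ and $H_L = \SigmaH/{\sim_L}$ are literally the same as $V_{\overline{L}}$ and $H_{\overline{L}}$. Since all forest-algebra operations $\oplusHH, \oplusHV, \oplusVH, \odotVV, \odotVH$ on the syntactic algebra are defined as the equivalence class of the corresponding operation on any representatives, they coincide too. Thus the syntactic forest algebras of $L$ and $\overline{L}$ are equal (not merely isomorphic), which is even stronger than the statement.

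There is no real obstacle here: the argument is a one-line manipulation of the biconditional defining $\sim_L$, and the only thing worth noting carefully is that the accepting subset $H' \subseteq H_L$ changes (it is complemented when passing from $L$ to $\overline{L}$), but the algebra itself does not. I would state this as a remark at the end to prevent the reader from conflating ``same syntactic algebra'' with ``same recognized language.''
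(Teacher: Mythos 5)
Your argument is correct and is essentially the paper's proof: both hinge on the observation that the biconditional defining $\sim_L$ is invariant under complementation, so $\sim_L = \sim_{\overline{L}}$ and the quotient algebras coincide. Your closing remark about the accepting subset $H'$ being complemented while the algebra stays the same is a helpful clarification the paper does not make explicit.
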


\begin{proof}
  This comes from the fact that for any two \(\Sigma\)-forests \(F\) and \(G\), we have that
  \(F\) and \(G\) are simultaneously in \(L\) if and only if there are simultaneously in \(\overline{L}\).
  In other words, the condition on~$L$ in the definition of the syntactic forest algebra is invariant under complementation.
  Therefore, \(\sim_{L}\) and \(\sim_{\overline{L}}\) are identical and the result follows.
\end{proof}

We continue with the cases of intersection and union.

\begin{lemma}
  \label{lem:syntactic_inter}
 Let \(L_{1}\) and \(L_{2}\) be two regular forest languages with respective syntactic forest algebras \((V_{L_{1}},H_{L_{1}})\) and \((V_{L_{2}},H_{L_{2}})\).
 Then the syntactic forest algebra of \(L_{1}\cap L_{2}\) (resp. \(L_{1}\cup L_{2}\)) divides the product \((V_{L_{1}},H_{L_{1}}) \times (V_{L_{2}},H_{L_{2}})\).
\end{lemma}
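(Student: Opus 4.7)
The plan is to prove this directly by constructing an explicit recognizer for $L_1 \cap L_2$ (resp. $L_1 \cup L_2$) on the product algebra, and then invoking \cref{lem:syntactic_divides_all_rec} to transfer the division property to the syntactic forest algebra. The key observation is that products of morphisms are morphisms, so we have very little work to do algebraically.

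First I would consider the syntactic morphisms $\mu_{L_1}\colon \Sigma^\nabla \to (V_{L_1}, H_{L_1})$ and $\mu_{L_2}\colon \Sigma^\nabla \to (V_{L_2}, H_{L_2})$, together with accepting subsets $H_1' \subseteq H_{L_1}$ and $H_2' \subseteq H_{L_2}$ such that $L_i = \mu_{L_i}^{-1}(H_i')$ for $i \in \{1,2\}$. I would then define the product morphism $\mu\colon \Sigma^\nabla \to (V_{L_1}, H_{L_1}) \times (V_{L_2}, H_{L_2})$ by $\mu(x) \coloneq (\mu_{L_1}(x), \mu_{L_2}(x))$, which is easily checked to be a morphism since all operations on the product algebra are defined componentwise.

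Next I would exhibit the accepting subsets that witness recognition. For intersection, I take $H_\cap' \coloneq H_1' \times H_2' \subseteq H_{L_1} \times H_{L_2}$, and verify that a $\Sigma$-forest $F$ satisfies $\mu(F) \in H_\cap'$ if and only if $\mu_{L_1}(F) \in H_1'$ and $\mu_{L_2}(F) \in H_2'$, which is exactly $F \in L_1 \cap L_2$. For union, I take $H_\cup' \coloneq (H_1' \times H_{L_2}) \cup (H_{L_1} \times H_2')$ and check analogously that $F \in L_1 \cup L_2$ iff $\mu(F) \in H_\cup'$. In both cases, the product algebra recognizes the Boolean combination.

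Finally I would apply \cref{lem:syntactic_divides_all_rec}: since the product algebra recognizes $L_1 \cap L_2$ (resp. $L_1 \cup L_2$), its syntactic forest algebra must divide it. The only mild subtlety, rather than a true obstacle, is just being careful that the definition of division allows dividing a (possibly non-surjective) recognizer; this is already handled in the proof of \cref{lem:syntactic_divides_all_rec} by first restricting to the image subalgebra $\mu(\Sigma^\nabla)$. This gives the desired division and completes the proof.
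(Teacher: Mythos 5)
Your proof is correct and matches the paper's proof almost verbatim: both construct the product morphism, exhibit $H_1' \times H_2'$ (resp.\ $(H_1' \times H_{L_2}) \cup (H_{L_1} \times H_2')$) as the accepting subset, and then invoke \cref{lem:syntactic_divides_all_rec}. Your closing remark about surjectivity being handled inside that lemma is a nice touch but not a divergence.
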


\begin{proof}
  The first step is to prove that \(L_{1}\cap L_{2}\) (resp. \(L_{1}\cup L_{2}\)) is recognized by \((V_{L_{1}},H_{L_{1}}) \times (V_{L_{2}},H_{L_{2}})\).
  We can then apply~\cref{lem:syntactic_divides_all_rec} to conclude.

  We take the two syntactic morphisms \(\mu_{L_{1}}\) and \(\mu_{L_{2}}\) and
  construct their product \(\nu\colon \Sigma^{\nabla}\to (V_{L_{1}},H_{L_{1}}) \times (V_{L_{2}},H_{L_{2}})\).
  It is defined by associating to a \(\Sigma\)-context/\(\Sigma\)-forest \(T\) the value \((\mu_{L_{1}}(T), \mu_{L_{2}}(T))\).
  Let \(H'_{1}\) and \(H'_{2}\) be the respective subsets of \(H_{1}\) and \(H_{2}\) that recognize \(L_{1}\) and \(L_{2}\).
  It is straightforward that \(L_{1}\cap L_{2}\) is recognized by \(\nu\) and the subset \(H'_{1}\times H'_{2}\).
  Similarly, \(L_{1}\cup L_{2}\) is recognized by \(\nu\) and the subset \((H'_{1}\times H_{2}) \cup (H_{1}\times H'_{2})\).
\end{proof}

Finally, in light of~\cref{lem:syntactic_inter} and~\cref{lem:syntactic_compl},
we only need to prove that being in ZG is preserved under product and division to prove~\cref{clm:closure}.
Note that this would fall once again in the framework of pseudovarieties,
and that the statement is true for any class defined thanks to equations, for a precise notion of equations.
See the book of Pin~\cite[Chapter XII]{pinMPRI} for a presentation of the theory for finite monoids,
or the article of Bojańczyk~\cite[Section 11]{bojanczyk2015monads} (in particular Theorem~11.3) for a presentation for monads (which encompass forest algebras).

\begin{lemma}
  Let \((V_{1},H_{1})\) and \((V_{2},H_{2})\) be forest algebras.
  \begin{itemize}
    \item If both are in ZG, then \((V_{1},H_{1})\times (V_{2},H_{2})\) is in ZG.
    \item If \((V_{1},H_{1})\) is in ZG and \((V_{2},H_{2})\) divides \((V_{1},H_{1})\), then \((V_{2},H_{2})\) is in ZG.
  \end{itemize}
\end{lemma}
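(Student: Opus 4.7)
My plan is to treat ZG as the equational condition $x^{\omega+1} y = y x^{\omega+1}$ on the vertical monoid, interpreted as: for every (equivalently, some) integer $n$ with $x^n$ idempotent, one has $x^{n+1} y = y x^{n+1}$. Since the vertical monoid of the product forest algebra is the direct product of the two vertical monoids, and since the vertical monoid of a subalgebra (resp.\ quotient) of $(V_1,H_1)$ is a submonoid (resp.\ quotient monoid) of $V_1$, the statement reduces to verifying that this equational property is preserved by direct products, submonoids, and surjective images of finite monoids.

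For the product, I argue componentwise. Given $(x_1,x_2), (y_1,y_2) \in V_1 \times V_2$, I choose a single integer $n$ large enough that $x_1^n$ is idempotent in $V_1$ and $x_2^n$ is idempotent in $V_2$ (for instance, $n = |V_1|! \cdot |V_2|!$). Then $(x_1,x_2)^n$ is idempotent in the product, and applying the ZG equation in each component gives $(x_1,x_2)^{n+1}(y_1,y_2) = (y_1,y_2)(x_1,x_2)^{n+1}$ as required.

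For division, I unfold the definition: $(V_2,H_2)$ arises as a quotient of some subalgebra $(V',H')$ of $(V_1,H_1)$. The subalgebra step is immediate: $V'$ embeds into $V_1$ preserving products and idempotents, so every instance of the ZG equation that holds in $V_1$ also holds in $V'$. For the quotient step, given a surjective morphism $\varphi$, I pick $a,b \in V_2$, lift them to $x,y \in V'$, choose $n$ with $x^n$ idempotent in $V'$, apply ZG in $V'$ to get $x^{n+1}y = yx^{n+1}$, and push this equation through $\varphi$ to obtain $a^{n+1}b = ba^{n+1}$, using that $a^n = \varphi(x^n)$ is idempotent since morphisms preserve idempotency. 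The one subtle point I expect to require a brief justification is that the period of $a$ in $V_2$ may be strictly smaller than that of $x$ in $V'$, so $n$ is not necessarily the minimal exponent making $a^n$ idempotent; this is harmless because the cyclic subsemigroup generated by $a$ contains a unique idempotent, whence $a^{m+1} = a^{m'+1}$ whenever both $a^m$ and $a^{m'}$ are idempotent. So the derived identity $a^{n+1}b = ba^{n+1}$ really is the ZG equation in $V_2$, and the two closure properties together establish the lemma.
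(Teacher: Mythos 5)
Your proof is correct and follows essentially the same approach as the paper's: reduce the statement to the preservation of the ZG identity on vertical monoids under products, submonoids, and quotient monoids, and handle the only non-trivial bookkeeping, namely that idempotent powers are compatible with products and with (injective and surjective) morphisms. The only cosmetic difference is that the paper shows $(v_1,v_2)^\omega = (v_1^\omega, v_2^\omega)$ and $\mu(v)^\omega = \mu(v^\omega)$ directly, whereas you instead pick a common exponent $n$ (such as $|V_1|!\,|V_2|!$) and appeal to uniqueness of the idempotent in a cyclic subsemigroup; these are interchangeable ways of handling the same point.
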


\begin{proof}
  For division, we will prove the statement separately in the special cases of subalgebras and quotients.
  The result follows from the definition of division.
  \begin{itemize}
    \item \emph{Product.} Assume \((V_{1},H_{1})\) and \((V_{2},H_{2})\) are in ZG.
          Let \((v_{1},v_{2})\) and \((w_{1},w_{2})\) be in \(V_{1}\times V_{2}\).
          We need a small direct result about the idempotent power in a product.
          Suppose \((v_{1},v_{2})^{\omega} = (x_{1},x_{2})\) for some \(x_{1},x_{2}\).
          By idempotence, \((x_{1}^{2}, x_{2}^{2}) = (x_{1},x_{2})\).
          Thus \(x_{1}\) (resp.\ \(x_{2}\)) is a power of \(x_{1}\) (resp.\ \(x_{2}\)) that is idempotent.
          Hence \((v_{1},v_{2})^{\omega} = (v_{1}^{\omega},v_{2}^{\omega})\).
          Finally, \((v_{1},v_{2})^{\omega}(w_{1},w_{2}) =(v_{1}^{\omega}w_{1},v_{2}^{\omega}w_{2}) =(w_{1}v_{1}^{\omega},w_{2}v_{2}^{\omega}) = (w_{1},w_{2}) (v_{1},v_{2})^{\omega}   \),
          where we used that \(V_{1}\) and \(V_{2}\) satisfy the equation of ZG.
          This shows that the product also satisfies the equation of ZG.
    \item \emph{Subalgebra.} Assume \((V_{1},H_{1})\) is in ZG and there is an
      injective morphism \(\mu\colon (V_{2},H_{2}) \to (V_{1},H_{1})\).
          Let \(v,w\in V_{2}\).
          We first have to verify that a morphism is well-behaved with regards to the idempotent power,
          that is to say that \( \mu(v)^{\omega} = \mu(v^{\omega})\).
          This is indeed the case as \(\mu(v^{\omega})\) is a power of \(\mu(v)\) that is an idempotent.
          Then \(\mu(v^{\omega+1}w) = \mu(v)^{\omega+1}\mu(w) = \mu(w)\mu(v)^{\omega+1} =\mu(wv^{\omega+1})   \), where
          we used that \(V_{1}\) satisfy the equation of ZG.
          By injectivity, this implies that \(v^{\omega+1}w=wv^{\omega+1}\).
          This shows that the subalgebra also satisfies the equation of ZG.
    \item \emph{Quotient.} Assume \((V_{1},H_{1})\) is in ZG and there is a
      surjective morphism \(\mu\colon (V_{1},H_{1}) \to (V_{2},H_{2})\).
          Let \(v,w \in V_{2}\).
          By surjectivity, we can find \(v',w'\in V_{1}\) such that \(\mu(v')=v\) and \(\mu(w')=w\).
          Using the same observation on idempotents as for subalgebras, \(v^{\omega+1}w = \mu(v'^{\omega+1}w')= \mu(w'v'^{\omega+1})= wv^{\omega+1}\),
          where we used that \(V_{1}\) satisfy the equation of ZG.
          This shows that the quotient also satisfies the equation of ZG. \qedhere
  \end{itemize}
\end{proof}

\myparagraph{Examples}
We give here two examples of forest languages over \(\Sigma=\{a,b\}\) and compute their syntactic forest algebra.
Recall that the syntactic forest algebra is defined as a set of equivalence
classes, so we denote the elements of the vertical and horizontal monoids by a
representative of these equivalence classes.
We conclude the section by giving an example of evaluation of a \((V,H)\)-forest (see definition in~\cref{sec:forest}).

\newcommand{\scalen}{0.8}
\newsavebox{\boxtreea}
\sbox{\boxtreea}{%
\begin{tikzpicture}[-,line width=0.2pt,scale=\scalen, shape=circle, inner sep=0.00cm, level distance=12pt, sibling distance=20pt,every node/.style={transform shape}]
  \node {a}
  child {node {\tiny \(\square\)}};
\end{tikzpicture}%
}
\newcommand{\treea}{\raisebox{-5pt}{\usebox{\boxtreea}}}

Let \(L_{1}\) be the language of \(\Sigma\)-forests with an even number of \(a\).
It is easy to check that two contexts or forests are equivalent for the
syntactic relation if and only if they have the numbers of nodes labeled $a$
have the same parity.
Therefore, the syntactic forest algebra of \(L_{1}\) is \((V_{1},H_{1})\) with \(H_{1}=\{\epsilon, a\}\) and \(V_{1}=\left\{\square, \treea\right\}\).
The tables for each operation are given in~\cref{fig:syntactic_1}.
We omit the table of \(\oplusHV\), as it is identical to the table of
\(\oplusVH\) whose rows and columns are swapped.
Note that \(V_{1}\) and \(H_{1}\) are both isomorphic to the group \(\mathbb{Z}/2\mathbb{Z}\) that computes the addition modulo \(2\).
Hence \(V_{1}\) is commutative, which implies that \(L_{1}\) belongs to ZG (see~\cref{def:zg}).

\begin{figure}
  \centering

  \begin{minipage}{0.24\textwidth}
    \centering
    \[
      \begin{array}{c|cc}
        \oplusHH & \epsilon & a  \\
        \hline
        \epsilon & \epsilon & a   \\
        a & a & \epsilon
      \end{array}
    \]
  \end{minipage}
  \begin{minipage}{0.24\textwidth}
    \centering
    \[
      \begin{array}{c|cc}
        \odotVV & \square & \treea  \\
        \hline
        \square & \square & \treea   \\
        \treea & \treea & \square
      \end{array}
    \]
  \end{minipage}
  \begin{minipage}{0.24\textwidth}
    \centering
    \[
      \begin{array}{c|cc}
        \odotVH & \epsilon & a  \\
        \hline
        \square & \epsilon & a   \\
        \treea & a & \epsilon
      \end{array}
    \]
  \end{minipage}
  \begin{minipage}{0.24\textwidth}
    \centering
    \[
      \begin{array}{c|cc}
        \oplusVH & \epsilon & a  \\
        \hline
        \square & \square & \treea   \\
        \treea & \treea & \square
      \end{array}
    \]
  \end{minipage}

  \caption{Operation tables of the syntactic forest algebra of $L_1$}
  \label{fig:syntactic_1}
\end{figure}

\newsavebox{\boxtreeaa}
\sbox{\boxtreeaa}{%
\begin{tikzpicture}[-,line width=0.2pt,scale=\scalen, shape=circle, inner sep=0.00cm, level distance=12pt, sibling distance=20pt,every node/.style={transform shape}]
  \node {a}
  child {node {a}};
\end{tikzpicture}%
}
\newcommand{\treeaa}{\raisebox{-5pt}{\usebox{\boxtreeaa}}}

\newsavebox{\boxtreeaas}
\sbox{\boxtreeaas}{%
\begin{tikzpicture}[-, line width=0.2pt,scale=\scalen, shape=circle, inner sep=0.00cm, level distance=12pt, sibling distance=20pt,every node/.style={transform shape}]
  \node {a}
  child {node {a} child {node {\tiny \(\square\)}}};
\end{tikzpicture}%
}
\newcommand{\treeaas}{\raisebox{-10pt}{\usebox{\boxtreeaas}}}

\newsavebox{\boxtreeap}
\sbox{\boxtreeap}{%
  \begin{tikzpicture}[ shape=circle, inner sep=0.00cm, level distance=12pt, sibling distance=20pt]
   \node (a1) {a};
   \node (a2) [right=0.1pt of a1] {\tiny +};
   \node (a3) [right=0.1pt of a2] {\tiny \(\square\)};
 \end{tikzpicture}%
}
\newcommand{\treeap}{\usebox{\boxtreeap}}

Let \(L_{2}\) be the language of \(\Sigma\)-forests whose nodes labelled by $a$
are pairwise incomparable, i.e., form an antichain. In other words, the input
forest must not contain two nodes $n$ and $n'$ labeled $a$ such that $n$ is a
strict ancestor of~$n'$. Note that this language is also discussed
in~\cite[Example~7.25]{barloy2024complexity}.
Once again, we can describe all the equivalence classes of the syntactic equivalence on \(\Sigma\)-forests.
There are three of them:
\begin{itemize}
  \item the class of forests without any $a$,
  \item the class of forests with at least one $a$ whose nodes labelled by $a$ form an antichain,
  \item the class of forests with two comparable nodes labelled by $a$.
\end{itemize}
We can do the same for the equivalence classes of the syntactic equivalence on
\(\Sigma\)-contexts.
This time, there are four of them:
\begin{itemize}
  \item the class of contexts without any $a$,
        \item the class of contexts with at least one $a$, whose nodes labelled
          by $a$ form an antichain and none of these nodes is an ancestor of the \(\square\) node,
        \item the class of contexts with at least one $a$, whose nodes labelled
          by $a$ form an antichain and at least one of them is an ancestor of the \(\square\) node,
  \item the class of contexts with two comparable nodes labelled by $a$.
\end{itemize}

Therefore, the syntactic forest algebra of \(L_{2}\) is \((V_{2},H_{2})\) with \(H_{2}=\left\{\epsilon, a, \treeaa\right\}\) and \(V_{2}=\left\{\square, \treeap, \treea, \treeaas \right\}\).
The tables for each operation are given in~\cref{fig:syntactic_2}.
We omit the table of \(\oplusHV\), as it is again identical to the table of
\(\oplusVH\) whose rows and columns are swapped.
Note that \(H_{2}\) is isomorphic to the monoid \(\{\{0,1,2\},\max\}\).
Let \(v=\treeap\) and \(w=\treea\). Notice that \(v\) is an idempotent and therefore \(v^{\omega+1}=v\).
Thus \(v^{\omega+1}w = \treea \neq \treeaas = wv^{\omega+1}\) and \(L_{2}\) does not belong to ZG.

\begin{figure}
  \centering

  \begin{minipage}{0.45\textwidth}
    \centering
    \[
      \begin{array}{c|ccc}
        \oplusHH & \epsilon & a & \treeaa \\
        \hline
        \epsilon & \epsilon & a & \treeaa \\
        a & a & a & \treeaa\\
        \treeaa & \treeaa & \treeaa&\treeaa
      \end{array}
    \]
  \end{minipage}
  \begin{minipage}{0.45\textwidth}
    \centering
    \[
      \begin{array}{c|cccc}
        \odotVV & \square & \treeap & \treea & \treeaas\\
        \hline
        \square & \square & \treeap&  \treea &  \treeaas  \\
        \treeap & \treeap& \treeap&\treea & \treeaas\\
        \treea &  \treea &\treeaas &\treeaas & \treeaas\\
        \treeaas &  \treeaas & \treeaas& \treeaas& \treeaas\\
      \end{array}
    \]
  \end{minipage}

  \begin{minipage}{0.45\textwidth}
    \centering
    \[
      \begin{array}{c|ccc}
        \odotVH & \epsilon & a & \treeaa \\
        \hline
        \square & \epsilon& a& \treeaa  \\
        \treeap & a & a& \treeaa\\
        \treea & a & \treeaa & \treeaa \\
        \treeaas &  \treeaa & \treeaa& \treeaa\\
      \end{array}
    \]
  \end{minipage}
  \begin{minipage}{0.45\textwidth}
    \centering
    \[
      \begin{array}{c|ccc}
        \oplusVH & \epsilon & a & \treeaa \\
        \hline
        \square & \square & \treeap & \treeaas   \\
        \treeap & \treeap & \treeap & \treeaas \\
        \treea & \treea & \treea  & \treeaas  \\
        \treeaas & \treeaas  & \treeaas& \treeaas\\
      \end{array}
    \]
  \end{minipage}

  \caption{Operation tables of the syntactic forest algebra of $L_2$}
  \label{fig:syntactic_2}
\end{figure}

Finally, consider the \((V_{2},H_{2})\)-forest represented in~\cref{fig:V2_forest}, whose internal nodes are labelled by elements of \(V_{2}\), and leaves are labelled by elements of \(H_{2}\).
To evaluate it, we first consider the function from $H_2$ to $V_2$ that
relabels leaves by mapping \(\epsilon\) to \(\square\) and \(a\) to
\(\treeap\). Applying this function yields a tree where all nodes are  labelled
by elements of \(V_{2}\), which 
is depicted in~\cref{fig:new_V2_forest}.
Let us now apply the
second definition in
the ``Morphisms'' paragraph of~\cref{sec:forest}, and explain how to compute the
image of this forest in $V_2$ by the morphism.
The morphism is defined from a function $g$ from
the nodes of the tree to~$V_2$: here $g$ is simply given by the identity function.
By induction, we evaluate the trees rooted at every node in ascending order:
\begin{itemize}
  \item \textbf{Node 1:} it has the empty sequence of children, so it evaluates to \(\treeap \odot \epsilon = a\).
  \item \textbf{Node 2:} it has the empty sequence of children, so it evaluates to \(\square \odot \epsilon = \epsilon\).
  \item \textbf{Node 3:} it has the empty sequence of children, so it evaluates to \(\square \odot \epsilon = \epsilon\).
  \item \textbf{Node 4:} it has the empty sequence of children, so it evaluates to \(\treeap \odot \epsilon = a\).
  \item \textbf{Node 5:} its forest of children evaluates to \(a\oplus \epsilon=a\), thus it evaluates itself to \(\square \odot a =a \).
  \item \textbf{Node 6:} its forest of children evaluates to \(\epsilon\oplus a=a\), thus it evaluates itself to \(\treeap \odot a = a \).
  \item \textbf{Node 7:} its forest of children evaluates to \(a\oplus a=a\), thus it evaluates itself to \(\treea \odot a = \treeaa \).
\end{itemize}
Note that, in particular, the leaf nodes evaluate back to their label in the original forest
of~\cref{fig:V2_forest}.
Overall, this tree evaluates to \(\treeaa\).

\begin{figure}
  \centering

  \begin{subfigure}{0.45\textwidth}
    \centering
\begin{tikzpicture}[level distance=1.5cm,
  level 1/.style={sibling distance=3cm},
  level 2/.style={sibling distance=1.5cm},
  every node/.style={ draw, minimum size=7mm, inner sep=0pt},
  number/.style={draw = none, font=\tiny,xshift=14pt, yshift=-5pt, text=blue}
]

\node (n1) {\(\treea\) }
  child {node (n2) {\(\square\)}
    child {node[circle] (n4) {\(a\)}}
    child {node[circle] (n5) { \(\epsilon\)}}
  }
  child {node (n3) {\(\treeap\) }
    child {node[circle] (n6) {\(\epsilon\) }}
    child {node[circle] (n7) { \(a\)}}
  };

\node[number] at (n1) {7};
\node[number] at (n2) {5};
\node[number] at (n3) {6};
\node[number] at (n4) {1};
\node[number] at (n5) {2};
\node[number] at (n6) {3};
\node[number] at (n7) {4};

\end{tikzpicture}
    \caption{A \((V_{2},H_{2})\)-tree.}
    \label{fig:V2_forest}
  \end{subfigure}
  \hfill
  \begin{subfigure}{0.45\textwidth}
    \centering
\begin{tikzpicture}[level distance=1.5cm,
  level 1/.style={sibling distance=3cm},
  level 2/.style={sibling distance=1.5cm},
  every node/.style={ draw, minimum size=7mm, inner sep=0pt},
  number/.style={draw = none, font=\tiny,xshift=14pt, yshift=-5pt, text=blue}
]

\node (n1) {\(\treea\) }
  child {node (n2) {\(\square\)}
    child {node (n4) {\(\treeap\)}}
    child {node (n5) { \(\square\)}}
  }
  child {node (n3) {\(\treeap\) }
    child {node (n6) {\(\square\) }}
    child {node (n7) { \(\treeap\)}}
  };

\node[number] at (n1) {7};
\node[number] at (n2) {5};
\node[number] at (n3) {6};
\node[number] at (n4) {1};
\node[number] at (n5) {2};
\node[number] at (n6) {3};
\node[number] at (n7) {4};

\end{tikzpicture}
    \caption{Replacement of the leaves.}
    \label{fig:new_V2_forest}
  \end{subfigure}

  \caption{Evaluation of a \((V_{2},H_{2})\)-tree.}
\end{figure}

\end{toappendix}

\section{Forest Algebras}
\label{sec:forest}
All our results about the dynamic membership problem are in fact shown
by rephrasing to the terminology of \emph{forest algebras}. Intuitively, forest
algebras give an analogue for trees to the monoids and syntactic morphisms used
in the setting of words, which formed the first step of the classification of
the complexity of dynamic membership on
words~\cite{frandsen1995dynamic,amarilli2021dynamic}.
More details on forest algebras are given in Appendix~\ref{sec:algebra}, with in particular worked-out examples of computations of forest algebras and of the evaluation of a tree labelled by a forest algebra.

\subparagraph*{Monoids.}
A \emph{monoid} $M$ is a set equipped with an associative composition law
featuring a \emph{neutral element}, that is, an element $e$ such that $ex = xe =
x$ for all $x$ in~$M$. We write the composition law of monoids multiplicatively,
i.e., we write $xy$ for the composition of~$x$ and~$y$.
The \emph{idempotent power} of an element $v$ in a finite monoid $M$, written
$v^\omega$, is $v$ raised to the least
integer $\omega$ such that $v^\omega = v^\omega v^\omega$: see
\cite[Chp.\ II, Prop.\ 6.31]{mpri} for a proof that every element has an
idempotent power. The \emph{idempotent power} of~$M$ is a value $m \in \mathbb{N}$
ensuring that, for each
$v \in M$, we have $v^m v^m = v^m$: this can be achieved by taking any common
multiple of the exponents $\omega$ for the idempotent powers of all elements
of~$M$.

\subparagraph*{Forest algebra.}
A forest algebra~\cite{bojanczyk2008forest} is a pair $(V, H)$ of two monoids.
The monoid $V$ is the \emph{vertical monoid}, with vertical composition written
$\odotVV$ and neutral element written $\square$.
The monoid $H$ is the \emph{horizontal monoid}, 
with horizontal concatenation denoted $\oplusHH$,
and with neutral element written $\epsilon$. 
Further, we have three composition laws:
    $\odotVH \colon V \times H \to H$, 
    $\oplusVH\colon V \times H \to V$, and
    $\oplusHV \colon H \times V \to V$.
We require that the following relations hold:
\begin{itemize}
  \item \emph{Action (composition)}: for every \(v,w\in V\) and \(h\in H\),
    \((v\odotVV w)\odotVH h = v\odotVH (w\odotVH h)\).
  \item \emph{Action (neutral)}: for every \(h\in H\), \(\square \odotVH h = h\).
  \item \emph{Mixing}: for every \(v\in V\) and \(h,g\in H\), 
    we have \((v\oplusVH h)\odotVH g = (v\odotVH g)\oplusHH h\) and
    \((h\oplusHV v)\odotVH g = h \oplusHH (v\odotVH g)\).
  \item \emph{Faithfulness}: for every distinct \(v,w\in V\), there exists
    \(h\in H\) such that \(v\odotVH  h \neq w\odotVH h\).
\end{itemize}
We explain in Appendix~\ref{sec:algebra} how this formalism slightly differs
from that of~\cite{bojanczyk2008forest} but is in fact equivalent.
We often abuse notation and write $\oplus$ to mean one of $\oplusHH,
\oplusHV, \oplusVH$ and write $\odot$ to mean one of $\odotVH,
\odotVV$.

We almost always assume that forest algebras are \emph{finite}, i.e., 
the horizontal and vertical monoids $H$ and $V$ are both finite. The only exception is
the \emph{free forest algebra} \(\Sigma^{\nabla}= (\SigmaV, \SigmaH)\).
Here,
$\SigmaH$ is the set of all $\Sigma$-forests, and 
$\SigmaV$ is the set of all \emph{$\Sigma$-contexts}, i.e., $\Sigma$-forests
having precisely one node that
carries the special label $\square \notin \Sigma$: further, this node must be a leaf.
The \(\oplus\)-laws denote horizontal \emph{concatenation}: of two forests
for $\oplusHH$, and of a forest and a context for $\oplusHV$ and $\oplusVH$.
(Note that two contexts cannot be horizontally concatenated because the result
would have two nodes labeled~$\square$.) We write the
$\oplus$-laws as $+$, and remark that they are not commutative. 
The \(\odot\)-laws are the \emph{context application operations} of plugging the forest (for $\odotVH$) or
context (for $\odotVV$) on the right
in place of the \(\square\)-node of the context on the left. We write the
$\odot$-laws functionally, i.e., for $s \in \SigmaV$ and for 
$u \in
\SigmaV$ or $f \in \SigmaH$ we write $s(u)$ to mean $s \odotVV u$ and
write $s(f)$ to mean $s \odotVH f$.

We write $\square \in \SigmaV$ for the trivial
$\Sigma$-context consisting only of a single node labeled~$\square$, we write
$\epsilon \in \SigmaH$ for the empty forest, and for $a
\in \Sigma$ we write $a_\square \in \SigmaV$ the $\Sigma$-context which consists of a single
root node labeled $a$ with a single child labeled~$\square$. Note that
$\SigmaV$ and $\SigmaH$ are spanned by $\epsilon$ and $\square$ and by
the $a_\square$ via context application and concatenation.

\subparagraph*{Morphisms.}
A \emph{morphism of forest algebras}~\cite{bojanczyk2008forest} consists of two functions that map forests to
forests and contexts to contexts and are compatible with the internal operations.
Formally, a morphism from \((V,H)\) to \((V',H')\) is a pair of functions
\(\muV\colon V \to V',\muH\colon H\to H'\) where:
\begin{itemize}
  \item \(\muV\) is a monoid morphism: for all \(v,w\in V\), we have \(\muV(v\cdot w) =
    \muV(v)\cdot \muV(w)\) and \(\muV(\square)=\square\) where \(\cdot\) and \(\square\)
        are interpreted in the corresponding monoid.
  \item \(\muH\) is a monoid morphism: for all \(h,g\in H\), we have \(\muH(h+ g) =
    \muH(h)+ \muH(g)\) and \(\muH(\epsilon)=\epsilon\) where \(+\) and \(\epsilon\)
        are interpreted in the corresponding monoid.
  \item for every \(v\in V\), \(h\in H\), we have \(\muH(v\odotVH h) = \muV(v)
    \odotVH  \muH(h)\).
  \item for every \(v\in V\), \(h\in H\), we have \(\muV(v\oplusVH h) =
    \muV(v) \oplusVH \muH(h)\) and \(\muV(h\oplusHV v) = \muH(h)
    \oplusHV \muV(v)\).
\end{itemize}
We often abuse notation and write a morphism $\mu$ to mean the function
that applies $\muV$ to $\Sigma$-contexts and $\muH$ to $\Sigma$-forests.
For any alphabet $\Sigma$ and forest algebra $(V, H)$,
given a function $g$ from the alphabet $\Sigma$ to~$V$, we extend it to 
a morphism $\muV,\muH$ from the free forest algebra to $(V,H)$
in the only way compatible with the requirements above, i.e.:
\begin{itemize}
\item For a sequence of $\Sigma$-trees $F = t_1, \ldots, t_k$ where at most one $t_i$
  is a $\Sigma$-context, letting $x_i \coloneq \mu(t_i)$ for each $i$ be inductively
    defined, then
    $\mu(F) \coloneq x_1 \oplus \cdots \oplus x_k$.
\item For a $\Sigma$-tree or $\Sigma$-context $t$ with root node $u$ labeled $a \in \Sigma$,
  letting $F$ be the (possibly empty) sequence of children of~$u$
    (of which at most one is a context, and for which 
    $\mu(F)$ was inductively
      defined),
      then we set $\mu(t) \coloneq (g(a)) \odot \mu(F)$.
\end{itemize}
A forest language $L$ over $\Sigma$ is \emph{recognized} by a forest algebra $(V,
H)$ if there is a subset $H' \subseteq H$ and a function from $\Sigma$ to~$V$
defining a morphism $\muV,\muH$ having the following property: a $\Sigma$-forest $F$
is in $L$ iff $\muH(F) \in H'$.

\subparagraph*{Syntactic forest algebra.}
Regular forest languages can be related to forest algebras via the notion of
\emph{syntactic forest algebra}.
Indeed, 
a forest language is
regular iff it is recognized by some forest algebra (see 
\cite[Proposition 3.19]{bojanczyk2008forest}). Specifically, 
we will consider the \emph{syntactic forest algebra} of a regular forest
language~$L$: this forest algebra recognizes $L$, it is minimal in a certain
sense, and it is unique up to isomorphism. We omit the formal definition of the
syntactic forest algebra $(V,H)$ of~$L$ (see \cite[Definition~3.13]{bojanczyk2008forest} for details).
We will just use the fact that it recognizes~$L$ for a certain function $g$ from~$\Sigma$ to~$V$ and
    associated morphism~$\muV,\muH$ from $\Sigma$-forests to
    $(V,H)$, called the \emph{syntactic morphism}, and satisfying the following:
\begin{itemize}
  \item \emph{Surjectivity:} for any element $v$ of~$V$, there is a
    $\Sigma$-context $c$ such that $\muV(c) = v$;
  \item \emph{Minimality:} for any two $\Sigma$-contexts $c$ and $c'$, if $\muV(c) \neq
    \muV(c')$, then there is a $\Sigma$-context $r$ and a $\Sigma$-forest $s$
    such that exactly one of $r(c(s))$ and $r(c'(s))$ belongs to~$L$.
\end{itemize}

\subparagraph*{Dynamic evaluation problem for forest algebras.}
We will study the analogue of the dynamic membership problem for forest
algebras, which is that of computing the \emph{evaluation}
of an expression. More precisely, for $(V,H)$ a
forest algebra, a \emph{$(V,H)$-forest} is a forest 
where each internal node is labeled by an element of~$V$
and where each leaf is labeled with an element of~$H$ -- but there may be one
leaf, called the \emph{distinguished leaf}, which is labeled with an element of~$V$.
The \emph{evaluation} of a $(V,H)$-forest $F$ is the image of $F$
by the morphism obtained by extending the function 
$g$ which maps elements of~$V$ to themselves and maps elements $f$ of~$H$ to the
context $c_f \coloneq f \oplusHV \square$ (so that $c_f \odotVH \epsilon = f$).
Remark that the forest $F$ evaluates to an element of~$V$ if it has a distinguished leaf,
and to~$H$ otherwise.

The \emph{(non-restricted) dynamic evaluation problem} for $(V,H)$ then
asks us to maintain the evaluation of the $(V,H)$-forest $F$ under
relabeling updates which can change the label of internal nodes of~$F$
(in~$V$) and of leaf nodes of~$F$ (in~$H$ or in~$V$, but ensuring that there
is always at most one distinguished leaf in~$F$).  Again, we assume that the
forest algebra $(V,H)$ is constant, and we measure the complexity 
as a function of the size $|F|$ of the input forest 
(note that updates never change~$|F|$ or the shape of~$F$).
The \emph{restricted dynamic evaluation problem} for $(V,H)$
adds the restriction that the label of leaves always stays in $H$ (initially and
after all updates), so that $F$ always evaluates to an element of~$H$.

We will use the dynamic evaluation problem in the next section for our
$O(\log n / \log \log n)$ upper bound. Indeed, it generalizes the dynamic membership
problem in the following sense:

\begin{lemmarep}
  \label{lem:regular2forest}
  Let $L$ be a fixed regular forest language, and let $(V,H)$ be its syntactic
  forest algebra. 
  Given an algorithm for the restricted dynamic evaluation problem for~$(V,H)$ under
  relabeling updates, we can obtain an algorithm for the dynamic membership
  problem for~$L$ with the same complexity per update.
\end{lemmarep}

\begin{proof}
  As the syntactic forest algebra recognizes $L$,
  let $\mu$ be the morphism from $\Sigma$-forests and $\Sigma$-contexts to
  $(V,H)$, and let $H'$ be the subset of $H$ to which $\Sigma$-forests in~$L$
  are mapped by~$\mu$.
  Let $F$ be the input $\Sigma$-forest. In the preprocessing, we prepare a
  $(V,H)$-forest $F'$ by
  translating annotations: every leaf of~$F$ labeled with $a \in \Sigma$ is
  labeled in~$F'$ with the element $\muH(a)$ of~$H$, and every internal node
  of~$F$ labeled with $a \in \Sigma$ is labeled in~$F'$ with the image by~$\muV$
  of the $\Sigma$-context $a_\square$. Every update on~$F$ is translated to an
  update on~$F'$ in the expected way. Now, from the evaluation of the
  $(V,H)$-forest $F'$ maintained by the dynamic evaluation problem data
  structure on~$F'$, we obtain an element $h \in H$. Looking at the definitions
  of the morphism $\mu$ and of the evaluation of a $(V,H)$-forest, an
  immediate induction shows that $h$ is the image of~$F$ by~$\mu$.
  Thus, testing in constant time whether $h \in H'$
  allows us to deduce whether $F$ is in~$L$.
\end{proof}

\section{Dynamic Membership to Regular Languages in $O(\log n / \log \log n)$}
\label{sec:lll}
Having defined the technical prerequisites, we now
start the presentation of our technical results. In this section, we show our
general upper bound on the dynamic membership problem to arbitrary regular
forest languages:

\begin{theorem}
  \label{thm:upper}
  For any fixed regular forest language $L$, the dynamic membership problem
  to~$L$ is in $O(\log n / \log \log n)$, where $n$ is the number of nodes of
  the input forest.
\end{theorem}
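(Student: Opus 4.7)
The plan is to reduce to the restricted dynamic evaluation problem via \Cref{lem:regular2forest}, then solve the evaluation problem by a hierarchical clustering of the input $(V,H)$-forest $F$, in the spirit of the tree contraction schemes of \cite{miller1985parallel,gazit1988optimal,bodlaender1998parallel}. Throughout, $(V,H)$ is the syntactic forest algebra of~$L$ and is a constant-size object.

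The core idea is to iteratively group nodes of~$F$ into \emph{clusters} of size $\Theta(\log n)$, where each cluster behaves like a context of the forest algebra: it has at most one ``hole'' (an external child pointing into another cluster) and at most one ``parent hole'' (a link up to its parent cluster). The internal structure of a single cluster consists of $O(\log n)$ nodes each carrying an element of $V\sqcup H$; since $|V|+|H|=O(1)$, the whole cluster description, together with the identity of its hole(s), fits in $O(\log n)$ bits, i.e., in $O(1)$ machine words. Consequently, during preprocessing we tabulate a function that, given the bit-encoding of a cluster and of one relabeling inside it, returns the new bit-encoding and the element of $V$ (or $H$) to which the cluster now evaluates as a context. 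This table has size polynomial in~$n$ and can be built in $o(n)$ time; from then on, every update to a single node costs $O(1)$ within its cluster and produces at most one change to the algebra element advertised by the cluster to the level above.

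Next, I would iterate: after contracting each cluster to a single macro-node labeled by its advertised element of $V$ or $H$, the resulting forest has $\Theta(n/\log n)$ nodes and is again a $(V,H)$-forest. I repeat the clustering on this smaller forest, and so on. Since the size shrinks by a factor $\Theta(\log n)$ at each level, after $k$ levels the size is $\Theta(n / (\log n)^k)$, so the recursion bottoms out in $O(\log n / \log \log n)$ levels. Processing an update means walking up this hierarchy: at each level, we spend $O(1)$ by the tabulated cluster update, updating the label of the cluster's representative macro-node at the next level. Totaling over the levels gives the announced $O(\log n/\log\log n)$ per update, and at the top level the maintained value is exactly the evaluation of~$F$, which by \Cref{lem:regular2forest} resolves membership in~$L$.

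The main obstacle, as highlighted in the introduction, is constructing a suitable clustering at each level in linear time. Concretely, at each level I need a partition of the current forest into connected clusters such that: (i) each cluster has $\Theta(\log m)$ nodes, where $m$ is the current size; (ii) each cluster is a context, i.e., touches the rest of the forest through a bounded number of boundary nodes (so its behavior is captured by a single element of~$V$); and (iii) the partition can be computed by a single linear-time bottom-up pass, compatible with the tabulation scheme. I would address this by adapting the rake-and-compress contraction of \cite{miller1985parallel,gazit1988optimal}: a bottom-up traversal greedily merges a node with its only child-cluster whenever the resulting cluster stays below the $\Theta(\log m)$ size threshold and respects a ``at most one hole'' invariant; unranked nodes of large degree are handled by grouping their children clusters horizontally, exploiting the horizontal monoid~$H$ via the same tabulation idea applied to sibling sequences. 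A careful amortization shows that each pass runs in linear time in~$m$, so the total preprocessing cost summed over all levels is $O(n)+O(n/\log n)+\cdots = O(n)$, matching the required linear preprocessing bound.
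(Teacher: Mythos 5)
Your proposal follows essentially the same approach as the paper: reduce to restricted dynamic evaluation via \Cref{lem:regular2forest}, cluster into pieces of size $O(\log n)$ that behave like contexts (one border/hole), tabulate cluster updates during preprocessing, recurse on the forest of clusters, and bound the recursion depth by $O(\log n/\log\log n)$ with geometrically shrinking preprocessing cost. The paper instantiates this using a notion of LCRS-connectivity and a greedy saturation argument (rather than directly invoking rake-and-compress), but the overall scheme is the same.

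One small imprecision to flag: your condition (i) requiring that \emph{each} cluster have $\Theta(\log m)$ nodes is not achievable in general (a cluster may be forced to stay small, e.g.\ next to a node of high degree). What the paper actually guarantees --- and what suffices --- is an upper bound of $k = O(\log m)$ on cluster size together with a proof that any \emph{saturated} clustering has only $O(m/k)$ clusters; the lower bound on cluster size is recovered only on average, by showing that a constant fraction of clusters in a saturated clustering must have size $\geq k/2$. Your narrative reaches the correct conclusion ($\Theta(n/\log n)$ macro-nodes at the next level), so this is a phrasing issue rather than a gap, but it is worth stating the invariant precisely to avoid over-promising.
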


Note that this upper bound matches the lower bounds reviewed at the end of
\cref{sec:prelim}. We present the proof in the rest of this
section. By \cref{lem:regular2forest}, we will instead show
our upper bound on the restricted dynamic evaluation problem for arbitrary fixed forest algebras.

The algorithm that shows \cref{thm:upper} intuitively follows a recursive
scheme. 
For the first step of the scheme, given the input forest $\cF_0$, we compute a
so-called \emph{clustering} of~$\cF_0$. This is a partition of the nodes of~$\cF_0$ into
subsets, called \emph{clusters}, which are connected in a certain sense and will
be chosen to have size $O(\log n)$.
Intuitively, clusters are small enough so
that we maintain their evaluation under updates in $O(1)$ by tabulation;
note that clusters may correspond to contexts (i.e., they may have holes), so we
will perform \emph{non-restricted} dynamic evaluation for them.
Further, $\cF_0$ induces a forest
structure on the clusters, called the \emph{forest of clusters} and denoted
$\cF_1$, for which we will ensure that it has size $O(n / \log n)$.
We note that the clustering algorithm that we propose is very similar to the one constructed in the
work of Gazit et al.~\cite{gazit1988optimal}, which has similar properties
and could be used as an
alternative to the one we propose. However, our clustering is not constructed in
the same way and the result of building it on an input tree is not identical
to the clustering of~\cite{gazit1988optimal}, so we present our own construction to
ensure that our work is self-contained.

Having computed the clustering of~$\cF_0$ and the forest of clusters $\cF_1$, we
re-apply recursively the clustering scheme on $\cF_1$, decomposing it again into
clusters of size $O(\log n)$ and a forest of clusters $\cF_2$ of size $O(n / (\log
n)^2)$. We recurse until we obtain a forest $\cF_\ell$ with only one node, which
is the base case: we will ensure that $\ell$ is in $O(\log n / \log \log n)$.

To handle updates on a node $u$ of~$\cF_0$, we will
apply the update on the cluster $C$ of~$\cF_0$ containing~$u$ (in $O(1)$ by
tabulation), and apply the resulting update on the node $C$ in the forest of
clusters $\cF_1$. We then continue this process recursively, eventually reaching
the singleton $\cF_\ell$ where the update is trivial.
The main technical
challenge is to bound the complexity of the preprocessing: we must show how to
efficiently compute a suitable clustering on an
input forest $\cF$ in time $O(|\cF|)$. It will then be possible to apply the
algorithm to $\cF_0, \cF_1, \ldots, \cF_{\ell-1}$, with a total complexity amounting to
$O(|\cF_0|)$.

The section is structured as follows. First, we formally define the notion of
\emph{clusters} and \emph{clustering} of a forest~$\cF$, and we
define the \emph{forest of clusters} induced by a clustering: note that these
notions only depend on the shape of~$\cF$ and not on the node labels.
Second, we explain
how the evaluation of a $(V,H)$-forest reduces to computing the
\emph{evaluation} of clusters along with the \emph{evaluation} of the forest of
clusters. Third, we explain how we can compute in linear time a clustering of the 
input forest which ensures that the forest of clusters is small enough: we show
that it is sufficient to compute any \emph{saturated} clustering (i.e., no clusters can
be \emph{merged}), and sketch an algorithm that achieves this.
Fourth, we conclude the
proof of \cref{thm:upper} by summarizing how the recursive scheme works,
including how the linear-time preprocessing can tabulate the effect of updates on small
forests.

\subparagraph*{Clusters and clusterings.}
A \emph{clustering} of a forest will be defined by partitioning its vertices
into
\emph{connected} sets, where connectedness is defined using the
sibling and first-child edges.

\begin{definition}
  Let $\cF$ be a forest. We say that two nodes of~$\cF$ are
  \emph{LCRS-adjacent}
  (for \emph{left-child-right-sibling})
  if one node is the first child of the other or
  if they are two consecutive siblings (in particular if they are two
  consecutive roots).
  We say that a set of nodes
  of~$\cF$ is \emph{LCRS-connected} if,
  for any two nodes $u, u'$
  in~$\cF$, there is a sequence $u = u_1, \ldots, u_q = u'$  of
  nodes in~$\cF$ such that $u_i$ and $u_{i+1}$ are LCRS-adjacent
  for each $1 \leq i < q$.
\end{definition}

Note that the edges used in LCRS-adjacency are \emph{not} the
edges of~$\cF$, but those of a left-child-right-sibling representation
of~$\cF$ (hence the name). 
For instance, the set $\{u, u'\} \subseteq \cF$ formed of a node $u$ and its
parent $u'$ is \emph{not} connected unless $u$ is the first child of~$u'$.
To define clusters and clusterings, we will use LCRS-adjacency, together with a
notion of \emph{border nodes} that correspond to the ``holes'' of clusters:

\begin{definition}
  \label{def:kred}
Given a forest $\cF$ with $n$ nodes, we say that a node $u$
  in a subset $S$ of $\cF$ is a \emph{border node} of~$S$ if $u$ has a child
  which is not in $S$.
  For $k > 0$, we then say that an equivalence relation $\equiv$ over
  the nodes of $\cF$ is a \emph{$k$-clustering} when the following
  properties hold on the equivalence classes of~$\equiv$, called
  \emph{clusters}:
  \begin{itemize}
    \item each cluster contains at most $k$ nodes;
    \item each cluster is LCRS-connected;
    \item each cluster contains at most one border node.
  \end{itemize}
  The \emph{roots} of~$S$ are the nodes of~$S$ whose parent is
  not in~$S$ (or which are roots of~$\cF$): 
  if $S$ is a cluster, then by LCRS-connectedness its roots must be consecutive
  siblings in~$\cF$.
\end{definition}

When we have defined a $k$-clustering, it induces a \emph{forest of clusters} in
the following way:

\begin{definition}
  Given a $k$-clustering $\equiv$ of a forest $\cF$, the
  \emph{forest of clusters} $\cF^{\equiv}$ is a forest whose nodes 
  are the clusters of~$\equiv$, and where a cluster $C_1$ is the child of a
  cluster $C_2$ when the roots of $C_1$ are children of the border
  node of $C_2$.

  We order the children of a cluster $C$ in~$\cF^\equiv$ in the
  following way. For each child $C'$ of~$C$, its root nodes are a set of
  consecutive siblings, and these roots are in fact consecutive children of the
  border node~$u$ of~$C$. Thus, given two children $C_1$ and $C_2$ of~$C$ in
  $\cF^\equiv$, we order $C_1$ before $C_2$ if the roots of $C_1$ come
  before the roots of $C_2$ in the order in~$\cF$ on the children of~$u$.
  Likewise, we can order the roots of~$\cF^\equiv$, called the
\emph{root clusters}, according to the order on the roots
of~$\cF$: recall that, by our definition of siblings, the root clusters
  are also siblings in~$\cF^\equiv$.
\end{definition}

Remark that the trivial equivalence relation (putting each
node in its own cluster) is vacuously a $k$-clustering in which
the border nodes are precisely the internal nodes: we call this the
\emph{trivial $k$-clustering}, and its forest of clusters is isomorphic
to~$\cF$.

\subparagraph*{Evaluation of clusters.}
\begin{toappendix}
  \subsection{Evaluation of Clusters}
\end{toappendix}
To solve the dynamic evaluation problem on~$\cF$ using a
clustering~$\equiv$, we will now explain how the evaluation of the
$(V,H)$-forest $\cF$ can reduce to the evaluation of the forest of
clusters $\cF^\equiv$ with a suitable labeling. To define this labeling,
let us first define the \emph{evaluation} of a cluster in~$\cF$:

\begin{definition}
  \label{def:clusteval}
Given a $(V,H)$-forest $\cF$ with no distinguished leaf,
  a $k$-clustering $\equiv$ of $\cF$, and a cluster $C$ of~$\equiv$, we
  define the \emph{evaluation of~$C$} as a value in $V$ or~$H$ in the following
  way. Let $\cF^C$ be the
sub-forest of $\cF$ induced by~$C$, i.e., the sub-forest containing only
  the nodes in~$C$ and the edges connecting two nodes that are both in~$C$: note
  that it is a $(V,H)$-forest where each node has the same label as
  in~$\cF$.
  When $C$ contains a border node $u$, we
also add a leaf $u'$ as the last child of~$u$
  in $\cF^C$ and label $u'$ with $\square \in V$.

  The \emph{evaluation} of the cluster $C$ in~$\cF$ is then
  the evaluation of $\cF^C$ as a $(V,H)$-forest. Note that, as $\cF$ has
  no distinguished leaf, the evaluation is in~$V$ if $C$ has a border
  node and in~$H$ otherwise; in other words it is in $V$ exactly when
  $C$ has a child in $\cF^\equiv$.
\end{definition}

We now see 
$\cF^\equiv$ as a $(V,H)$-forest,
with each cluster labeled by its evaluation in~$\cF$:
Then:

\begin{claimrep}
  \label{clm:correct}
  For any $k$-clustering $\equiv$ of a $(V,H)$-forest $\cF$,
  the evaluation of~$\cF$
  is the same as the evaluation of its forest of clusters $\cF^\equiv$.
\end{claimrep}

\begin{proof}
We will prove this claim inductively on the size of
$\cF^\equiv$. When there is a single cluster in $\cF$,
the definition of the evaluation of a cluster gives us directly the
result.

Let us now consider the case where $\cF^\equiv$ has multiple
roots $c_1 \cdots c_q $, then $\cF$ must also have multiple
roots $r_1 \dots r_\ell$ ($\ell\geq q$). The evaluation of
$\cF^\equiv$ is computed as $h^c_1 \oplus \cdots \oplus h^c_q$
where each $h^c_i$ is the evaluation of the tree rooted in
$c_i$. Similarly, the evaluation of $\cF$ is computed as
$h^r_1 \oplus \cdots \oplus h^r_\ell$ where the $h^r_i$ is the
evaluation of the tree rooted in $r_i$. Using the associativity of
$\oplus$ and the fact that clusters are connected, we can rewrite this
term as $h^g_1 \oplus \cdots \oplus h^g_q$ where $h^g_i$ is the sum
$h^r_{s_i} \oplus \cdots \oplus h^g_{e_i}$ when $c_i$ is the cluster
containing the roots $r_{s_i}$ to $r_{e_i}$. For each
$1\leq i \leq q$
we can apply the inductive property on each subtree of
$\cF$ that contain the roots in $r_{s_i}, \dots, r_{e_i}$ plus
their descendant and the clustering $\equiv$ restricted to this
sub-forest and we obtain the result that the evaluation of this
sub-forest is $h^g_i$.

Finally let us consider the case where $\cF^\equiv$ has a
single root $C_0$ with children $C_1 \dots C_q$.
  By definition of the
$k$-clustering, the roots of $C_1 \dots C_q$ must all be children of
the same node $u$ in $C$, the border node of $C$. This node $u$ has
the children $u_1 \cdots u_\ell$ and they are in clusters $C_0, \dots,
C_q$. For each $1 \leq i \leq \ell$, let us write $D_i$ for the
  contiguous subsequence of the $u_1 \dots u_\ell$
that are in cluster $C_i$. The evaluation of $u$ in
$\cF$ is computed as $v \odot (h^{u_1} \oplus \cdots \oplus
h^{u_\ell} )$ where $v$ is the label of $u$ and $h^{u_i}$ is the
evaluation of the tree rooted in~$u_i$. Using the associativity of $\oplus$ we can
group the $u_i$ that are in the same cluster and rewrite the
evaluation of $u$ into $v \odot (h^D_0 \oplus \cdots \oplus h^D_q) =
(v \odot (h^D_0 \oplus \square )) \odot (h^D_1 \oplus \cdots \oplus
h^D_q)$ where each $h^D_i$ is the sum of the $h^{u_j}$ for $u_j$ in $D_i$: we
use the induction hypothesis to show that $h^D_i$ for $i\geq 1$ is
equal to the evaluation of cluster $C_i$. Now using the fact that the evaluation is a morphism,
we have that the evaluation $h^C_0$ of~$C_0$ and the evaluation
$h^\cF$ of $\cF$ are such that $h^\cF = h^C_0
\odot (h^D_1 \oplus \cdots h^D_q)$ which is also the evaluation of
$\cF^\equiv$.
\end{proof}

This property is what we use in the recursive scheme: to solve the dynamic
evaluation problem on the input $(V,H)$-forest $\cF$, during the
preprocessing we will compute a clustering $\equiv$ of~$\cF$ and compute
the evaluation of clusters and of the forest of clusters $\cF^\equiv$.
Given a relabeling update on~$\cF$, we will apply it to its cluster $C$ and
recompute the evaluation of~$C$; this gives us a relabeling update to apply to
the node $C$ on the forest of clusters $\cF^\equiv$, and we can
recursively apply the scheme to~$\cF^\equiv$ to maintain the evaluation
of~$\cF^\equiv$.
What remains is to explain how we can efficiently compute a clustering
of a forest~$\cF$ that ensures that the forest of clusters $\cF^\equiv$ is ``small
enough''.

\subparagraph*{Efficient computation of a clustering.}
\begin{toappendix}
  \subsection{Saturated Clusterings}
  We will start with some observations on mergeable clusters and saturated
  clusterings, which will be useful for the rest of the proof, and will be
  used later in the presentation of the algorithm:

 \begin{remark}
  \label{rem:mergeable}
  Two distinct clusters are mergeable exactly when:
  \begin{itemize}
    \item they are LCRS-adjacent in the forest of clusters
    \item their total size is less than $k$
    \item the resulting merge contains at most one border node.
  \end{itemize}
\end{remark}

The first point of the above remark is equivalent to requiring that the
resulting cluster is LCRS-connected, which hinges on the following claims:

\begin{claim}
  \label{claim:connectedness}
  For any two distinct clusters $C$ and $C'$, the following are
  equivalent:
  \begin{itemize}
    \item $C \cup C'$ is LCRS-connected in~$\cT$;
    \item there is a node $u$ of~$C$ and a node $u'$ of~$C'$ which are LCRS-adjacent
  in~$\cT$.
  \end{itemize}
\end{claim}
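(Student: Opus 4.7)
The plan is to prove the equivalence by handling the two directions separately; both are essentially unwindings of the definition of LCRS-connectedness, with the less trivial direction requiring that we exploit the LCRS-connectedness already guaranteed for each individual cluster.

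For the forward direction, I would assume that $C \cup C'$ is LCRS-connected and pick arbitrary nodes $u_0 \in C$ and $u_0' \in C'$ (both clusters are nonempty). By LCRS-connectedness there is a sequence $u_0 = x_1, x_2, \ldots, x_q = u_0'$ of nodes in $C \cup C'$ with each consecutive pair LCRS-adjacent. Since $C$ and $C'$ are disjoint (they are distinct equivalence classes), the sequence must transition from $C$ to $C'$ at some index $i$, so that $x_i \in C$ and $x_{i+1} \in C'$ are LCRS-adjacent, producing the witnesses claimed.

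For the backward direction, I would assume that there exist $u \in C$ and $u' \in C'$ which are LCRS-adjacent, and show that any two nodes $v, w \in C \cup C'$ can be connected by an LCRS-path inside $C \cup C'$. If $v, w$ both lie in $C$ (resp.\ both in $C'$), then the LCRS-connectedness of the cluster $C$ (resp.\ $C'$), which is guaranteed by \cref{def:kred}, supplies a suitable path inside $C \subseteq C \cup C'$. If instead $v \in C$ and $w \in C'$, I would concatenate three pieces: a path from $v$ to $u$ inside $C$, the single LCRS-adjacency step from $u$ to $u'$, and a path from $u'$ to $w$ inside $C'$; the result is an LCRS-path inside $C \cup C'$ as required (the symmetric case $v \in C'$, $w \in C$ is identical).

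I do not expect any serious obstacle: the only subtle point is to notice that the forward direction uses the disjointness of clusters (as equivalence classes) to force a transition along the path, and that the backward direction uses the LCRS-connectedness of each individual cluster, which is built into the definition of a $k$-clustering. No case analysis on the specific shape of LCRS-adjacency (first-child vs.\ consecutive-sibling) is needed, since we treat LCRS-adjacency as an abstract symmetric relation throughout.
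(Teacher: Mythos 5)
Your proof is correct and takes essentially the same approach as the paper's: the forward direction locates the first transition index in the connecting path using disjointness of the clusters, and the backward direction concatenates an intra-$C$ path, the witness adjacency, and an intra-$C'$ path, invoking the LCRS-connectedness of each individual cluster. No substantive difference.
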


\begin{proof}
If we suppose that $C\cup C'$ is LCRS-connected, we can take $u\in C$,
$v\in C'$ and find a sequence $z_1, \dots, z_q$ with $z_1=u$, $z_q=v$
in $C'$ such that $z_i \in C\cup C'$ for all $1 \leq i \leq q$ and such that
  $z_i$ and $z_{i+1}$
  LCRS-adjacent for each $1 \leq i < q$.
  Note that $q > 1$ because $C$ and $C'$ are distinct, hence disjoint.
  By taking the first $i\geq 1$ such that $z_{i+1}\in C'$ (which
  is well-defined because the sequence finishes in~$C'$)
we find the node $u=z_i$ of~$C$ and the node $u'=z_{i+1}$ of~$C'$
which are LCRS-adjacent in~$\cT$.

Conversely if we have a node $u$ of~$C$ and a node $u'$ of~$C'$ which
are LCRS-adjacent then to prove the LCRS-connectedness of $C\cup C'$
it suffices to prove that for all $(v,v')\in (C\cup C')^2$ we have a
sequence $v = z_1, \dots, z_q=v'$ such that the $z_i, z_{i+1}$ are
LCRS-adjacent for each $1\leq i < q$. If $v$ and $v'$ both fall in $C$
or $C'$, we have the result by the LCRS-connectedness of $C$ and
$C'$. Without loss of generality let us consider the case $v\in C$,
$v'\in C'$. Here, we can construct the sequence as the sequence from
$v$ to $u$ concatenated with the sequence from $u'$ to $v'$.
\end{proof}

\begin{claim}
  \label{clm:equiv2}
  For any two distinct clusters $C$ and $C'$, the following are equivalent:
  \begin{itemize}
    \item $C \cup C'$ is LCRS-connected in~$\cT$
    \item $C$ and $C'$ are LCRS-adjacent in the forest of clusters
  \end{itemize}
\end{claim}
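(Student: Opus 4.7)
The plan is to leverage \cref{claim:connectedness}, which already reduces LCRS-connectedness of $C\cup C'$ to the existence of a \emph{bridging pair}: nodes $u\in C$ and $u'\in C'$ that are LCRS-adjacent in $\cT$. It therefore suffices to prove that such a bridging pair exists if and only if $C$ and $C'$ are LCRS-adjacent in $\cF^\equiv$. The key structural fact I would isolate first is a \emph{prefix property}. Viewing $\cT$ in its left-child/right-sibling representation yields a tree whose edges are exactly the LCRS-adjacencies of $\cT$, so every LCRS-connected set $D$ must contain the unique LCRS-path between any two of its members. Concretely, if $D$ contains a node $p$ together with some child $c_i$ of $p$ (where $c_1,\dots,c_m$ are the children of $p$ in order), then $D$ must contain the whole prefix $c_1,\dots,c_i$; similarly, if $D$ contains two roots of $\cT$, it contains all roots of $\cT$ in between.

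For the forward direction, I pick a bridging pair $(u,u')$ and split by the type of LCRS-adjacency. If $u'$ is the first child of $u$, then $u$ is a border node of $C$, the node $u'$ is a root of $C'$, and $u'$ is the first child of the border node of $C$; thus $C'$ is the first child of $C$ in $\cF^\equiv$ (the case where $u$ is the first child of $u'$ is symmetric). If $u$ and $u'$ are consecutive siblings, I split further according to the location of their common parent $p$, or treat the case where both are roots of $\cT$. When $p\in C$, the prefix property forces every child of $p$ up to $u$ to lie in $C$, so $u'$ is the \emph{first} child of $p$ outside $C$, and $C'$ is the first child of $C$ in $\cF^\equiv$; the sub-case $p\in C'$ is symmetric. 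When $p$ lies in a third cluster $C''$ (or both $u,u'$ are roots of $\cT$), the node $u$ is the last root of $C$ and $u'$ is the first root of $C'$ among the children of $p$ (respectively, among the roots of $\cT$), and $C,C'$ are consecutive children of $C''$ (respectively, consecutive root clusters) in $\cF^\equiv$.

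Conversely, for each type of LCRS-adjacency in $\cF^\equiv$ I exhibit a bridging pair. If $C'$ is the first child of $C$ with border node $p$ and $c_1,\dots,c_j$ are the children of $p$ lying in $C$ (a prefix, by the structural fact), then letting $c_{j+1}$ be the first root of $C'$, the bridging pair is $(p,c_1)$ when $j=0$ and $(c_j,c_{j+1})$ when $j\geq 1$. If $C$ and $C'$ are consecutive siblings in $\cF^\equiv$, a short counting argument using the prefix property applied to the parent cluster (or to the roots of $\cT$) shows that the last root of $C$ and the first root of $C'$ are adjacent in $\cT$'s sibling order, yielding the bridging pair.

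The main obstacle is the consecutive-siblings sub-case of the forward direction with $p\in C$: one must argue that $C'$ is not merely \emph{some} child cluster of $C$ but specifically its first child in $\cF^\equiv$, and this is exactly where the prefix property is indispensable. The same observation is what rules out ``gaps'' between consecutive sibling clusters in the backward direction and guarantees that the declared bridging pair is truly LCRS-adjacent in $\cT$.
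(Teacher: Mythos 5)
Your proof is correct and follows essentially the same route as the paper's: reduce via \cref{claim:connectedness} to the existence of a bridging pair, then case-split on the type of LCRS-adjacency of that pair (first-child versus consecutive siblings, and the location of the common parent). The one thing you do differently — and it is an improvement — is to isolate the \emph{prefix property} (the LCRS-edges form a tree, so an LCRS-connected set containing a node $p$ and its $i$-th child must contain the first $i$ children, and similarly for contiguous roots of $\cT$) as an explicit auxiliary observation. The paper's proof leaves this implicit: in the forward direction it concludes ``$C'$ is a child of $C$'' without explicitly arguing that it is the \emph{first} child, and in the backward direction it hints at the prefix property with a parenthetical ``e.g., $r$ can be the first child of $u$ or the second child when the first is in $C$'' but does not justify that the in-cluster children of the border node form a prefix. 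Your version fills in both of these points cleanly and also correctly handles the consecutive-sibling-clusters case by showing that no intervening child of the shared border node can lie in the parent cluster or in a third cluster.
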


\begin{proof}
Let us suppose that $C\cup C'$ is LCRS-connected in~$\cT$. By
\cref{claim:connectedness} we have a node $u$ of~$C$ and a node
$u'$ of~$C'$ which are LCRS-adjacent in~$\cT$. Up to
exchanging $(u,C)$ and $(u',C')$, we can suppose that $u'$ is the
child or next-sibling of $u$. If $u'$ is the child of $u$, then $u$ is
the border node and $C'$ is a child of $C$. If $u'$ is the next-sibling
of $u$ then either $C$ contains the shared parent of $u$ and $u'$ and
$C'$ is a child of $C$ or it does not (in particular when it does not exist
because $u$ and $u'$ are roots) and $C'$ is the next-sibling of
$C$. In all cases, $C$ and $C'$ are LCRS-adjacent.

Conversely, let us suppose that $C$ and $C'$ are LCRS-adjacent in the
forest of clusters. Without loss of generality, let us consider that
$C'$ is the child or next-sibling of $C$. When $C'$ is the child of
$C$, it means that the first root $r$ of $C'$ is the leftmost child of
the border node $u$ of $C$ that is not in $C$ (e.g., $r$ can be the first
child of $u$ or the second child when the first is in $c$).  When $C'$
is the next-sibling of $C$ then the first root of $C'$ is the
next-sibling of the last root of $C$. In both case, we have two nodes
in $C$ and $C'$ which are LCRS-adjacent and therefore by
\cref{claim:connectedness} we have that $C\cup C'$ is
LCRS-connected in $\cT$.
\end{proof}

The two previous claims deal with the LCRS-adjacency conditions. For clusters
to be mergeable, there are two other conditions: the total size and
the number of border nodes. Checking the size is easy (it is the sum)
but in the presentation of the algorithm we will need the following immediate claim to take care of the number of
border nodes:

\begin{claim}
  \label{clm:mergeborder}
  Let $C_1$ and $C_2$ be two clusters and let $B_1$ and $B_2$ be their
  respective border nodes (with $|B_1| \leq 1$ and $|B_2| \leq 1$). Then the
  border nodes of~$C_1 \cup C_2$ are precisely $B_1 \cup B_2$, 
  except in the following situation (or its
  symmetric up to exchanging $C_1$ and $C_2$): $C_1$ has a border node $u$ and
  all children of~$u$ are in $C_1 \cup C_2$. Equivalently, all roots of~$C_2$
  are children of~$u$, all preceding siblings of these roots are in~$C_1$, and
  there are no following siblings of these roots.
\end{claim}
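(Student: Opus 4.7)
The plan is to characterize, for each node $v \in C_1 \cup C_2$, whether $v$ has a child outside $C_1 \cup C_2$. For $v \in C_1 \setminus B_1$, all children of $v$ lie in $C_1$ (as $v$ is not a border node of $C_1$), hence in $C_1 \cup C_2$, so $v$ is not a border node of the union; the symmetric statement holds for $v \in C_2 \setminus B_2$. This shows that the border nodes of $C_1 \cup C_2$ are a subset of $B_1 \cup B_2$.

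Next, I would observe that a node $v \in B_1$ remains a border node of $C_1 \cup C_2$ iff at least one of its children outside $C_1$ is also outside $C_2$, i.e., iff not all children of $v$ lie in $C_1 \cup C_2$. The symmetric statement holds for $v \in B_2$. Hence $B_1 \cup B_2$ is exactly the set of border nodes of $C_1 \cup C_2$ unless some node $u \in B_1 \cup B_2$ has all its children inside $C_1 \cup C_2$, in which case $u$ is "demoted" and no longer a border node. Since $|B_1| \leq 1$ and $|B_2| \leq 1$, this demotion affects a single node and corresponds to the stated exception (up to exchanging $C_1$ and $C_2$).

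For the equivalence of the two formulations of the exception, assume $u \in B_1$ has all its children in $C_1 \cup C_2$. The key structural fact I would establish first is that the children of $u$ lying in $C_1$ form a (possibly empty) prefix of $u$'s children: any LCRS-path inside $C_1$ from $u$ to a descendant must leave the subtree of $u$ through $u$ itself, and within the subtree of $u$, such a path is forced to begin $u \to c_1 \to c_2 \to \cdots$ where the $c_i$ are $u$'s successive children, since these first-child and next-sibling edges are the only LCRS-entries into each child's subtree. Hence the children of $u$ not in $C_1$ form a suffix; by hypothesis they lie in $C_2$, and since their parent $u$ is in $C_1 \setminus C_2$ they must be roots of $C_2$. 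Because all roots of $C_2$ are consecutive siblings and siblings of the forest's roots are not siblings of $u$'s children, once any root of $C_2$ is a child of $u$ they all must be. This yields the three stated conditions; the converse direction is immediate by inspecting what the three conditions say about the children of $u$.

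The main obstacle, and the only non-bookkeeping step, is establishing that the children of $u$ in $C_1$ form a prefix of $u$'s children in the forest; this relies on a careful LCRS-connectivity argument showing that the subtree rooted at $u$ is only reachable from outside via $u$, and that within this subtree the children of $u$ are linked only by the chain $c_1 \to c_2 \to \cdots$. Everything else is a direct case analysis on $B_1 \cup B_2$.
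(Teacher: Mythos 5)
Your proof is correct. Note that the paper states this claim without a proof (it labels it an ``immediate claim''), so there is no paper argument to compare against; your write-up fleshes out exactly what a careful reader would need to check.

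A few observations on the details. Your first two paragraphs (showing that border nodes of the union are contained in $B_1 \cup B_2$, and that a node of $B_1 \cup B_2$ remains a border node of the union iff it has a child outside $C_1 \cup C_2$) are straightforward unpacking of the definition and are fine. Your ``prefix'' structural fact is indeed the one non-routine ingredient, and the argument via the LCRS representation is the right one: since the LCRS graph is a tree, the unique LCRS-path from a child $c_i$ of $u$ back to $u$ goes through $c_{i-1}, \ldots, c_1$, so LCRS-connectedness of $C_1$ forces those to lie in $C_1$ as soon as $c_i$ does. The remaining deductions (the non-$C_1$ children form a non-empty suffix because $u \in B_1$; they lie in $C_2$ by hypothesis; they are roots of $C_2$ because $u \notin C_2$; they constitute \emph{all} roots of $C_2$ because roots of a cluster are consecutive siblings and one of them is a child of $u$) are all correct, and the converse is indeed a direct unwinding. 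One tiny wording nit: the sentence ``this demotion affects a single node'' is stated without justification — a priori both $B_1$ and $B_2$ could have their border node demoted. This happens to be impossible (it would force $u_2 \in C_2$ to be a strict descendant of $u_1 \in C_1$ while also having a child that is a root of $C_1$, contradicting that roots of $C_1$ are siblings), but you do not actually need it: the claim's ``except in the following situation (or its symmetric)'' phrasing is agnostic to whether one or both border nodes would be demoted, and your iff-characterization already establishes the claim either way.
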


  We can now re-state and prove \cref{prp:saturated}:
\end{toappendix}

Here is what we want to establish:

\begin{proposition}
  \label{prop:denseClustering}
  There is a fixed constant $c \in \NN$ such that the following is true:
  given a forest $\cF$, we can
  compute a $k$-clustering $\equiv$ of~$\cF$ in linear time 
  such that
  $|\cF^\equiv| \leq  \lceil |\cF| \times c/k \rceil $.
\end{proposition}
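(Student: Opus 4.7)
The plan is to decouple the proposition into two parts: first, establish that any \emph{saturated} $k$-clustering $\equiv$---one where no two distinct clusters can be merged while still yielding a valid $k$-clustering---satisfies $|\cF^\equiv| = O(|\cF|/k)$; second, exhibit a linear-time algorithm producing such a saturated clustering on an arbitrary input forest.

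For the first (combinatorial) step, the plan is to call a cluster \emph{heavy} if its size is at least $k/2$ and \emph{light} otherwise. There are at most $2|\cF|/k$ heavy clusters, so the real work is to bound the number of light clusters. The key observation is that when two LCRS-adjacent clusters are both light, their combined size is at most $k$, so if saturation still prevents merging, it must be because merging would violate the at-most-one-border-node condition. This border-blocking is structurally restrictive: for two light sibling clusters in $\cF^\equiv$, it forces both to contain a border node (hence to be internal in $\cF^\equiv$); for a light first-child paired with a light parent, it forces the child to be internal and the parent to have at least two children in $\cF^\equiv$; and, crucially, whenever a cluster has a single child in $\cF^\equiv$, the border condition cannot block merging at all, so the size must block and the child is forced to be heavy. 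I would then partition the light clusters according to their role in $\cF^\equiv$---leaves, chain nodes (unique child), and branching nodes (at least two children)---and run a charging argument: each light leaf charges its LCRS-parent, which must be heavy by the observation, giving at most $2 m_{\text{heavy}} + 1$ light leaves; each light chain node charges its unique heavy child, giving at most $m_{\text{heavy}}$ light chain nodes; and the light branching nodes are bounded by $|\cF^\equiv|/2$ via the elementary degree-sum inequality in the forest $\cF^\equiv$. Summing all four contributions and solving the resulting recurrence of the form $|\cF^\equiv| \leq \alpha |\cF|/k + |\cF^\equiv|/2 + O(1)$ yields $|\cF^\equiv| \leq c |\cF|/k$ for a fixed constant $c$.

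For the second (algorithmic) step, I would run a single post-order traversal of $\cF$: at each node $u$, initialize its cluster as $\{u\}$ and greedily absorb the root clusters of the subtrees produced for the children of $u$ whenever the size and border-node constraints permit, leaving unabsorbed clusters as children of $u$'s cluster in the forest of clusters. A short sweep over the resulting adjacencies then handles any remaining mergeable pairs to guarantee saturation. Using doubly-linked-list cluster representations, each merge costs amortized $O(1)$, and since the total number of merges is bounded by the initial number of clusters $|\cF|$, the overall running time is $O(|\cF|)$. The main obstacle is the light-cluster counting argument: border-blocking can legitimately prevent the merge of two small adjacent clusters, so a naive average-size bound does not suffice, and the categorization into leaves, chain nodes, and branching nodes is essential because each category exploits a different consequence of saturation---heaviness of a neighbor, heaviness of a forced child, and the degree bound on branching nodes respectively.
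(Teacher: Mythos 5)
Your high-level strategy matches the paper's: reduce to showing that any \emph{saturated} $k$-clustering is small, then give a linear-time procedure that produces one. The first, combinatorial part of your argument is essentially sound but takes a different route. The paper avoids the heavy/light dichotomy entirely: it bounds $N_1$ (clusters with one child) and $N_0$ (leaves of $\cF^\equiv$) directly by summing $|C|+|C'|$ over the relevant blocked merges (each such sum exceeds $k$, and each node is counted at most twice), then uses $N \le 2(N_0+N_1)$. Your version introduces heavy/light clusters, charges light leaves and light chain nodes to heavy neighbors, bounds light branching nodes by $|\cF^\equiv|/2$, and solves a recurrence; this is more machinery for the same bound, but the key structural observations (the merge of a chain cluster with its unique child cannot be blocked by the border condition; a leaf cluster contributes no border node) are exactly the right ones, and the charging works.

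The gap is in the algorithmic step, and it is real. Your procedure --- a bottom-up absorption pass followed by a ``short sweep over the resulting adjacencies'' to merge remaining mergeable pairs --- does not yield a saturated clustering after one sweep, because \textbf{sibling merges can enable parent-child merges that were previously blocked by the border-node condition, not by size}. Concretely, take $k=3$ and the tree where $u$ has children $c_1,c_2$, where $c_1$ has children $d_1,d_2$, and $d_1$ has children $e_1,e_2$. Your bottom-up pass produces $\{u\},\{c_1\},\{d_1,e_1,e_2\},\{d_2\},\{c_2\}$ (the attempt to absorb $\{c_1\}$ into $\{u\}$ fails because both $u$ and $c_1$ would be border nodes). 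A sibling sweep then merges $\{c_1\}$ and $\{c_2\}$ into $\{c_1,c_2\}$. But now $\{u\}$ and $\{c_1,c_2\}$ \emph{are} mergeable (size $3$, and $u$ is no longer a border node after the merge since all its children are absorbed), yet the sweep has already passed $u$. So the result is not saturated. Iterating sweeps until fixpoint would fix correctness, but you would then need a nontrivial argument that the total running time stays linear (the merge count is bounded by $n$, but the \emph{search} for mergeable pairs across multiple sweeps is not obviously $O(n)$), and your proposal does not provide one. The paper sidesteps all of this with a single depth-first traversal of the LCRS tree: at each node $u$ it first recurses on the first child and then on the next sibling, and only afterwards tries to merge $u$'s cluster with the first child's cluster and then with the next sibling's cluster. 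This ordering guarantees that when a merge is attempted, the relevant subforest below is already fully saturated, which is what a single pass needs; the correctness argument (the paper's invariant about $\cT_u$) is genuinely subtle and is the part your proposal is missing.
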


Our algorithm starts with the trivial $k$-clustering and iteratively merges
clusters. 
Formally, \emph{merging} two
clusters $C$ and $C'$ means setting them to be equivalent
under~$\equiv$, and we call $C$ and $C'$
\emph{mergeable} when doing so leads to a $k$-clustering. Of course, we will
only merge mergeable clusters in the algorithm.

We say that the $k$-clustering is \emph{saturated} if it does not
contain two mergeable clusters.  Our definition of
clusters is designed to ensure that, as soon as the clustering is
saturated, there are enough large clusters so that the number of
clusters (i.e., the size of the forest of clusters) satisfies the
bound of \cref{prop:denseClustering}, no matter how clusters were
merged. Namely:

\begin{claimrep}
  \label{prp:saturated}
  There is a fixed constant $c \in \NN$ such that the following is true:
  given a
$(V,H)$-forest $\cF$, any saturated
$k$-clustering on $\cF$ has at most $\lceil c \times (n/k) \rceil$
clusters.
\end{claimrep}

\begin{proofsketch}
  We focus on clusters with zero or one child in the forest of
  clusters.  We consider potential merges of these clusters with their
  only child, with their preceding sibling, or with their parent (if
  they are the first child).  This allows us to show that, provided
  that there are multiple clusters, a constant fraction of them have
  to contain at least $k/2$ nodes, so the number of clusters is
$O(n/k)$.
\end{proofsketch}

\begin{proof}
Our proof will proceed by a case analysis according to the number
of children of each cluster in the forest of clusters. It is easy to see that,
  in any tree with $k$ leaves, there are at most $k-1$ nodes with at least two
  children. For this reason, let us call $N_0$ the number of clusters
  with no child, $N_1$ the number of clusters with a single child, 
  $N_{\geq 2}$ the number of clusters with 2 or more children, and $N$
  the total number of clusters. We have $N = N_0 + N_1 + N_{\geq 2} \leq 2 N_0 +
  N_1 \leq 2 (N_0 + N_1)$. So, to show the claimed upper bound on~$N$, we will
  show an upper bound on $N_0+N_1$ in the sequel.

For a cluster $C$ with a single child $C'$, since the $k$-clustering is
saturated, we know that $C$ cannot be merged with~$C'$. 
  By \cref{rem:mergeable}, as $C'$ is the only child of~$C$, if they are not
  mergeable, it must be because $|C \cup C'| > k$. 
  Let us show that, for this reason, we have $N_1 \leq 2n/k$.
  Let us sum, over all clusters with a single child, the cardinality
  of the cluster unioned with their single child. Each term of the sum is greater
  than~$k$, and there are $N_1$ terms in the sum, so the sum is 
  $\geq N_1 k$. But, in the sum, each node of~$\cF$ is summed at most twice:
  because clusters form a partition of~$\cF$, a node can only be
  summed in the cluster that contains it, which occurs in at most two terms.
  So the sum is $\leq 2n$. This implies that $N_1 k \leq 2n$, thus
  $N_1 \leq 2n/k$.

For a cluster $C$ with 0 children, there are three mutually exclusive cases:

  \begin{itemize}
  \item $C$ is the first child of a cluster;
  \item $C$ has a preceding sibling;
  \item Neither of these cases apply.
  \end{itemize}

  We exclude the third case, as it concerns only a single cluster at most,
  namely, the first root of the forest of clusters (and only if it has 0
  children). So, let us bound
  the number of clusters of each of the two first cases: write $S_{01}$ and
  $S_{02}$ for the corresponding sets of clusters, with
  $N_0 \leq |S_{01}| + |S_{02}| + 1$.

  If a cluster $C$ is the first child of $C'$, they are LCRS-adjacent and
  $C$ has no border node, so by \cref{rem:mergeable} if they cannot be merged
  then $|C \cup C'| > k$. Like in the proof for~$N_1$, let us sum, for $C \in
  S_{01}$, the cardinalities of the union of $C$ with its parent: we conclude that $|S_{01}| \leq 2n/k$.

  If a cluster $C$ is the next sibling of~$C'$ in the forest of clusters, then
  $C$ and $C'$ are LCRS-adjacent in the forest of clusters, and $C$ has no
  border node. So, again by \cref{rem:mergeable}, we have $|C \cup C'| > k$,
  and summing for $C \in S_{02}$ the cardinalities of the union of~$C$ with
  its preceding sibling, we conclude $|S_{02}| \leq 2n/k$.

  Overall, we have shown that the number of clusters is $O(n/k)$, concluding
  the proof.
\end{proof}

\begin{toappendix}
\subsection{Computing a Saturated $k$-Clustering}
\label{sec:compSaturatedClustering}
We can now give the detailed algorithm to show the following claim, along with
  its correctness proof:
\end{toappendix}

Thus,
to show \cref{prop:denseClustering}, it suffices to compute a saturated
\mbox{$k$-clustering}. Namely:

\begin{claimrep}
  \label{clm:algo}
Given a forest, we can compute a saturated $k$-clustering
along with its forest of clusters in linear time.
\end{claimrep}

\begin{proofsketch}
We start with the trivial $k$-clustering, and then we saturate the clustering by
exploring nodes following a standard depth-first traversal: from a node $u$, we proceed recursively on
  the first child $u'$ of~$u$ (if it exists) and then on the next sibling $u''$ of~$u$ (if it
  exists). Then, we first try to
  merge the cluster of $u$ with the cluster of the first child $u'$, and then
  try to merge the cluster of $u$ with the cluster of the next
  sibling $u''$. By ``try to merge'', we mean
that we merge the two clusters if they are mergeable.

This algorithm clearly runs in linear time as long as the mergeability tests and the
actual merges are performed in constant time, which we can ensure with the right
data structures. The algorithm clearly returns a clustering, and the complicated
  part is to show that it is saturated. For this, assume by contradiction that
  the result contains two mergeable clusters with LCRS-adjacent nodes $u_1$ and
  $u_2$: without loss of generality $u_2$ is the first child or next sibling
  of~$u_1$. When processing $u_1$, the algorithm tried to merge the cluster
  that $u_1$ was in with that of~$u_2$, 
 and we know that this attempt failed
  because we see that $u_1$ and
  $u_2$ are in different clusters at the end of the algorithm. This yields a
  contradiction with the fact that $u_1$ and $u_2$ end up in mergeable clusters
  at the end of the algorithm,
  given the order in which the algorithm considers possible merges. 
\end{proofsketch}

\begin{toappendix}
  \subparagraph*{Prefix order \& representative of a cluster.}
To compute the $k$-clustering we will rely on the \emph{prefix order}, which
is the order in which nodes are visited in a forest by a depth-first
traversal: when called on a node, the traversal processes the node,
then is called recursively on each child in the order over
children. Each cluster $C$ will have a \emph{representative}, which is
the node of $C$ that appears first in the prefix order.

\subparagraph*{Algorithm.}
The algorithm starts with each node in its own cluster and then uses a
standard depth-first traversal to saturate the clustering. To saturate
the clusters starting from a node $u$, we proceed recursively on its
first child and then on its next sibling. Then, we first try to merge
the cluster of $u$ with the cluster of its first child and then with
the cluster of its next sibling. By ``try to merge'', we mean that we
merge the two clusters if they are mergeable. By construction, the
resulting set of clusters is a $k$-clustering together with its forest
of clusters. We will now explain why the algorithm is correct, and
which data structures can be used to implement it with the claimed
linear time complexity.

\subparagraph*{Correctness of the algorithm.}

The clustering manipulated by our algorithm will always be a
$k$-clustering. For the correctness, we just need to verify that the
$k$-clustering obtained at the end of the algorithm is indeed a
saturated one.

Given a node $u$ and a forest~$\cT$, we denote $\cT_u$ the forest $\cT$
where we only kept nodes that are descendants of $u$ or descendants of
siblings of $u$ appearing after $u$; in other words $\cT_u$ is
the set of nodes reachable from $u$ repeatedly following a next
sibling or a first child relation.
Pay attention to the fact that $\cT_u$ is \emph{not} the subtree rooted
at~$u$ in~$\cT$, because it only contains the next siblings of~$u$.
Given a clustering on
$\cT$, we will say that it is \emph{saturated for
$\cT_u$} when we cannot merge clusters in~$\cT_u$: formally,
any two distinct clusters that each contain at least one node in~$\cT_u$
cannot be merged.

We will prove the following property inductively on the execution of the
algorithm:
\begin{claim}
  \label{clm:invar}
  When we run the algorithm on a forest $F$ and we are finished processing a
  node $u$, then the clustering
is saturated for $\cT_u$, the clusters of nodes in
$\cT_u$ can only contain nodes in $\cT_u$, and the clustering
  obtained at this point relative to the clustering obtained before we started
  processing $u$ is obtained by merging some clusters which are included in
  $\cT_u$.
\end{claim}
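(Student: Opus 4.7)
I would prove the claim by structural induction on $u$, equivalently on the size of $\cT_u$. The base case, where $u$ is a leaf with no next sibling, is trivial: processing does nothing, the cluster of $u$ remains $\{u\}$, and saturation for the singleton $\cT_u$ is vacuous. For the inductive step, let $u'$ denote the first child of $u$ and $u''$ its next sibling (either possibly absent). The combinatorial fact I will use throughout is the disjoint decomposition $\cT_u = \{u\} \sqcup \cT_{u'} \sqcup \cT_{u''}$, together with the observation that $\cT_{u'}$ and $\cT_{u''}$ are each closed under the first-child and next-sibling relations, and hence (since the $i$-th child of any node is obtained from its first child by iterated next-sibling moves) under the full ``child'' relation. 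Applying the induction hypothesis to the two recursive calls, just before the two merge attempts of steps~3 and~4, the clustering is already saturated for $\cT_{u'}$ and for $\cT_{u''}$, clusters of nodes in each subforest are entirely contained in it, and all merges performed so far during the processing of~$u$ lie in $\cT_{u'} \cup \cT_{u''}$. The containment and ``merges confined to $\cT_u$'' parts of the invariant then follow immediately: the only further merges in steps~3 and~4 act on the clusters of $u$, $u'$, and $u''$, all subsets of $\cT_u$.

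The main task is saturation for~$\cT_u$, which I would argue by contradiction: assume distinct mergeable clusters $C_1, C_2$ at the end of processing~$u$, each containing some node of~$\cT_u$. By containment, $C_1, C_2 \subseteq \cT_u$, and by \cref{rem:mergeable} combined with \cref{claim:connectedness}, there exist LCRS-adjacent witnesses $v_1 \in C_1$ and $v_2 \in C_2$, which I may assume satisfy that $v_2$ is the first child or next sibling of~$v_1$. I would then case-split on the location of~$v_1$. If $v_1 \in \cT_{u'}$ (symmetrically $v_1 \in \cT_{u''}$), closure of the subforest under first-child and next-sibling places $v_2$ in the same subforest, so the inductive saturation yields clusters $X_1, X_2 \subseteq \cT_{u'}$ of~$v_1$ and~$v_2$ at the end of the recursive call that were not mergeable; any subsequent growth of $X_1$ or $X_2$ arises only by absorption into the cluster of~$u$ in step~3 or~4, adding only nodes from $\{u\} \cup \cT_{u''}$. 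If instead $v_1 = u$, then $v_2 \in \{u', u''\}$ and the algorithm explicitly attempted this merge in step~3 or~4; the attempt must have failed (else $C_1 = C_2$), and any later growth of $C_1$ or $C_2$ adds only nodes from $\{u\} \cup \cT_{u''}$ or $\{u\} \cup \cT_{u'}$ respectively.

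The principal technical obstacle will be showing that the obstruction to mergeability at the ``earlier'' moment (total size exceeding~$k$, or at least two border nodes in the union, with LCRS-disconnection ruled out by the adjacency of~$v_1, v_2$) is preserved under the subsequent growth of $C_1, C_2$. Size is monotone and the disjointness of $C_1, C_2$ is maintained, so the size obstruction is immediate. For the border-count obstruction, the key insight I would invoke is that every node~$w$ in the pre-growth union has all its children in a single subforest $\cT_{u'}$ or $\cT_{u''}$ (by the closure observation, noting also that $u$'s own children all lie in $\cT_{u'}$), while every node subsequently added to $C_1$ or $C_2$ lies outside that subforest. Hence no added node can cover a child of~$w$ that was previously outside the pre-growth union, so every border witness is preserved in $C_1 \cup C_2$, maintaining the $\geq 2$-border obstruction and yielding the required contradiction.
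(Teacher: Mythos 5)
Your proposal is correct and follows essentially the same structure as the paper's proof: structural induction on $\cT_u$, reduction to an LCRS-adjacent pair $v_1, v_2$ via \cref{rem:mergeable} and \cref{clm:equiv2}, identification of an earlier moment (just after a recursive call, or at the failed merge attempt) where the relevant pair of clusters was already not mergeable, and then preservation of the size or border-count obstruction under the later growth. The one noteworthy difference is that your border-preservation step --- observing that every pre-growth node has all its children inside one of the subforests $\cT_{u'}$ or $\cT_{u''}$ while all later-absorbed nodes lie outside that subforest, so no border witness can disappear --- handles the relevant subcases with a single argument, whereas the paper treats its ``$C_2$ contains $\fc$'' and ``$C_2$ contains neither $\fc$ nor $\ns$'' subcases with separate bookkeeping (the latter via the identity $(C_1 \cup C_2) \cap \cT_e = C_e \cup C_2$); both come to the same thing.
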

  Clearly, once we have inductively shown \cref{clm:invar}, we will have
  established that our algorithm produces a saturated clustering when it
  terminates, as
the forest $\cT$ is equal to $\cT_r$ for $r$ the first
root of $\cT$, and \cref{clm:invar} implies that $\cT_r$ is saturated.

When the algorithm processes a node $u$, it first processes recursively the
first child $\fc$ of~$u$ (if it exists) and then processes the next sibling
$\ns$
of $u$
(if it exists). By the induction
property (\cref{clm:invar}), we know that the clustering is saturated for
$\cT_{\fc}$ after processing $\fc$ and that it is saturated for
$\cT_{\ns}$ after processing $\ns$. Note that the recursive
processing on ${\ns}$ does not break the property on $\cT_{\fc}$
as processing $\ns$ only merges clusters included in~$\cT_{\ns}$
and the set of nodes in $\cT_{\ns}$ is
disjoint from the set of nodes in $\cT_{\fc}$. After
recursively processing $\fc$ and $\ns$, we thus know that the clustering
is saturated for both $\cT_{\ns}$ and
$\cT_{\fc}$. Furthermore, once we have recursively processed
$\fc$ and $\ns$, the only nodes of $\cT_u$ that can change
clusters while processing $u$ are nodes that end up in the same
cluster as $u$ (we are only merging clusters into the cluster of $u$).

By \cref{rem:mergeable}, to prove that a $k$-clustering is saturated,
it suffices to consider pairs of distinct clusters $C_1$ and $C_2$
that are LCRS-adjacent and show that they are not mergeable. Let us
consider $u$ such that $\cT_u$ contains two distinct
LCRS-adjacent clusters $C_1$ and $C_2$ after processing $u$.

Since $C_1$ and $C_2$ are LCRS-adjacent, when neither $C_1$ or $C_2$
contains $u$, they must both contain only nodes in $\cT_{\fc}$ or only
nodes in $\cT_{\ns}$ and the inductive property (\cref{clm:invar}) tells us that they are
not mergeable.

Let us now consider the case where $C_1$ contains $u$ (the case where
$C_2$ contains $u$ is symmetric) and there are three subcases. The first case is when
$C_2$ contains $\ns$, but this case is immediate as we have tried this exact merge and they were not
mergeable. The second case is when $C_2$ contains $\fc$, so that $C_1$ does not
contain any children of~$u$ in~$\cT$ except $u$ itself. For this second
case,
when the algorithm
started to process $u$ after having recursively processed $\fc$ and $\ns$, we had
that $u$ was in a singleton cluster which was not mergeable with $C_2$.
Let us show why this implies that $C_1$ is not mergeable with $C_2$.
Indeed, either $|C_2|=k$ and we conclude immediately that $C_1$ and $C_2$ are
not mergeable; or $C_2 \cup \{u\}$ contains two border nodes, namely, a border node
in~$C_2$ (because $C_2$ is a cluster so contains at most one border node) and a
border node in~$\{u\}$, namely, $u$. For this reason, $C_2$ does not contain all
children of~$u$. Thus, in $C_1 \cup C_2$, we have the same border node as
in~$C_2$, and $u$ is also a border node because $u$ has children which are not
in~$C_2$ and they are also not in~$C_1$. Thus, $C_1$ and $C_2$ are not mergeable.

The third case is when $C_1$ contains $u$ and $C_2$ contains
neither $\ns$ or $\fc$. Since $C_1$ contains $u$, it can be written as
$\{u\} \cup C_{\fc} \cup C_{\ns}$ where $C_{\fc}$ is either the cluster
in which $\fc$ ended up after processing it or the empty set (when the
merge with the cluster of $u$ failed) and similarly $C_{\ns}$ is either
the empty set or the cluster of $\ns$ after processing it.  Since $C_1$
and $C_2$ are LCRS-adjacent, let $u_1\in C_1$ and $u_2\in C_2$ be nodes
that are LCRS-adjacent. As $C_2$ contains neither $\fc$ nor $\ns$, we
cannot have $u=u_1$. Thus, we either have $u_1\in C_{\fc}$ or
$u_1\in C_{\ns}$. Let $e\in\{\fc,\ns\}$ be such that $u_1\in C_e$.
Note that $\cT_u = \{u\} \cup \cT_{\fc} \cup \cT_{\ns}$,
and since $C_2$ is a cluster of $\cT_u$ (in particular LCRS-connected)
and $C_2$ does not contain $u$ we have $C_2 \subseteq  \cT_{\ns}$ or
$C_2 \subseteq  \cT_{\fc}$, but $C_2$ contains $u_2$ which is adjacent to
$u_1 \in T_e$, so  we have 
$C_2 \subseteq \cT_e$. Further, we have $C_e = \cT_e \cap C_1$. 

Let us now show that $C_1$ is not mergeable with $C_2$. By induction hypothesis,
when the algorithm processed $e$, we knew that 
the clustering is saturated for
$\cT_{e}$, so
$C_e$ was not mergeable with
$C_2$. Now, either $|C_e\cup C_2| > k$ so that $|C_1 \cup C_2| > k$ (because
$C_e \subseteq C_1$) and we
immediately conclude that $C_1$ and $C_2$ are not mergeable; or $C_e \cup
C_2$ contains two border nodes. Let us now show in this latter case that every border node of $C_e
\cup C_2$ is a border node of~$C_1 \cup C_2$, which implies that $C_1$ and $C_2$
are indeed not mergeable.

Let us thus consider a border node $b$ of $C_2\cup C_e$. By definition,
$b$ has a child $c\not \in C_2 \cup C_e$. The node $c$ has to be in
$\cT_{e}$ and therefore it cannot be in $C_1\cup C_2$ if it is not in
$C_2\cup C_e$ (recall that $(C_1\cup C_2)\cap \cT_e = C_e\cup C_2$)
which proves that $b$ is also a border node of $C_1\cup C_2$.

\subparagraph*{A data structure to represent clusters.}
Clusters are LCRS-connected, we thus only need to remember for each
node whether it is in the same cluster as its first child or in the
same cluster as its next sibling. For this, we use two Boolean arrays
\texttt{mergedWithNextSibling} and \texttt{mergedWithFirstChild}. At the end
of the algorithm, we will read these arrays and build (in linear time,
with a single depth-first-search) the forest of clusters as well as,
for each node, an indication of the cluster to which this node belongs.

On top of this, we will also need extra data to check quickly whether
two clusters can be merged. For this, we use three arrays
\texttt{size}, \texttt{hasBorderNode} and \texttt{missingSibling}. For each
cluster $C$, letting $u$ be its representative node (recall that this is the
first node of~$C$ in the prefix order), then
\texttt{hasBorderNode[u]} will tell us whether $C$ contains a border
node \texttt{size[u]} will the size of $C$ (as an integer) and
\texttt{missingSibling[u]} tells us whether one of the roots of the
cluster has a next sibling which is not in the cluster.
Intuitively, the purpose of \texttt{missingSibling[u]} is to quickly
detect the situation described at the end of \cref{clm:mergeborder} where
merging the cluster of~$u$ with its parent makes the border node of the parent disappear.
The values of
\texttt{size[u]}, \texttt{missingSibling[u]} and
\texttt{hasBorderNode[u]} can be arbitrary for nodes \texttt{u} that are not the
representative of their cluster.

At the beginning of the algorithm, we initialize to false the arrays
\texttt{mergedWithNextSibling} and \texttt{mergedWithFirstChild}, and
the arrays \texttt{size}, \texttt{hasBorderNode}, and
\texttt{missingSibling} are initialized following the trivial
clustering: all sizes are 1, the nodes with border nodes are precisely
the internal nodes, and the nodes with missing siblings are precisely
the nodes having a next sibling (including roots which are not the
last root).

\subparagraph*{Merging clusters in $\bm{O(1)}$.}
Recall that our algorithm does not perform arbitrary merges between clusters,
and only tries to merge clusters $C_1$ and $C_2$ represented by $u_1$
and $u_2$ in one of two following cases.
The first case is when $u_2$ is the first child of $u_1$, in which case $C_1$
is just the single node $u_1 = u$ that the algorithm is processing.
The second case is when $C_1$ and $C_2$ are connected by a next-sibling
relation, in which case $C_1$ has $u_1 = u$ as its representative node.
In that case, note an easy consequence of the inductive property shown in the
correctness proof (\cref{clm:invar}): the cluster $C_1$ consists only of
descendants of~$u_1$. Indeed, we have just previously tried to merge $u$ with
its first child $u'$ if it exists: either the merge failed and then $C_1 = \{u\}$, or
it succeeded but then \cref{clm:invar} applied to~$u'$ ensures that the cluster
of $u'$
contained only nodes of $\cT_{u'}$ when we were done processing~$u'$, and the last point of
\cref{clm:invar} applied to nodes processed between $u'$ and $u$ ensures that
this property still holds when we start processing $u$.
So $C_1$ consists of $u_1 = u$
together with descendants and next siblings of~$u'$, i.e., descendants of~$u_1$.
This ensures that, in this second case, the representative node $u_2$ of~$C_2$
is the next sibling of $u_1$. 
Now, in both cases, it is easy to check whether the clusters
are mergeable:
\begin{itemize}
  \item Case 1: when $u_1$ has $u_2$ as first child, the clusters are
    mergeable when $C_2$ has size strictly less than $k$ and either
    $C_2$ has no border node or it contains all the children of $u_1$
    (i.e., when \texttt{missingSibling[u${}_2$]} is false, so the case at the end
    of \cref{clm:mergeborder} applies).
    To update \texttt{hasBorderNode[u]} and \texttt{missingSibling[u]}, note
    that the resulting cluster for $u_1 = u$ has a
    border node when either $C_2$ had a border node or when $C_2$ had no border
    node but had a missing sibling; and the resulting cluster has a missing
    sibling when $C_1$ had.
  \item Case 2: when $u_1$ has $u_2$ as next sibling, the clusters are
    mergeable if at most one of $C_1$ and $C_2$ has a border node 
    (this uses \cref{clm:mergeborder}) and
    if the total size is less than $k$.
    To update \texttt{hasBorderNode[u]} and \texttt{missingSibling[u]}, note
    that the resulting cluster for $u = u_1$ has a
    border node when one of $C_1$ and $C_2$ had, and it has a missing
    sibling when $C_2$ had.
\end{itemize}

Performing the actual merge is also easy: we simply update
\texttt{mergedWithNextSibling[u${}_1$]} or
\texttt{mergedWithFirstChild[u${}_1$]}, \texttt{hasBorderNode[u${}_1$]},
\texttt{size[u${}_1$]} and \texttt{missingSibling[u${}_1$]}.
Overall, these data structures make it possible to process each node in $O(1)$
in the algorithm.

\end{toappendix}

\myparagraph{Description of the main algorithm} We are now ready to
describe our algorithm for the dynamic evaluation problem for a fixed
forest algebra $(V,H)$. Given as input a $(V,H)$-forest $\cF$
with $n$ nodes (without a distinguished node), we want to maintain
the evaluation of~$\cF$ (which is an element in~$H$).  The first step
of the preprocessing is to compute in linear time an auxiliary data structure $\calS_{k}$ which
is used to maintain the evaluation of small forests under updates
in~$O(1)$, simply by tabulation.  More precisely, $\calS_{k}$ can be
used to solve the \emph{non-restricted} dynamic evaluation problem for
forests of size at most $k+1$, as follows:

\begin{toappendix}
  \subsection{Tabulation on Small Forests}
  All that remains is to re-state and show \cref{prp:tabulation}:
\end{toappendix}

\begin{propositionrep}
  \label{prp:tabulation}
Given a forest algebra $(V,H)$ there is a constant
$c_{V,H} \in \NN$ such that the following is true.
Given $k \in \NN$, we can compute in $O(2^{k\times c_{V,H}})$ a data structure
  $\calS_k$ that stores a sequence (initially empty) of
  $(V,H)$-forests $G_1, \dots, G_q$ and 
  supports the following:
\begin{itemize}
\item add($G$): given an $(V,H)$-forest $G$ with
  at most $k+1$ nodes, insert it into the sequence, taking time and space $O(|G|)$
\item relabel($i,n,\sigma$): given an integer $1 \leq i \leq q$,
  a node~$u$, and a label $\sigma \in H$ or $\sigma \in V$, relabel the
  node $u$ of~$G_i$ to $\sigma$, taking time $O(1)$ -- as usual we require that internal
    nodes have labels in~$H$ and at most one leaf has label in~$V$
\item eval($i$): given $1 \leq i \leq q$, return the evaluation of $G_i$, taking $O(1)$
\end{itemize}
\end{propositionrep}

\begin{proof}
Let us introduce the infinite graph $\Gamma$ where nodes are labeled
forests over the alphabet $\Sigma = H\cup V$. A node $\cF$ corresponds
to a forest with $\ell$ nodes and for each node $u$ in the forest $\cF$,
the \emph{index} of $u$ is the rank of the node in the prefix order of
the nodes of $\cF$. The edges that leave from the node $\cF$ are:
\begin{itemize}
\item for each $0\leq i \leq \ell$ and $\sigma\in\Sigma$ we have an
  edge $(add,i,\sigma)$ towards the forest $\cF'$, where this $\cF'$
  corresponds to the forest $\cF$ but with a new node as the last child of the node of
  index $i$ in $\cF$ that has label $\sigma$ (when $i=\ell$ this
  corresponds to add a new root appearing last in the prefix order);
\item for each $0\leq i < \ell$ and $\sigma\in\Sigma$ we have an edge
  $(lbl,i,\sigma)$ to the forest $\cF'$ corresponding to the same forest
  as $\cF$ but where the node of index $i$ in $\cF$ is labeled with
  $\sigma$.
\end{itemize}

In this forest we mark $\cF_\emptyset$ the node corresponding to the
empty forest.

This graph is infinite but we can consider $\Gamma_k$ the restriction of
$\Gamma$ to forests containing at most $k+1$ nodes. In this
restriction, each node is characterized by a ``shape'' (i.e. the
forest without labels) and a label for each node. There are less than
$4^{k+1}$ shapes. Indeed, this can be seen, e.g., by noticing that a forest without
labels of at most $k+1$ can be uniquely identified by a word
  of length $2k+2$ of opening and closing parentheses, formed of a
  well-parenthesized word of length at most $2k+2$ followed by a sequence of dummy opening parentheses to
  pad the length to $2k+2$. The number of words of length $2k+2$ on an alphabet
  of size~$2$ is $2^{2k+2}$, which is $4^{k+1}$, so this is also an upper bound on
  the number of shapes.

  Further, there are less than $|\Sigma|^{k+1}$ ways of labeling
each of them. This gives a total of less than $(4(|\Sigma|))^{k+1}$ nodes
in~$\Gamma_k$. For the number of edges, each node has at most $(2k+1)\times
|\Sigma|$ outgoing edges. Thus, for $k+1 = \left\lceil
\dfrac{\log(n)}{8 |\Sigma|} \right\rceil$, the number of nodes of $\Gamma_k$ is
upper bounded by:
  \[
    (4|\Sigma|)^{k+1} = \exp\left(\log(4|\Sigma|) \times \left\lceil
  \dfrac{\log(n)}{8 |\Sigma|} \right\rceil\right) \leq 
\exp\left(\log(4|\Sigma|) \times 
  \left(\dfrac{\log(n)}{8 |\Sigma|} + 1\right)\right)
\]
which by expanding yields
  \[
(4|\Sigma|)^{k+1} \leq
  n^{\dfrac{\log(4|\Sigma|)}{8 |\Sigma|}} \times 4|\Sigma|
  \leq 
  n^{1/2} \times 4|\Sigma|.
\]
  Thus, $\Gamma_k$ has 
$O(n^{1/2})$ nodes and we can compute in time $O(n)$ a representation
of $\Gamma_k$ that allows given a node $u$ to retrieve the neighbors using
the edges $(add,i,\sigma)$ and $(lbl,i,\sigma)$ in $O(1)$. On top of
that we can also pre-compute the evaluation of each forest
  of~$\Gamma_k$ -- we only compute the evaluation of well-formed forests,
  that is, $(V,H)$-forests where all internal nodes have a label in~$V$
  and at most one leaf has a label in~$V$.

To support the operation add$(\cF)$ for a forest $\cF$ that has less
than $k+1$ nodes, we can retrieve in $O(|\cF|)$ the node in~$\Gamma_k$
that represents $\cF$ by following edges $(add,i,\sigma)$ starting
from the node $\cF_\emptyset$. While doing that, we also store for each
node in~$\cF$ its index. Now, to support a relabeling update on a node
$u$ in~$\cF$, we just follow the edge $(lbl,i,\sigma)$ where $i$ is
the index of the node~$u$ in~$\cF$. Note that our scheme requires some data per forest
(the index for each node, and a pointer to a forest in~$\Gamma_k$) but
the graph $\Gamma_k$ is not modified therefore our scheme can support the
dynamic evaluation problem for multiple forests with constant-time
updates as claimed.
\end{proof}

Letting $k\coloneq \lfloor \log n /  c_{(V,H)} \rfloor$,
the first step of the preprocessing builds the data structure $\calS_k$ in time $O(n)$.
We then move on to the second step,
which is to compute a sequence of forests by recursively clustering in the following way. We
start by letting
$\cF_0 \coloneq \cF$ be the input $\Sigma$-forest.
Then, we recursively do the following. If $|\cF_i|=1$
the sequence stops at $\ell=i$. Otherwise, we compute a saturated
$k$-clustering $\equiv_i$ of $\cF_i$ using \cref{clm:algo}, and we let
$\cF_{i+1}$ be the forest of
clusters ${\cF_i}^{\equiv_i}$. We will later show that this 
takes time $O(n)$ overall.

We continue with a third preprocessing step that computes the evaluation of each cluster at
each level: again we will later argue that this takes 
$O(n)$.
More precisely, we consider all the clusters $C$ of $\cF_0$
and add the sub-forest $\cF_0^C$ of $\cF_0$ induced by each~$C$
to~$\calS_k$, obtaining their evaluation in time $O(|\cF_0|)$. We
use the result of the evaluation as labels for the corresponding nodes in
$\cF_1$.
Then we add the sub-forests induced by all the clusters
of~$\cF_1$ to~$\calS_k$ to obtain their evaluation and to label
$\cF_2$.
We continue 
until we have the evaluation of $\cF_\ell$.
Note that none of the $(V,H)$-forests $\cF_i$ has a distinguished leaf:
indeed, the only place where we perform non-restricted dynamic evaluation is in
\cref{prp:tabulation}, i.e., on the sub-forests induced by the
clusters and added to~$\calS_k$.
Further note that all these induced sub-forests 
have at most
$k+1$ nodes by definition of a $k$-clustering
(the $+1$ comes from the $\square$-labeled leaf which may be added for the border node in
\cref{def:clusteval}).
Now, by \cref{clm:correct} at each step, we have that 
the evaluation of $\cF_\ell$ is equal to the evaluation
of the input $(V,H)$-forest $\cF_0$. This is the answer we need to
maintain, and 
this concludes 
the preprocessing.

Let us now explain how we recursively handle relabeling updates. To
apply an update to the node $u$ of the input forest~$\cF_0$, we
retrieve its cluster $C$ and use $\calS_k$ to apply this update to~$u$
to the induced sub-forest $\cF_0^C$ and retrieve its new evaluation,
in~$O(1)$. This gives us an update to apply to the node~$C$ of the
forest of clusters~$\cF_1$. We continue like this over the successive
levels, until we have an update applied to the single node of
$\cF_\ell$, which again by \cref{clm:correct} is the desired
answer. The update is handled overall in time $O(\ell)$, so let us
bound the number $\ell$ of recursion steps. At every level $i<\ell$ we
have $ |\cF_{i+1}|\leq \lceil |\cF_{i}|\times (c_{V,H}/k)\rceil$ by
\cref{prp:saturated} and $|F_i| \geq k $, so we have $|\cF_i| \leq n\times
((c_{V,H}+1)/k))^i$ and therefore $\ell = O(\log n / \log k) = O(\log
n / \log \log n)$
given that $k= \lfloor \log n / c_{V,H} \rfloor$.

The only remaining point is to bound the total time needed in the preprocessing.
The data structure $\calS_k$ is computed in linear time in the first step. Then, we spend linear time in each 
$\cF_i$ to compute the $k$-clusterings at each level in the second step, and we again spend
linear time in each $\cF_i$ to feed the induced sub-forests of all the
clusters of~$\cF_i$ to $\calS_k$ in the third step.
Now, it suffices to observe that
the size of each $\cF_i$ decreases exponentially with~$i$; for
sufficiently large $n$ and $k$ we have $k \geq 2$ so $|\cF_i| \leq
|\cF_0|/2^i$ and the total size of the $\cF_i$ is
$O(|\cF_0|)$. This ensures that the total time taken by the preprocessing across
all levels in the second and third steps is in~$O(n)$, which
concludes.

\section{Dynamic Membership to Almost-Commutative Languages in $O(1)$}
\label{sec:zgupper}
We have shown in \cref{sec:lll} our general upper bound (\cref{thm:upper}) on
the dynamic membership problem to arbitrary forest languages.
In this section, we study how we can show more favorable bounds on the
complexity on dynamic membership when focusing on restricted families of 
languages. More precisely, in this section, we define a class of regular
forest languages, called \emph{almost-commutative} languages, and show that the
dynamic membership problem to such languages can be solved in~$O(1)$.  We will
continue studying these languages in the next section to show that
non-almost-commutative languages conditionally cannot be maintained in constant
time when assuming the presence of a neutral letter.

\subparagraph*{Defining almost-commutative languages.}
To define almost-commutative languages, we need to define the
subclasses of \emph{virtually-singleton}  languages and
\emph{regular-commutative} languages. Let us first define virtually-singleton
languages via the operation of \emph{projection}:

\begin{definition}
  \label{def:virtsing}
  The \emph{removal} of a node~$u$ in a $\Sigma$-forest $F$ means 
replacing $u$ by the (possibly empty) sequence of its children.
\emph{Removing} a subset of nodes of~$F$ is then defined in the expected way;
note that the result does not depend on the order in which nodes are removed.

  For $\Sigma' \subseteq \Sigma$ a subalphabet, given a $\Sigma$-forest $F$, the 
  \emph{projection} of \(F\) over \(\Sigma'\) is the forest
\(\pi_{\Sigma'}(F)\)
  obtained from \(F\) when removing all nodes that are labeled by a letter 
of \(\Sigma \setminus \Sigma'\).

  A forest language \(\LL\) over $\Sigma$ is \emph{virtually-singleton} if there exists a
  subalphabet \(\Sigma'\subseteq \Sigma\) and a $\Sigma'$-forest \(F'\) such
  that \(\LL\) is the set of forests whose projection over \(\Sigma'\) is \(F'\).
\end{definition}

Note that virtually-singleton languages are always regular: a forest automaton
can read an input forest, ignoring nodes with labels in $\Sigma \setminus
\Sigma'$, and check that the resulting forest is exactly the fixed target
forest~$F'$.

Let us now define \emph{regular-commutative} languages as the
\emph{commutative} regular forest languages, i.e.,
membership to
the language can be determined from the \emph{Parikh image}:

\begin{definition}
  \label{def:regcom}
The \emph{Parikh image} of a $\Sigma$-forest \(F\) is the vector \(v\in
  \N^{\Sigma}\) such that for every letter \(a \in \Sigma\),
the component \(v_{a}\) is the number of nodes labelled by \(a\) in \(F\).

  A forest language \(\LL\) is \emph{regular-commutative} if it is regular
  and there is a set \(S\subseteq \N^{\Sigma}\) such that
  \(\LL\) is the set of forests whose Parikh image is in \(S\).
\end{definition}

We can now define 
\emph{almost-commutative} forest languages from these two
classes:

\begin{definition}
  A forest language \(\LL\) is \emph{almost-commutative} if it is a finite Boolean combination of regular-commutative and
  virtually-singleton languages.
\end{definition}

Note that almost-commutative languages are always regular, because regular
forest
languages are closed under Boolean operations.
Further, in a certain sense, almost-commutative languages
generalize all word languages with a neutral letter that enjoy constant-time
dynamic membership. Indeed, such languages are known
by~\cite{amarilli2021dynamic} and \cite[Corollary~3.5]{amarilli2023locality} to
be described by regular-commutative conditions and the presence of specific
subwords on some subalphabets.
Thus, letting $L$ be such a word language,
we can define a forest language $L'$
consisting of the forests $F$ where the nodes of~$F$ form a word of~$L$ (e.g.,
when taken in prefix order), and $L'$
is then almost-commutative.
As a kind of informal converse, given a $\Sigma$-forest $F$, we can represent it as a
word with opening and closing parentheses (sometimes called the \emph{XML
encoding}), and the set of such representations for an almost-commutative forest
language $L'$ will intuitively form a word language $L'$ that enjoys
constant-time dynamic membership except for the 
(non-regular) requirement that parentheses are balanced.

We also note that we can effectively
decide whether a given forest language is almost-commutative, as will follow
from the algebraic characterization in the next section:

\begin{propositionrep}
  The following problem is decidable:
  given a forest automaton $A$, determine
  whether the language accepted by $A$
  is almost-commutative.
\end{propositionrep}

\begin{proof}
  We use the equivalence between almost-commutative languages and ZG forest
  algebras (\cref{thm:almost-comm_zg}) which is shown in \cref{sec:zglower}. So
  it suffices to show that we can compute the syntactic forest algebra and test
  whether its vertical monoid satisfies the ZG equation. For this, we use the
  process described in~\cite{bojanczyk2008forest} to compute the forest algebra,
  and then for every choice of elements $v$ and $w$ of the vertical monoid we
  compute the idempotent power $v^\omega$ of~$v$ and then check whether
  $v^{\omega+1}w = wv^{\omega+1}$.
\end{proof}

\subparagraph*{Tractability for almost-commutative languages.}
We show the main result of this section:

\begin{theoremrep}\label{thm:almost_cor_in_RAM}
  For any fixed almost-commutative forest language $L$, the dynamic membership problem
  to~$L$ is in $O(1)$.
\end{theoremrep}

\begin{proofsketch}
This result is shown by proving the claim for regular-commutative languages and
virtually-singleton languages, and then noticing that tractability is preserved
under Boolean operations, simply by combining data structures for the
  constituent languages.

  For regular-commutative languages, we maintain the Parikh image as a vector in constant-time per update, and we
  then easily maintain the forest algebra element to which it corresponds,
  similarly to the case of monoids (see \cite{skovbjerg1997dynamic} or
  \cite[Theorem 4.1]{amarilli2021dynamic}).
  
  For virtually-singleton languages, we use doubly-linked lists
  like in \cite[Prop.~4.3]{amarilli2021dynamic} to maintain, for each
  letter $a$ of the subalphabet~$\Sigma'$, the
  unordered set of nodes with label~$a$. This allows us to
  determine in constant-time whether the Parikh image of the input forest
  restricted to~$\Sigma'$ is correct: when this holds,
  then the doubly-linked lists have constant size and we can use them to recover
  all nodes with labels in~$\Sigma'$. With constant-time
  reachability queries, we can then test if these nodes achieve the requisite
  forest $F'$ over~$\Sigma'$.
\end{proofsketch}

\begin{toappendix}
This result is shown by proving the claim for regular-commutative languages and
virtually-singleton languages, and then noticing that tractability is preserved
under Boolean operations.

\begin{lemmarep}\label{lem:reg_com_in_RAM}
  For any fixed regular-commutative forest language $L$,
  the dynamic membership problem to~$L$ is in~$O(1)$.
\end{lemmarep}

\begin{proof}
  The proof follows \cite[Theorem 4.1]{amarilli2021dynamic}, and works in the
  syntactic forest algebra $(V,H)$ for the language~$L$.
  Let $n$ be the number of nodes of the input tree. In the preprocessing phase,
  for each $a \in \Sigma$ and for each $0 \leq i \leq n$, we precompute the
  element $h(a,i)$ of~$H$ which is the image by~$\mu$ of a forest consisting of $i$ roots
  labeled~$a$ and having no children: this can be achieved by repeated
  composition of $\mu(a)$ in~$H$. 
  Now, for dynamic evaluation, we easily maintain the Parikh image of the
  forest~$F$
  under updates in constant time per update. As $L$ is commutative, the
  membership of $F$ to~$L$ can be decided by testing the membership to~$L$ of
  the forest  $F'$ having only roots with no children, with the number of occurrences of
  each letter described by the Parikh image. The image
  of~$F'$ by~$\mu$ can be computed in constant time by composing the
  precomputed elements $h(a,i)$, from which we can determine membership of~$F'$,
  and hence of~$F$, to~$L$.

  As in \cite[Theorem 4.1]{amarilli2021dynamic}, we note that the precomputation
  can be avoided because regular-commutative languages must be imposing
  ultimately periodic conditions on the Parikh image.
\end{proof}

\begin{lemmarep}\label{lem:virt_sing_in_RAM}
  For any fixed virtually-singleton forest language $L$,
  the dynamic membership problem to~$L$ is in~$O(1)$.
\end{lemmarep}

\begin{proof}
   Let \(\LL\) be a virtually-singleton language, with subalphabet \(\Sigma'\) and
   $\Sigma'$-forest \(F'\) of size~\(k\). Given the input forest $F$, we precompute in time
   $O(|F|)$ an ancestry data structure allowing us to answer the following reachability
   queries in constant time: given two nodes $u, u' \in F$, decide if $u$ is an
   ancestor of~$u'$. This can be achieved by doing a depth-first
   traversal of~$F$ and labeling each node $u$ with the timestamp $p_u$ at which
   the traversal enters node $u$ and the timestamp $q_u$ at which the traversal
   leaves node $u$. Then $u$ is an ancestor of $u'$ iff we have $p_u \leq
   p_{u'}$ and $q_{u'} \leq q_u$. Note that this data structure depends only on
   the shape of~$F$, so it is only computed once during the preprocessing and is
   never updated.

   We then use the doubly-linked list data structure of
   \cite[Proposition 4.3]{amarilli2021dynamic}. Namely, for each letter $b \in
   \Sigma'$, we compute a doubly-linked list $L_b$ containing pointers to every occurrence
   of~$b$ in the forest~$F$ (in no particular order); and for every node $u$ of~$F$ labeled with~$b$ we
   maintain a pointer $\phi_u$ to the list element that represents it in~$L_b$. We can
   initialize this data structure during the linear-time preprocessing by a
   traversal over $F$ where we populate the doubly-linked lists. Further, we can
   maintain these lists in $O(1)$ at every update. When a node $u$ loses a label
   $b \in \Sigma'$, then we use the pointer $\phi_u$ to locate the list element
   for~$u$ in~$L_b$, we remove the list element in constant time from the
   doubly-linked list, and we clear the pointer $\phi_u$. When a node $u$ gains a
   label $b \in \Sigma'$, then we append
   $u$ to~$L_b$ (e.g., at the beginning), and set $\phi_u$ to point to the newly
   created list item in~$L_b$.

   We now explain how we determine in $O(1)$ whether the current forest belongs
   to~$L$. First, as $|F'|$ is constant, we know that for each $b \in \Sigma'$, we can determine in $O(1)$ whether $L_b$ contains
   exactly 
   $|F'|_b$ elements -- note that we do not need to traverse all of $L_b$
   for this, and can stop early as soon as we have seen $|F'|_b+1$ elements. If
   the test fails for one letter $b \in \Sigma'$, then we know that the current forest
   does not belong to~$L$, because its projection to $\Sigma'$ will not have the right
   Parikh image.

   Second, if all tests succeed, we can retrieve in $O(1)$ from the lists $L_b$ for $b
   \in \Sigma'$ the
   occurrences of letters of~$\Sigma'$ in~$F$, giving the list of 
   nodes $n_1, \ldots, n_k$ of~$F$ with a label in~$\Sigma'$: remember that their number is
   exactly $k = |F'|$ from our test above. These are the nodes of
   the forest $\pi_{\Sigma'}(F)$.

   Now, for any pair of nodes $n_i, n_j$ for two distinct $i, j \in \{1,
   \ldots, k\}$, we can
   query the ancestry data structure of~$F$ computed at the beginning of
   the proof to know what are the edges of the forest
   $\pi_{\Sigma'}(F)$: there are at most $k^2$ queries, i.e., a constant number of
   queries, and each query is answered in constant time. (To be precise,
   the queries give us the descendant relationship of the forest, from
   which the child relationship can be easily computed.)
   Further, again in time
   $O(k^2)$ which is again a constant, we can order the sibling sets of
   $\pi_{\Sigma'}(F)$ using the timestamps precomputed in the first paragraph: for any two nodes $n_i$ and $n_j$ that are siblings
   in~$\pi_{\Sigma'}(F)$ we have that $n_i$ is a preceding sibling of $n_j$ if
   and only if the timestamp $p_{n_i}$ is less than the
   timestamp $p_{n_j}$. (Again, to be precise, this gives us the transitive
   closure of the order relationship on sibling sets, from which the
   previous/next sibling relationships are easily computed.)

   Now that the forest
   $\pi_{\Sigma'}(F)$ is fully specified, we can check if it is isomorphic to $F'$: this is in time
   $O(1)$ because both forests have size $k$ which is a constant. From this, we 
   conclude from this whether $F \in L$ or $F \notin L$.
 \end{proof}

From \cref{lem:reg_com_in_RAM}
and \cref{lem:virt_sing_in_RAM},
we can conclude
the proof of \cref{thm:almost_cor_in_RAM}: 
for any almost-commutative language $L$, we can prepare 
data structures for constant-time dynamic membership to its constituent
almost-commutative and virtually-singleton languages, and the Boolean
membership informations maintained by these data structures can be combined in
constant-time via Boolean operations to achieve a constant-time dynamic
membership data structure for~$L$. This concludes the proof of
\cref{thm:almost_cor_in_RAM}.

\end{toappendix}

\section{Lower Bound on Non-Almost-Commutative Languages with Neutral Letter}
\label{sec:zglower}
We have introduced in the previous section the class of \emph{almost-commutative
languages}, and showed that such languages admit a constant-time dynamic
membership algorithm (\cref{thm:almost_cor_in_RAM}). In this section, we show
that this class is tight: non-almost-commutative regular forest languages cannot
enjoy constant-time dynamic membership when
assuming the presence of a
neutral letter and when assuming
the hardness of the
\emph{prefix-$U_1$} problem from~\cite{amarilli2021dynamic}. We first present
these hypotheses in more detail, and state the lower bound that we show in this section
(\cref{thm:lb}).
Second, we present the algebraic characterization of almost-commutative regular
languages on which the proof of \cref{thm:lb} hinges: they are precisely the regular languages
whose syntactic forest algebra is in a class called ZG.
Third, we sketch the lower bound showing that dynamic membership is
conditionally not in $O(1)$ for languages with a neutral letter whose syntactic forest algebra
is not in ZG, and conclude.

\myparagraph{Hypotheses and result statement}
The lower bound shown in this section is conditional to a computational hypothesis
on the \emph{prefix-$U_1$ problem}. In this problem, we are given as input a word $w$ on the alphabet $\Sigma =
\{0,1\}$, and we must handle substitution updates to~$w$ and queries where we
are given $i \in \{1, \ldots, |w|\}$ and must return whether the prefix of $w$ of
length $i$ contains some occurrence of~$1$. In other words, we must
maintain a subset of integers of $\{1, \ldots, |w|\}$ under insertion and
deletion, and handle queries asking whether an input $i \in \{1, \ldots, |w|\}$
is greater than the current minimum: note that this like a priority queue
except
we can only compare the minimum to an
input value but not retrieve it directly. We will use the following conjecture from~\cite{amarilli2021dynamic}
as a hypothesis for our lower bound:

\begin{conjecture}[\protect{\cite[Conj.\
  2.3]{amarilli2021dynamic}}]\label{conj:prefix_uone}
  There is no data structure solving the prefix-$U_1$ problem in
  \(\bigO(1)\) time per operation in the RAM model with unit cost and
  logarithmic word size.
\end{conjecture}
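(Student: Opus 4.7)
The statement is cited as \cite[Conjecture~2.3]{amarilli2021dynamic} and is adopted as a hypothesis in the present paper rather than being proven here. A proof would have to establish an unconditional cell-probe lower bound for a very restricted dynamic problem: maintaining a bitstring $w \in \{0,1\}^n$ under bit flips while supporting queries ``does $w[1..i]$ contain a $1$?'', or equivalently, maintaining a set $S \subseteq \{1,\dots,n\}$ under insertions and deletions with only the thresholded query ``is $\min(S) \le i$?''.

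The natural approach would be to apply the chronogram technique of Fredman--Saks, or the information-transfer method of Patrascu--Demaine. One would split the update sequence into epochs of geometrically increasing length, design a hard distribution of updates and queries, and argue that a query issued just after an epoch must reveal enough information about the updates in that epoch to force a super-constant number of cell probes into the cells written during that epoch; summing over epochs would give a non-constant amortized bound. A complementary route is a reduction from a problem already known to require $\Omega(\log n/\log\log n)$ per operation, such as partial sums over $\mathbb{Z}_2$~\cite{fredman1989cell} or the marked ancestor problem~\cite{alstrup1998marked}. However, such reductions typically need queries that return numerical aggregates or arbitrary predecessors, whereas prefix-$U_1$ exposes only a single thresholded bit per query, so a direct reduction is unlikely to go through.

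The main obstacle, and the reason the statement has remained a conjecture, is precisely this extreme restriction on the query output. An adversary extracts only one bit per query, so naive information-theoretic counting gives only a trivial $\Omega(1/\log n)$ probes-per-query lower bound, which is vacuous. Overcoming this barrier would require a more subtle amortized argument, perhaps exploiting the fact that repeated threshold queries at binary-search values can collectively locate the minimum and thereby simulate a stronger primitive such as dynamic predecessor; turning this intuition into a formal epoch-based lower bound is the crux of the problem, and is what makes the statement a non-trivial open conjecture rather than an immediate corollary of existing cell-probe results.
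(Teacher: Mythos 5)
You correctly recognize that this is not a theorem proved in the paper but an open conjecture, cited from~\cite{amarilli2021dynamic} and adopted as a hypothesis for the conditional lower bounds; the paper gives no proof and none is expected. Your surrounding discussion of why a cell-probe lower bound would be hard to obtain (the single-bit query output defeats naive information counting) is sensible commentary but is not needed, since the task here is simply to state the conjecture, not to prove it.
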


Further, our lower bound in this section will be shown for regular forest languages $L$ over
an alphabet $\Sigma$ which are assumed to feature a so-called \emph{neutral
letter} for~$L$.
Formally,
a letter $e \in \Sigma$ is \emph{neutral} for~$L$ if, for every
forest $F$, we have $F \in L$ iff $\pi_{\Sigma \setminus \{e\}}(F) \in L$, where
$\pi$ denotes the projection operation of \cref{def:virtsing}.
In other words, nodes labeled by $e$ can be removed without affecting the
membership of~$F$ to~$L$.

We can now state the lower bound shown in this section, which is our main
contribution:

\begin{theorem}
  \label{thm:lb}
  Let \(\LL\) be a regular forest language featuring a neutral letter.
  Assuming \cref{conj:prefix_uone},  we have that $L$ has dynamic membership in
  $\bigO(1)$ iff $L$ is almost-commutative.
\end{theorem}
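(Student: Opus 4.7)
The forward implication, that almost-commutative languages admit $O(1)$ dynamic membership, is already handled by \cref{thm:almost_cor_in_RAM} and uses neither the neutral letter nor \cref{conj:prefix_uone}. So the plan focuses on the contrapositive of the reverse direction: under the presence of a neutral letter, if $L$ is \emph{not} almost-commutative, then assuming \cref{conj:prefix_uone}, dynamic membership to~$L$ is not in $O(1)$.

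The first step is to establish the algebraic characterization alluded to in the introduction: a regular forest language with a neutral letter is almost-commutative iff its syntactic forest algebra lies in ZG, i.e., the vertical monoid satisfies $x^{\omega+1} y = y x^{\omega+1}$. One direction is relatively short: by the closure result of \cref{clm:closure} together with the easy observation that regular-commutative and virtually-singleton languages each have ZG syntactic algebras, every almost-commutative language is recognized by a ZG algebra, and by \cref{lem:syntactic_divides_all_rec} its syntactic algebra divides such a ZG algebra and is therefore itself in ZG. The converse direction, which I expect to be the main obstacle, requires showing that any $L$ with a neutral letter whose syntactic algebra is in ZG can actually be written as a Boolean combination of regular-commutative and virtually-singleton languages. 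I would mimic the word-case argument of~\cite{amarilli2023locality}: define an equivalence relation on $\Sigma$-forests of finite index that saturates~$L$, such that each class is characterized by a Parikh condition on the ``frequent'' letters together with a projection-shape condition on the ``rare'' letters. The delicate point is that in the tree setting, the same label can contribute both horizontally and vertically, so one must chain the ZG equation with the forest-algebra axioms (Mixing, Action) to swap multiple occurrences of the same label across nested contexts; the neutral letter is what lets rare letters be shuffled toward canonical positions without changing membership.

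The second step is the reduction from prefix-$U_1$. Starting from a non-ZG syntactic forest algebra, the surjectivity clause of the syntactic morphism yields $\Sigma$-contexts $c_v, c_w$ whose images $v, w$ in the vertical monoid witness $v^{\omega+1} w \neq w v^{\omega+1}$; by minimality there is an outer context and a forest separating the two. The plan is then, given an instance of prefix-$U_1$ on $n$ bits, to build an input forest of size polynomial in~$n$ whose structure contains $n$ designated ``switch'' nodes, each togglable between a label activating (a copy of) $c_v$ and a label activating only the neutral letter~$e$, with a single distinguished ``pivot'' context realizing~$w$ and an outer separating context. Toggling a bit in prefix-$U_1$ corresponds to a relabeling of the matching switch node, and a prefix query reduces to asking whether membership in~$L$ holds once the pivot is relocated, which by the separating pair flips iff an active $v$-switch lies on a particular side. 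A constant-time dynamic membership algorithm for~$L$ would then give constant-time prefix-$U_1$, contradicting \cref{conj:prefix_uone}.

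Combining the two steps concludes the proof: under the neutral-letter assumption, non-almost-commutative languages have non-ZG syntactic algebras (Step 1), and hence by the reduction (Step 2) cannot be maintained in $O(1)$ conditional on \cref{conj:prefix_uone}. I expect nearly all of the technical effort to lie in Step 1, and specifically in the ZG-implies-almost-commutative implication for forests; once the right equivalence relation is identified, Step 2 should be a fairly mechanical adaptation of the corresponding word-language reduction of~\cite{amarilli2021dynamic}, with the neutral letter and the forest-algebra axioms replacing free concatenation.
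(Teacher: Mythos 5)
Your high-level decomposition (Step 1: algebraic characterization; Step 2: conditional hardness) matches the paper, and Step 1 is essentially the paper's argument: \cref{clm:closure} plus the observations that regular-commutative languages have commutative vertical monoids and virtually-singleton languages have nilpotent ones give the easy direction, and the converse is shown by a normal-form argument on ZG forest algebras. One small inaccuracy: the paper's \cref{thm:almost-comm_zg} holds \emph{without} the neutral-letter assumption, and its proof does not use the neutral letter at all --- the neutral letter enters only in Step 2. Your remark that ``the neutral letter is what lets rare letters be shuffled toward canonical positions'' attributes the wrong role to it; the shuffling is done purely via the ZG/OUTh/OUTv/FLAT equations.

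Step 2 is where you diverge from the paper, and where your sketch has a real gap. The paper does \emph{not} re-derive a reduction from prefix-$U_1$: it constructs the word language $L'$ over $\{\square, C, D\}$ of words $u$ (viewed as vertically composed contexts) such that $E(u(F)) \in L$, shows $L'$ has a neutral letter and a non-ZG syntactic monoid, and then \emph{cites} the word-case theorem of~\cite{amarilli2021dynamic}. The only new work is a simple linear-time translation of relabeling updates on $u$ to $O(1)$ relabeling updates on a fixed-shape forest with $n'$ stacked copies of the shape of $C(D)$. Your proposal instead builds a forest directly from a prefix-$U_1$ instance with ``switch'' nodes and a ``pivot'' context, and says a prefix query at $i$ is answered ``once the pivot is relocated.'' But dynamic membership has no query parameter; answering a query would require first moving the pivot to position~$i$ by relabelings, and moving it from its current position $j$ costs $\Theta(|i-j|)$ updates, so a constant-time membership structure would not give a constant-time prefix-$U_1$ structure. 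Making this precise is exactly the nontrivial content of the word-case lower bound in~\cite{amarilli2021dynamic}, which does not work by naively relocating a pivot. The cleaner route, and the one the paper takes, is to reduce the forest problem to the word problem and invoke that result as a black box, rather than adapting its internals.
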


\myparagraph{Algebraic characterization of almost-commutative languages}
The proof of \cref{thm:lb} hinges on an algebraic characterization of the
almost-commutative languages. Namely, we will define a class of forest algebras
called \emph{ZG}, by imposing the \emph{ZG equation}
from~\cite{amarilli2021dynamic,amarilli2023locality}: 

\begin{definition}
  \label{def:zg}
  A monoid $M$ is in the class \emph{ZG} if it satisfies the \emph{ZG equation}:
  for all $v, w \in M$ we have
    \(v^{\omega+1}w = wv^{\omega+1}\).  %
  A forest algebra $(V, H)$ is in the class \emph{ZG} if its vertical monoid is
  in ZG: we call it a \emph{ZG forest algebra}.
\end{definition}

The intuition for the ZG equation is the following. Elements of the form
$x^{\omega+1}$ in a monoid are called \emph{group elements}: they are precisely
the elements which span a subsemigroup (formed of the elements $x^{\omega+1},
x^{\omega+2}, \ldots$) which has the structure of a group (with $x^\omega =
x^{2\omega}$ being the neutral element). Note that the group is always a cyclic
group: it may be the trivial group, for instance in an aperiodic monoid all such
groups are trivial, or in arbitrary monoids the neutral element always spans
the trivial group. The equation implies that all group elements of the monoid
are \emph{central}, i.e., they commute with all other elements.

The point of ZG forest algebras is that they correspond to almost-commutative
languages:

\begin{toappendix}
  \subsection{Proof of \cref{thm:almost-comm_zg}}
  \label{apx:zgproof}
\end{toappendix}

\begin{theoremrep}\label{thm:almost-comm_zg}
  A regular language \(\LL\) %
  is almost-commutative if and only if its syntactic forest algebra is in ZG.
\end{theoremrep}

\begin{toappendix}
  The proof is split in three parts, spanning the next sections. First, we
  derive some equations that hold on ZG forest algebras. Second, we show the
  forward direction, which is easy. Third, we show the backward direction, which
  is more challenging and uses the equations derived in the first section.
\subsubsection{Equations Implied by (ZG)}
\label{apx:equations}
As a preliminary step, we prove that the equation (ZG) from \cref{def:zg}
on a forest algebra implies
several other equations.

Throughout the
section, we fix a forest algebra $(V,H)$.
For $i \in \mathbb{N}$, we write $i \cdot h$ for an element $h \in H$ to mean
$h$ composed with itself $i$ times, like exponentiation; we write $v^i$ for $v
\in V$ to mean the same for elements of~$V$.
Further, we will write $\omega \cdot h$ for an element $h \in H$ to mean the idempotent
power of~$h$, and write as usual $v^\omega$ for the idempotent power of $v \in
V$.
For \(v\in V\) and \(k\in \mathbb{N}\), we simply define \(v^{\omega +k}\) to be \(v^{\omega} \cdot v^{k}\).
We want to extend this definition to make sense of \(v^{\omega -k}\).
Let \(q\) an integer such that \(v^{q} = v^{\omega}\).
As every multiple of \(q\) satisfies the equation, we pick \(q'\geq q+k\) such that \(v^{q'}=v^{\omega}\).
Then \(v^{\omega-k}\) is defined as \(v^{q'-k}\).
This definition ensures that for any two relative integers \(k_{1}\) and \(k_{2}\), we have \(v^{\omega +k_{1}}\cdot v^{\omega + k_{2}} = v^{\omega\cdot k_{1}+k_{2}}\).
We define similarly \((\omega +k)\cdot h\) for \(h\in H\) and \(k\in\mathbb{Z}\).

\subparagraph*{Centrality of other forms of elements.}
First, we remark that (ZG) implies the centrality of other elements, for the
following reason (also observed as \cite[Claim~2.1]{amarilli2023locality}):

\begin{claim}
  \label{clm:omegak}
  For any monoid $M$, element $v \in M$, and integer $k \in \mathbb{Z}$,
  we have $v^{\omega+k} = (v^{\omega+k})^{\omega+1}$. Hence, if $M$ satisfies
  the equation (ZG), then $v^{\omega+k}$ is central.
\end{claim}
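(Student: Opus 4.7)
The plan is to exploit the fact that $v^{\omega+k}$ always lies in the cyclic group sitting inside the submonoid generated by~$v$: this group has identity $v^\omega$, consists of the elements $\{v^{\omega+j} : j \in \mathbb{Z}\}$ under the convention fixed earlier in this appendix, and satisfies $v^{\omega+k_1} \cdot v^{\omega+k_2} = v^{\omega+k_1+k_2}$. The identity $v^{\omega+k} = (v^{\omega+k})^{\omega+1}$ then reduces to the familiar fact that every element $g$ of a finite group satisfies $g^{n+1} = g$ whenever $n$ is a multiple of its order, and the centrality statement follows by a direct application of~(ZG).

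Concretely, I would set $w \coloneq v^{\omega+k}$ and compute $w^m = v^{\omega+mk}$ for every $m \geq 1$ by a short induction using the identity above. Taking $m$ to be a sufficiently large multiple of the period of~$v$ then yields $w^m = v^\omega$, so that $v^\omega$ is an idempotent power of~$w$. Therefore $w^{\omega+1} = w^\omega \cdot w = v^\omega \cdot v^{\omega+k} = v^{\omega+k} = w$, establishing the first equation of the claim.

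The centrality consequence is then immediate: assuming $M$ satisfies~(ZG) and applying it with $x \coloneq v^{\omega+k}$ and arbitrary $y \in M$ gives $(v^{\omega+k})^{\omega+1} \cdot y = y \cdot (v^{\omega+k})^{\omega+1}$, which simplifies to $v^{\omega+k} \cdot y = y \cdot v^{\omega+k}$ by the equation just established, showing that $v^{\omega+k}$ commutes with every element of~$M$.

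There is no real obstacle in this proof: the only subtlety is the proper handling of negative values of~$k$, which is however already absorbed by the conventions stated at the start of this appendix ensuring that $v^{\omega+k}$ is well-defined and that the composition rule $v^{\omega+k_1} \cdot v^{\omega+k_2} = v^{\omega+k_1+k_2}$ holds uniformly for all integer exponents. Once those conventions are invoked, the argument is a one-line computation inside the cyclic subgroup generated by~$v^\omega$.
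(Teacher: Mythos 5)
Your proof is correct and follows essentially the same route as the paper's: both rely on the fact that $v^{\omega+k}$ lies in the cyclic group with identity $v^\omega$, so that $(v^{\omega+k})^\omega = v^\omega$ and hence $(v^{\omega+k})^{\omega+1} = v^\omega \cdot v^{\omega+k} = v^{\omega+k}$. The paper's proof is a one-line computation that leaves the identification $(v^{\omega+k})^\omega = v^\omega$ implicit, whereas you spell it out via the calculation $w^m = v^{\omega+mk}$; that extra detail is welcome but does not change the underlying argument.
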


So in particular for $k=0$ we know that all idempotents of a ZG monoid are
central.

\begin{proof}[Proof of \cref{clm:omegak}]
We simply have:
\[ (v^{\omega+k})^{\omega+1} = v^{\omega}\cdot v^{\omega+k} = v^{\omega+k}.
  \qedhere \]
\end{proof}

\subparagraph*{ZG equation on the horizontal monoid.}
Let us then remark the following fact, which is stated in \cite[Fact
2.32]{barloy2024complexity} but which we re-state here to match our definitions. In this
statement, we say that $N$ is a \emph{submonoid} of~$M$ if there is an injective
morphism from~$N$ to~$M$:

 \begin{fact}\label{fact:H_submonoid_V}
   For every forest algebra \((V,H)\), we have that \(H\) is a submonoid of \(V\).
 \end{fact}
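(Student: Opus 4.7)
The plan is to exhibit an injective monoid morphism $\phi \colon H \to V$, which is exactly what is meant by $H$ being a submonoid of $V$. The natural candidate is
\[ \phi(h) \coloneq h \oplusHV \square, \]
i.e., we send each $h \in H$ to the ``context obtained by placing $h$ to the left of a hole''. This lands in $V$ by the type of $\oplusHV$, so what remains is to check (i) that $\phi$ preserves the neutral element, (ii) that $\phi$ preserves the binary operation, and (iii) that $\phi$ is injective. The only nontrivial tool I expect to need is the Faithfulness axiom, which lets one verify equalities in $V$ by testing against all $g \in H$ via $\odotVH$.

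For (i) and (ii), the strategy is the same: reduce the desired equality in $V$ to an equality in $H$ through Faithfulness, then collapse the resulting expressions using Mixing, Action (neutral), and associativity of $\oplusHH$. Concretely, to show $\phi(\epsilon) = \square$, apply $\odotVH g$ to both sides: on one side we get $(\epsilon \oplusHV \square) \odotVH g = \epsilon \oplusHH (\square \odotVH g) = \epsilon \oplusHH g = g$ by Mixing, Action (neutral), and the fact that $\epsilon$ is neutral in~$H$; on the other side $\square \odotVH g = g$ directly by Action (neutral); so Faithfulness gives $\phi(\epsilon) = \square$. To show $\phi(h_1 \oplusHH h_2) = \phi(h_1) \odotVV \phi(h_2)$, apply $\odotVH g$ and simplify both sides: using Action (composition) on the right, followed by repeated applications of Mixing and Action (neutral), both sides reduce to $h_1 \oplusHH h_2 \oplusHH g$, so Faithfulness concludes.

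For (iii), suppose $\phi(h_1) = \phi(h_2)$. Applying $\odotVH \epsilon$ to both sides, and using Mixing together with the computation $\square \odotVH \epsilon = \epsilon$ from Action (neutral), both sides collapse to $h_i \oplusHH \epsilon = h_i$; thus $h_1 = h_2$. Hence $\phi$ is injective, which concludes the proof of the fact.

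The main obstacle, if any, is purely bookkeeping: one must be careful that the axioms only directly relate $\odotVH$ and the $\oplus$-operations, and there is no direct axiom relating $\odotVV$ and $\oplusHV$, which is why one is forced to route the argument through Faithfulness rather than computing both sides symbolically. Once this is accepted, each verification is a short chain of axiom applications with no deeper idea needed.
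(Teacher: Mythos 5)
Your proposal is correct and follows essentially the same approach as the paper: the same injection $h \mapsto h \oplusHV \square$, with compatibility and injectivity both verified by applying $\odotVH$ and using Faithfulness, Mixing, Action (composition), and Action (neutral). The only minor difference is that you explicitly verify $\phi(\epsilon) = \square$, a step the paper silently omits; otherwise the arguments match line for line.
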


 \begin{proof}
   We define a morphism \(\mu\) from \(H\) to \(V\) by \(\mu(h) = h \oplus \square\) for \(h\in H\).
   It is indeed a morphism: for \(h,h',g\in H\), we have on the one hand that \(\mu(h+h')\odot g = ((h+h')\oplus\square)\odot g = h+h'+g\) by the Mixing axiom.
   On the other hand, we have that \((\mu(h)\cdot \mu(h'))\odot g = (h\oplus\square) \odot [(h'\oplus \square) \odot g] =(h\oplus\square) \odot (h'+g) = h+h'+g \)
   by applying successively the Action axiom then twice the Mixing one.
   We can then conclude with the Faithfulness axiom that \(\mu(h+h')=\mu(h)\cdot \mu(h')\).

   We finally need to prove that this morphism in injective.
   Let \(h,h'\in H\) such that \(\mu(h)=\mu(h')\).
   Hence \((h\oplus\square)\odot \epsilon =(h'\oplus\square)\odot \epsilon  \) and thus, by the Mixing axiom, \(h = h+ \epsilon = h'+\epsilon = h'\).
 \end{proof}

Thanks to \cref{fact:H_submonoid_V}, we know that the horizontal monoid
of a ZG forest algebra is in ZG as well, namely:

\begin{claim}
  \label{clm:zgh}
  For \((V,H)\) a ZG forest algebra and \(h,g\in H\), we have:
\begin{equation}
  \label{eq:zgh}
(\omega+1)\cdot h + g = g + (\omega+1)\cdot h . \tag{ZGh}
\end{equation}
\end{claim}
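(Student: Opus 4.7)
The plan is to transfer the ZG equation from the vertical monoid $V$ to the horizontal monoid $H$ through the injective morphism $\mu\colon H \to V$ of \cref{fact:H_submonoid_V} defined by $\mu(h) = h \oplus \square$.

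First, I would invoke \cref{fact:H_submonoid_V} to obtain the injective monoid morphism $\mu$. Then I would observe the standard fact that monoid morphisms are compatible with idempotent powers: for any $h \in H$, the element $\mu((\omega + 1) \cdot h) = \mu(h)^{\omega+1}$. Indeed, $\mu(\omega \cdot h) = \mu(h)^\omega$ since the left-hand side is some power of $\mu(h)$ which is idempotent (by applying $\mu$ to the identity $\omega \cdot h + \omega \cdot h = \omega \cdot h$), matching the argument already used in the proof of \cref{clm:omegak} in spirit.

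Next, since $(V,H)$ is a ZG forest algebra, the vertical monoid $V$ satisfies the ZG equation, so instantiating it at $v = \mu(h)$ and $w = \mu(g)$ gives
\[
\mu(h)^{\omega+1} \cdot \mu(g) \;=\; \mu(g) \cdot \mu(h)^{\omega+1}.
\]
Using the morphism property of $\mu$ and the compatibility with idempotent powers observed above, the left-hand side rewrites as $\mu((\omega+1)\cdot h + g)$ and the right-hand side as $\mu(g + (\omega+1)\cdot h)$.

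Finally, I would apply injectivity of $\mu$ to conclude $(\omega+1)\cdot h + g = g + (\omega+1)\cdot h$, which is exactly the desired equation \eqref{eq:zgh}. The main obstacle, if any, is just making precise that $\mu$ commutes with the idempotent power $\omega$ (which may a priori differ between $H$ and $V$); but this is a routine consequence of the fact that $\mu$ sends idempotents to idempotents and that any idempotent power of $\mu(h)$ equals $\mu(h)^\omega$ in $V$.
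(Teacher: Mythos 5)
Your proposal is correct and follows essentially the same route as the paper's proof: both transfer the ZG equation along the injective morphism $\mu\colon H \to V$ of \cref{fact:H_submonoid_V}, using the compatibility of morphisms with idempotent powers and then injectivity to descend the identity back to $H$. (As a side note, the paper's proof has a small typo, writing ``for $w$ the image of~$h$ by the morphism'' where it should say the image of~$g$; your version handles this correctly.)
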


\begin{proof}
Indeed, letting $h$ be an arbitrary element of~$H$, the image in~$V$ of
$(\omega+1)\cdot h$ by the injective morphism from~$H$ to~$V$ is of the form
$v^{\omega+1}$, for $v$ the image of~$h$ by the morphism.
This stands thanks to the fact that a morphism and the idempotent power commutes.
Thus, the (ZG) equation on~$V$ implies that $v^{\omega+1}$ is central in~$V$.
Thus, for any $h$ in~$H$, we see that the left-hand side and right-hand side of
\cref{eq:zgh} evaluate respectively in~$V$ to $v^{\omega+1} w$ and $w
v^{\omega+1}$, for $w$ the image of~$h$ by the morphism; by centrality of
$v^{\omega+1}$ in~$V$ they are the same element, and the injectivity of the
morphism implies that the left-hand side and right-hand side of \cref{eq:zgh}
are equal.
\end{proof}

We will also use a known result about the ZG equation on monoids
(\cite[Lemma 3.8]{amarilli2023locality}), which claim that the idempotent powers
distribute. Instantiated to the setting of the horizontal and vertical monoids
of ZG forest algebras, which both satisfy the ZG equation by
\cref{clm:zgh}, we immediately get the following from \cite[Lemma 3.8]{amarilli2023locality}:

\begin{claim}
  \label{clm:distomega}
For every \(h,g\in H\) and \(v,w\in V\) in a ZG forest algebra we have:
\begin{align*}
(vw)^{\omega} &= v^{\omega}w^{\omega} , \tag{DISTv}  \\
\omega \cdot (h+g) &= \omega\cdot h + \omega\cdot g .\tag{DISTh}
\end{align*}
\end{claim}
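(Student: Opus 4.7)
The plan is to observe that each of the two claimed equations is a statement entirely internal to one of the two monoids of the forest algebra, and that in both cases the underlying monoid satisfies the ZG equation. Once this is in hand, both statements reduce to the single algebraic fact, already known from \cite[Lemma 3.8]{amarilli2023locality}, that in every ZG monoid the idempotent power distributes across multiplication. In other words, no tree- or context-specific reasoning is required: both (DISTv) and (DISTh) live inside a single monoid.

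Concretely, I would argue as follows. For (DISTv), the vertical monoid $V$ is in ZG by the very definition of a ZG forest algebra, so applying \cite[Lemma 3.8]{amarilli2023locality} to $V$ directly yields $(vw)^{\omega} = v^{\omega} w^{\omega}$ for every $v, w \in V$. For (DISTh), \cref{clm:zgh} has already established that the horizontal monoid $H$ satisfies (ZGh), which is just the ZG equation rewritten in the additive notation used for $H$. Hence $H$ is itself a ZG monoid, and the same cited lemma applied to $H$ immediately gives $\omega \cdot (h + g) = \omega \cdot h + \omega \cdot g$.

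I do not expect a real obstacle here. All the substantive work has already been done in the preceding claims: the passage from $V$ being ZG to $H$ being ZG was the point of \cref{clm:zgh}, which in turn relied on the embedding $h \mapsto h \oplus \square$ of \cref{fact:H_submonoid_V}; and the centrality of all group-like elements $x^{\omega+k}$ that underlies the distribution of idempotent powers is encapsulated in \cref{clm:omegak} and in the cited monoid-level lemma. The only step left to perform is to assemble these ingredients by invoking the lemma once on each of the two monoids.
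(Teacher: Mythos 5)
Your proposal is correct and matches the paper's own argument exactly: the paper also derives both equations by noting that $V$ is ZG by definition and $H$ is ZG by \cref{clm:zgh}, then invoking \cite[Lemma 3.8]{amarilli2023locality} once on each monoid. No further work is needed.
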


The equation (ZG) also gives interesting interactions between the vertical and
horizontal monoids. Intuitively, forests that are group elements of the
horizontal monoid can be taken out of any context; and we can take any forest
out of contexts that are group elements of the vertical monoid:

\begin{lemma}\label{lem:OUTh_OUTv}
  Let \((V,H)\) be a ZG forest algebra.
  It satisfies the following equations, for every \(h\in H\) and \(v\in V\):
  \begin{align*}
    v\odot((\omega +1)\cdot h) &= v\odot\epsilon + (\omega+1)\cdot h, \tag{OUTh} \\
    v^{\omega +1}\odot h &= v^{\omega+1}\odot\epsilon + h \tag{OUTv}.
  \end{align*}
  where $\epsilon$ denotes the empty forest.
\end{lemma}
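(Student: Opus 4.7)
The plan is to prove both equations uniformly by pushing horizontal operations into the vertical monoid, so that the equation (ZG) can be applied. The key construction is the element $u_h \coloneq \square \oplusVH h \in V$: using the Mixing axiom together with Action (neutral), one checks that $u_h \odotVH g = g + h$ for every $g \in H$, and a straightforward induction on $k$ then gives $u_h^{k} \odotVH g = g + k \cdot h$ for all integers $k \geq 0$. In particular, $u_h \odot \epsilon = h$ and $u_h^{\omega+1} \odot \epsilon = (\omega+1) \cdot h$, which lets me turn either horizontal quantity appearing in (OUTh) or (OUTv) into the application of a suitable vertical element to $\epsilon$.

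For (OUTv), since $v^{\omega+1}$ is already a group element of $V$, the plan is to rewrite $h$ as $u_h \odot \epsilon$, use Action (composition) to obtain $v^{\omega+1} \odot h = (v^{\omega+1} \cdot u_h) \odot \epsilon$, apply (ZG) to swap the two factors, and then re-expand using Action (composition) in the other direction. What remains is to evaluate $u_h \odot (v^{\omega+1} \odot \epsilon)$ by the identity $u_h \odot g = g + h$ instantiated at $g = v^{\omega+1} \odot \epsilon$, which delivers $(v^{\omega+1} \odot \epsilon) + h$ exactly as claimed.

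For (OUTh), the same skeleton works with $u_h^{\omega+1}$ playing the role of the group element. I would rewrite $(\omega+1) \cdot h$ as $u_h^{\omega+1} \odot \epsilon$, fold via Action (composition) into $(v \cdot u_h^{\omega+1}) \odot \epsilon$, and apply (ZG) to obtain $(u_h^{\omega+1} \cdot v) \odot \epsilon$: this step is legitimate because $u_h^{\omega+1}$ is literally of the form $u^{\omega+1}$ for $u = u_h \in V$, and hence central by (ZG). Re-expanding and using the iterated identity $u_h^{\omega+1} \odot g = g + (\omega+1) \cdot h$ with $g = v \odot \epsilon$ yields the desired right-hand side $(v \odot \epsilon) + (\omega+1) \cdot h$.

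I do not anticipate any real obstacle beyond choosing the correct side of the embedding: taking $u_h = \square \oplusVH h$ rather than $h \oplusHV \square$ is precisely what places the horizontal term on the right in the resulting sum, so that neither equation requires an extra appeal to (ZGh) at the end. Every other manipulation is a direct application of the structural axioms (Mixing, Action, and, only implicitly for the induction on $k$, Faithfulness), combined with a single invocation of the centrality of a group element supplied by (ZG).
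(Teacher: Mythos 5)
Your proposal is correct and follows essentially the same route as the paper's proof: you lift $h$ into the vertical monoid as $u_h = \square \oplusVH h$ (the paper calls this same element $w$), apply the (ZG) centrality of a group element to swap factors, and then unwind. Your choice of $\square \oplusVH h$ rather than $h \oplusHV \square$ is exactly what the paper does, and your observation that this choice places the $h$-term on the right so that (ZGh) is not needed is correct.

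The one point you gloss over is the identity $u_h^{\omega+1} \odot \epsilon = (\omega+1) \cdot h$. The left-hand exponent $\omega$ is the idempotent power of $u_h$ \emph{in $V$}, while the right-hand $\omega$ is the idempotent power of $h$ \emph{in $H$}, and these are a priori different integers. Your iterated identity $u_h^{k} \odot g = g + k \cdot h$ holds for concrete integers $k$, so to substitute $k = \omega+1$ you need to first show the two idempotent powers give the same result. The paper handles this with a short argument: if $u_h^{q}$ is idempotent in $V$, then applying $\epsilon$ to $u_h^{2q} = u_h^{q}$ shows $q\cdot h$ is idempotent in $H$, hence equals $\omega\cdot h$, so $(\omega_V(u_h)+1)\cdot h = (\omega_H(h)+1)\cdot h$. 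Adding that one line would make your argument complete; as written it is a minor but real omission.
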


\begin{proof}
  Let \(w = \square\oplus h \), which is an element of \(V\).
  For every \(q\in\N\),  we have that \(w^{q} = \square\oplus  q\cdot h\) by the Mixing axiom.
  If \(w^{q}\) is idempotent, then \(w^{2q}\odot\epsilon = w^{q}\odot\epsilon\) and so \(q\cdot h\) is idempotent as well.
  It implies that \(w^{\omega}=\square\oplus  \omega\cdot h\), because for \(q\) such that \(w^{\omega}=w^{q}\), there is \(w^{\omega} = \square \oplus q \cdot h = \square\oplus \omega\cdot h\).
  Thus \(w^{\omega+1}=\square\oplus (\omega+1)\cdot h\).
  We can apply (ZG) on \(v\) and \(w\): \(vw^{\omega+1} = w^{\omega+1}v\).
  This rewrites into \(v \cdot (\square\oplus(\omega+1)\cdot h ) = v\oplus(\omega+1)\cdot h \).
  Applying the empty forest \(\epsilon\) to both sides gives (OUTh).

  Now, for (OUTv), with the same \(w=\square\oplus h\),
  we apply (ZG) to \(v\) and \(w\) to get: \(v^{\omega+1}w = wv^{\omega+1}\).
  This rewrites to \(v^{\omega+1}\cdot (\square\oplus h) = v^{\omega+1} + h \).
  Applying the empty forest \(\epsilon\) to both sides gives (OUTv).
\end{proof}

We moreover obtain an equation that says that vertical idempotents are horizontal idempotents as well.

\begin{lemma}\label{lem:IDv}
  Let \((V,H)\) be a ZG forest algebra.
  For every \(v\in V\) and \(i,j\in \N\), we have that
  \[ v^{\omega+i}\odot\epsilon + v^{\omega+j}\odot\epsilon  = v^{\omega+i+j}\odot\epsilon . \]
  In particular,
  \[ v^{\omega}\odot\epsilon  + v^{\omega}\odot\epsilon  = v^{\omega}\odot\epsilon .\tag{IDv} \]
\end{lemma}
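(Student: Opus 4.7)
The plan is to obtain the stated identity as essentially a one-line consequence of (OUTv) from \cref{lem:OUTh_OUTv}, applied not to the element $v$ itself but to the element $v^{\omega+i}$. The key observation is that, by \cref{clm:omegak}, we have $(v^{\omega+i})^{\omega+1}=v^{\omega+i}$; this means that (OUTv), which is stated as a universal law ranging over all elements of $V$, can be instantiated with $v^{\omega+i}$ in place of the free variable, and once rewritten using \cref{clm:omegak} it collapses to the cleaner form $v^{\omega+i}\odot h = v^{\omega+i}\odot\epsilon + h$ for every $h\in H$.

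I would then apply this with $h = v^{\omega+j}\odot\epsilon$, which gives
\[ v^{\omega+i}\odot\bigl(v^{\omega+j}\odot\epsilon\bigr) \;=\; v^{\omega+i}\odot\epsilon + v^{\omega+j}\odot\epsilon. \]
By the Action (composition) axiom, the left-hand side rewrites as $\bigl(v^{\omega+i}\cdot v^{\omega+j}\bigr)\odot\epsilon$, and the arithmetic of generalized $\omega$-powers recalled at the start of \cref{apx:equations} collapses the product to $v^{\omega+i+j}$. This yields the main equation, and (IDv) is then the specialization to $i = j = 0$, giving $v^\omega\odot\epsilon + v^\omega\odot\epsilon = v^\omega\odot\epsilon$.

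I do not anticipate any real obstacle here: the argument is purely symbolic, using only (OUTv), \cref{clm:omegak}, and Action (composition). The only point that deserves care is justifying that (OUTv), whose statement features the specific shape $v^{\omega+1}$, may legitimately be applied to an element $v^{\omega+i}$ that already lies in the group part of $V$; this is precisely what \cref{clm:omegak} guarantees, so the substitution is harmless.
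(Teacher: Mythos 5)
Your proof is correct and matches the paper's own argument: both rewrite $v^{\omega+i+j}\odot\epsilon = v^{\omega+i}\odot(v^{\omega+j}\odot\epsilon)$ and apply (OUTv) with $h = v^{\omega+j}\odot\epsilon$, using \cref{clm:omegak} to justify instantiating (OUTv) at the group element $v^{\omega+i}$. Your write-up is slightly more explicit about why that instantiation is legitimate and about the final collapse of the exponent via Action (composition), but the route is the same.
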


\begin{proof}
  Let \(v\in V\).
  We write \(v^{\omega+i+j}\odot\epsilon  = v^{\omega+i}\odot(v^{\omega+j}\odot\epsilon )\) and we apply
  (OUTv) (with \(v^{\omega+j}\odot\epsilon\) playing the role of \(h\), relying on
  \cref{clm:omegak}).
  This gives \(v^{\omega+i}\odot(v^{\omega+j}\odot\epsilon) = v^{\omega+i}\odot\epsilon + v^{\omega+j}\odot\epsilon\).
  The ``in particular'' part comes from the special case \(i=j=0\).
\end{proof}

The last important equation is an equation that draws a bridge between horizontal and vertical idempotent powers.

\begin{lemma}\label{lem:flat_eq}
  Let \((V,H)\) be a ZG forest algebra.
  For every \(h\in H\) and \(v\in V\), we have that:
  \[ \omega\cdot (v \odot h) = v^{\omega}\odot\epsilon + \omega \cdot h. \tag{FLAT} \]
\end{lemma}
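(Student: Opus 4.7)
The plan is to first establish the special case where $h = \epsilon$, namely $\omega \cdot (v \odot \epsilon) = v^\omega \odot \epsilon$ for every $v \in V$, and then lift it to the general statement via a substitution trick. In the monoid $H$, which is in ZG by \cref{clm:zgh}, every idempotent is central thanks to \cref{clm:omegak}; applying this to $\omega \cdot (v \odot \epsilon)$ (an idempotent by definition) and to $v^\omega \odot \epsilon$ (an idempotent by (IDv)) shows that both sides of the special case are central idempotents of $H$. Therefore, it suffices to prove the two mutual absorptions $v^\omega \odot \epsilon + \omega \cdot (v \odot \epsilon) = v^\omega \odot \epsilon$ and $\omega \cdot (v \odot \epsilon) + v^\omega \odot \epsilon = \omega \cdot (v \odot \epsilon)$.

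The first absorption follows by iterating (IDv) to get $v^{\omega+k} \odot \epsilon = v^\omega \odot \epsilon + k \cdot (v \odot \epsilon)$ for every $k \geq 0$, and then choosing $k$ as a sufficiently large common multiple of the idempotent powers of $V$ and $H$, so that both $v^{\omega+k} = v^\omega$ and $k \cdot (v \odot \epsilon) = \omega \cdot (v \odot \epsilon)$ hold. For the reverse absorption, the idea is to compute $v^\omega \odot (\omega \cdot (v \odot \epsilon))$ in two different ways. On one hand, $v^\omega$ is central in~$V$ (again by \cref{clm:omegak}), so the argument deriving (OUTv) applies verbatim with $v^\omega$ in place of $v^{\omega+1}$ and yields $v^\omega \odot (g_1 + g_2) = v^\omega \odot g_1 + g_2$ for all $g_1, g_2 \in H$; instantiating at $g_1 = \epsilon$ and $g_2 = \omega \cdot (v \odot \epsilon)$ gives $v^\omega \odot (\omega \cdot (v \odot \epsilon)) = v^\omega \odot \epsilon + \omega \cdot (v \odot \epsilon)$. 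On the other hand, the proof of (OUTh) extends to show $v \odot (g + h^\star) = v \odot g + h^\star$ whenever $h^\star$ is a horizontal group element (i.e., $h^\star = (h^\star)^{\omega+1}$); since $\omega \cdot (v \odot \epsilon)$ is such an element (being an idempotent), an easy induction on $k$ gives $v^k \odot (\omega \cdot (v \odot \epsilon)) = (\omega+k) \cdot (v \odot \epsilon)$, which for suitable $k$ collapses to $\omega \cdot (v \odot \epsilon)$. Equating the two computations yields the second absorption.

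To lift the special case to the general case, apply it to the context $v' \coloneq v \cdot (h \oplus \square) \in V$, which satisfies $v' \odot \epsilon = v \odot h$; this gives $\omega \cdot (v \odot h) = v'^\omega \odot \epsilon$. Using (DISTv) and the fact that $g \mapsto g \oplus \square$ is a morphism (\cref{fact:H_submonoid_V}), we rewrite $v'^\omega = v^\omega \cdot (h \oplus \square)^\omega = v^\omega \cdot (\omega \cdot h \oplus \square)$, so that $v'^\omega \odot \epsilon = v^\omega \odot (\omega \cdot h + \epsilon) = v^\omega \odot (\omega \cdot h)$. A final application of the central (OUTh) established above gives $v^\omega \odot (\omega \cdot h) = v^\omega \odot \epsilon + \omega \cdot h$, which is the desired identity. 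The main obstacle is the reverse absorption in the special case: the two-way computation of $v^\omega \odot (\omega \cdot (v \odot \epsilon))$---via centrality of $v^\omega$ on one side, and iterated extended (OUTh) on the other---is the essential trick.
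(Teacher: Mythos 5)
Your proof is correct and goes through a genuinely different route than the paper's. The paper proves (FLAT) directly by showing both sides equal the common intermediate value $(v w)^\omega \odot \epsilon + \omega \cdot (v \odot h)$ (with $w = h \oplus \square$), each side requiring its own iterated application of (OUTh)/(OUTv). You instead first isolate the clean special case $\omega \cdot (v \odot \epsilon) = v^\omega \odot \epsilon$ and then lift it by the substitution $v' = v \cdot (h \oplus \square)$, expanding $v'^\omega \odot \epsilon$ via (DISTv) and the embedding of $H$ into $V$. This is more modular: the special case is exactly (FLAT) at $h = \epsilon$ and states crisply that horizontal and vertical idempotent powers of the ``same'' element agree after applying $\epsilon$, and the lifting step is routine algebra. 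In fact, your special case applied to $v'$ shows that the paper's intermediate ``common value'' simplifies (via (IDv)) to just $v'^\omega \odot \epsilon$, which the paper computes but never observes. One minor slip worth fixing: the first absorption $v^{\omega+k} \odot \epsilon = v^\omega \odot \epsilon + k \cdot (v \odot \epsilon)$ does not follow from iterating (IDv) --- (IDv) only relates sums of the forests $v^{\omega+i} \odot \epsilon$ to each other and never produces the flat term $k \cdot (v \odot \epsilon)$; what you want is iterating (OUTv) instantiated at the group element $v^\omega$ (legitimate by \cref{clm:omegak}, which gives $v^\omega = (v^\omega)^{\omega+1}$), writing $v^{\omega+k+1} \odot \epsilon = v^{\omega+k} \odot (v \odot \epsilon) = v^{\omega+k} \odot \epsilon + v \odot \epsilon$. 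Everything else checks out.
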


\begin{proof}
  Let $w = h\oplus\square$, which is an element of~$V$.
  Like at the beginning of the proof of \cref{lem:OUTh_OUTv},
  it stands that \(w^{\omega} = \omega\cdot h \oplus\square\),
  and \(v\odot h = (v\cdot w) \odot\epsilon \).

  The goal is to prove that prove that both sides of the equation (FLAT)
  are equal to \((v\cdot w)^{\omega}\odot\epsilon + \omega\cdot (v\odot h)\).

  On one hand, for the right-hand side, we have:
  \begin{align*}
    v^{\omega}\odot\epsilon + \omega\cdot h &= v^{\omega}\odot(\omega\cdot h) \tag{by (OUTh)} \\
    &= (v^{\omega}\cdot w^{\omega}) \odot\epsilon \\
    &= (vw)^{\omega}\odot\epsilon \tag{by (DISTv)} \\
    & =(vw)^{\omega-1}\oplus(v\oplus h)) \tag {because $h = w\oplus\epsilon$}\\
    &= (vw)^{\omega-1}\odot\epsilon + v\odot h. \tag{by (OUTv)}
  \end{align*}

  We used~\cref{clm:omegak} implicitly in the last equality.
  Repeating the process above, for all \(k\in\N\), we can show the equation ($\star$): \( v^{\omega}\odot\epsilon + \omega\cdot h = (v\cdot w)^{\omega-k}\odot\epsilon + k\cdot (v\odot h).\)
  We will apply this for $k \coloneq m$
  where \(m\) is a multiple of the idempotent powers of every element of both \(H\) and \(V\).
  With this value, we also have \(m\cdot (v\odot h) = \omega\cdot (v\odot h)\) and \((v\cdot w)^{\omega-m} = (v\cdot w)^{\omega}\).
  Hence, plugging $k \coloneq m$ into ($\star$), we get: \(v^{\omega}\odot\epsilon + \omega\cdot h = (vw)^{\omega}\odot\epsilon + \omega\cdot v(h)\).

  \medskip

  On the other hand, for the left-hand side, we have:
  \begin{align*}
    \omega\cdot (v\odot h) &=  (v\cdot w)\odot\epsilon + (\omega-1)\cdot (v\odot h) \\
    &= (v\cdot w)\odot((\omega-1)\cdot (v\odot h)) \tag{by (OUTh)}.
  \end{align*}

  Repeating the process, for all \(k\in\N\), \(\omega\cdot (v\odot h) = (v\cdot w)^{k}\odot((\omega-k)\cdot (v\odot h))  \).
  As previously, with \(k\) set to a multiple of the idempotent powers of every element of both \(H\) and \(V\), we have that
  \begin{align*}
    \omega\cdot (v\odot h) &= (v\cdot w)^{\omega}\odot(\omega \cdot (v\odot h)) \\
            &= (v\cdot w)^{\omega}\odot\epsilon + \omega\cdot (v\odot h) \tag{by (OUTh)}
  \end{align*}

  We have proved that both sides of the equation (FLAT) are equal to the desired
  value. This concludes the proof.
\end{proof}

  \subsubsection{Forward Direction}

The first direction to show \cref{thm:almost-comm_zg} is the forward direction: 
every almost-commutative language has a syntactic forest algebra in
this class:

\begin{lemma}\label{lem:almost_comm_satisfies_ZG_words}
  Let \(\LL\) be an almost-commutative language and \((V,H)\) be its syntactic forest algebra.
  We have that \((V,H)\) satisfies the equation (ZG).
\end{lemma}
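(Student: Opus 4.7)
The plan is to reduce the statement to the two atomic building blocks of almost-commutative languages via \cref{clm:closure} (closure of ZG under Boolean operations). By the structural definition of almost-commutative languages, it suffices to show that both regular-commutative and virtually-singleton languages have syntactic forest algebras whose vertical monoid satisfies the equation (ZG); then \cref{clm:closure} extends this to any finite Boolean combination.

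For the regular-commutative case, the argument is that the vertical monoid is in fact commutative, which trivially implies (ZG). If $L$ depends only on the Parikh image, then any two $\Sigma$-contexts $c_1, c_2$ with the same Parikh image must be syntactically equivalent, because for any $R, F$ the forests $R(c_1(F))$ and $R(c_2(F))$ share the same Parikh image, hence are simultaneously in $L$ or not. Since context composition is additive on Parikh images, it is commutative modulo syntactic equivalence, so the vertical monoid is commutative.

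The main technical step is the virtually-singleton case, where $L$ is the set of $\Sigma$-forests whose projection to a subalphabet $\Sigma'$ equals a fixed $\Sigma'$-forest $F'$. The key observation is that projection commutes with context application: $\pi_{\Sigma'}(R(c(F))) = \pi_{\Sigma'}(R)(\pi_{\Sigma'}(c)(\pi_{\Sigma'}(F)))$, so the syntactic equivalence on $\Sigma$-contexts is entirely controlled by the behavior of the $\Sigma'$-projections with respect to the singleton target $F'$. Fixing $v \in V$ with representative context $c$, the computation of $v^{\omega+1}$ splits into two cases. If $\pi_{\Sigma'}(c) = \square$, then $c$ leaves the $\Sigma'$-projection unchanged, hence is syntactically equivalent to the identity context $\square$, so $v^{\omega+1}$ is the neutral element of $V$ and trivially commutes with every $w \in V$. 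Otherwise $|\pi_{\Sigma'}(c^k)|_{\Sigma'} = k \cdot |\pi_{\Sigma'}(c)|_{\Sigma'}$ grows unboundedly in $k$, so for sufficiently large $k$ the context $c^k$ has strictly more $\Sigma'$-nodes than $|F'|$ and cannot be completed into $F'$ by any $R, F$. All such ``useless'' contexts are syntactically equivalent (they always fail membership to $L$), and the corresponding class is an absorbing zero of $V$, since composing a useless context on either side with anything remains useless. Hence $v^{\omega+1} = 0$ commutes with every $w \in V$.

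The main obstacle lies in the virtually-singleton case: one must carefully verify that useless contexts indeed form a single syntactic equivalence class and that this class acts as a two-sided absorbing zero in the vertical monoid, relying on the projection-commutation identity above to reduce everything to the singleton target $F'$. Once this is settled, the (ZG) equation follows directly in each of the two subcases, and the extension to arbitrary Boolean combinations is immediate from \cref{clm:closure}.
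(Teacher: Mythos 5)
Your proposal is correct and follows essentially the same route as the paper: reduce to regular-commutative and virtually-singleton via closure of ZG under Boolean operations, show the vertical monoid is commutative in the first case, and show that group elements collapse to the identity or an absorbing zero in the second. One minor omission: the step from ``$c^k$ is useless for $k$ large'' to ``$v^{\omega+1}$ is the zero'' needs a word on idempotence, namely that $v^{\omega+1}=v^{k\omega+1}$ for any $k\geq 1$, so one can choose $k$ large enough that $v^{k\omega+1}$ has the useless context $c^{n'}$ as a factor.
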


For this we will need to use the fact that, when the syntactic forest algebras
of languages satisfy the equation (ZG), then the same is true of their closure
under Boolean operations. This is in fact true for arbitrary equations over
monoids: we have shown this result as 
\cref{clm:closure} in
Appendix~\ref{sec:algebra}.

We are now ready to show \cref{lem:almost_comm_satisfies_ZG_words}:

\begin{proof}[Proof of \cref{lem:almost_comm_satisfies_ZG_words}]
  It suffices to show the claim when $L$ is regular-commutative or
  virtually-singleton, as we can then conclude by \cref{clm:closure}.
  Let $\mu$ be the syntactic morphism of~$L$.

  For the first case, assume that \(\LL\) is a regular-commutative language. In this case, we
  will show that the monoid $V$ is in fact commutative, which implies the
  equation (ZG).

  Let \(v,w\in V\) be two elements of the vertical monoid: we want to show that
  $vw = wv$, where the product notation denotes the composition law of~$V$
  (which is also written $\odotVV$). Let \(C,D\) be
  two $\Sigma$-contexts mapped respectively to \(v\) and \(w\) by \(\mu\):
  this uses the fact that the syntactic morphism is surjective.
  We want to show that \(\mu(C(D)) = \mu(D(C))\): by the minimality
  condition on the syntactic forest algebra, we can do this by establishing
  that,
  For every context $E$ and forest $F$, we have 
  $E(C(D(F))) \in L$ iff $E(D(C(F))) \in L$.
  Now, the two forests $E(C(D(F)))$ and $E(D(C(F)))$ have the same Parikh image,
  so by definition of regular-commutative languages, either both belong to~$L$
  or none belong to~$L$. Thus, by minimality, we have $\mu(C(D)) = \mu(D(C))$,
  so $vw = wv$. Thus, we have established that $V$ is commutative; in particular
  it satisfies the equation (ZG).

  For the second case, assume that \(\LL\) is a virtually-singleton language. In this case, we
  will show that $V$ is in fact \emph{nilpotent}, by which we mean that it has
  at most two idempotent elements (i.e., elements $x$ such that $xx = x$): the
  neutral element, and (potentially) a \emph{zero}, meaning an element $z$ such
  that $xz = zx = z$ for all elements~$x$.
  Note that when a monoid contains a zero then it is unique.
  Formally, nilpotent monoids are
  those obtained by adding a neutral element to a nilpotent semigroup: cf
  \cite{amarilli2021dynamic,Straubing82}. In nilpotent monoids, group elements
  (of the form $x^{\omega+1}$) are either the neutral element or the zero, and
  so they are central (i.e., they commute with all other elements), so that
  (ZG) is satisfied.

  In the definition of $L$, call \(\Sigma'\) be the subalphabet over which we
  are projecting, and call 
  \(F'\) be the $\Sigma'$-tree to which we must be equal after projection (see
  \cref{def:virtsing}).
  Let \(v\in V\): we want to show that $v^{\omega+1}$ is either the neutral
  element or a zero. By surjectivity, let $C$ be a context mapped to \(v\) by
  \(\mu\). We distinguish two cases depending on whether $C$ contains a letter
  in~$\Sigma'$ or not.

  If \(C\) has no letter in \(\Sigma'\), then it is clear that it is equivalent
  to the neutral context (i.e., the forest with a singleton root labeled
  $\square$).
  Indeed, for any context $E$ and forest $F$, it is clear that
  $E(C(F)) \in L$ iff $E(F) \in L$, so that by minimality we must have that
  $\mu(C)$ is equal to the image by~$\mu$ of the empty context, which is the
  neutral element of~$V$ by definition of a morphism.

  If \(C\) has a letter in \(\Sigma'\), then let \(n'\) be strictly greater than
  the size of \(F'\).
  Then for every context \(E\) and forest \(F\), we claim that \(E(C^{n'}(F))\) is
  not in~$L$,
  where $C^{n'}$ denotes repeated application of the context $C$. Indeed, 
  such a forest contains at least $n'$ letters
  of~$\Sigma'$, so its projection to~$\Sigma'$ cannot be~$F'$. Let us show that
  $\mu(C^{n'})$ is a zero. For any element $v' \in V$, letting $C'$ be a context
  such that $\mu(C') = v'$ by minimality, we have $\mu(C^{n'}) v' = \mu(C^{n'})
  \mu(C') = \mu(C^{n'} C')$.
  But, for every context $E$ and forest $F$, it is again
  the case that $E (C^{n'}(C'(F)))$ is not in the language, so by minimality $C^{n'}$
  and $C^{n'} C'$ are mapped to the same element by~$\mu$, and thus $\mu(C^{n'}) v' =
  \mu(C^{n'})$. Symmetrically, one can show that $v' \mu(C^{n'}) = \mu(C^{n'})$. Thus,
  $\mu(C^{n'})$ is a zero in~$V$; and zeroes are unique, so it is the zero of~$V$.
  We have shown that $v^{n'}$ is the zero of~$V$.
  Let $k$ be such that $k \omega > n'$. Then $v^{\omega+1} = v^{k \omega + 1}$,
  thanks to idempotence. But $v^{k \omega + 1}$ contains $v^{n'}$ as a factor, so
  as it is the zero of~$V$, so is $v^{k \omega + 1}$, hence so is
  $v^{\omega+1}$. This implies that $v^{\omega+1}$ is central, and concludes.
\end{proof}

\subsubsection{Backward Direction}
The challenging direction to prove \cref{thm:almost-comm_zg} is the backward
direction:
let $L$ be a regular language whose syntactic forest algebra is in ZG, and let
us show that $L$ is almost-commutative.

The proofs in this section will use equations on ZG forest algebras shown in
Appendix~\ref{apx:equations}: of course we will use
(ZG) from \cref{def:zg} on the vertical monoid (by the definition of ZG forest
algebras),
but we will also use the ZG equation on the horizontal monoid (i.e., equation
(ZGh) from \cref{clm:zgh}),
and we will use (DISTv) and (DISTh)
from \cref{clm:distomega},
(OUTh) and (OUTv)
from \cref{lem:OUTh_OUTv},
(IDv)
from \cref{lem:IDv},
and (FLAT) from \cref{lem:flat_eq}.

We will show that, when a $\Sigma$-forest is mapped to an idempotent by
a morphism $\mu$ into a ZG forest algebra, then we can put the forest
into a normal form that has the same image by the morphism.

\begin{lemma}\label{lem:normal_form_idempotent}
  Let $\mu$ be a morphism from $\Sigma$-forests and $\Sigma$-contexts to a
  forest algebra $(V,H)$ in (ZG), and let $m$ be the idempotent power of~$V$.

  Let \(F\) be a forest mapped to an idempotent of \(H\) by~$\mu$, and let $a_1,
  \ldots, a_k$ be the distinct letters of~$\Sigma$ that occur in~$F$.
  Define the forest
  \[ \Xi_F = a_{1}^{n}+ \cdots + a_{k}^{m} \]
  where $a_i^m$ denotes the line tree with $m$ nodes labeled $a_i$, each node
  having exactly one child (except the last).
  Then we have:
  \[ \mu(F) = \mu(\Xi_F).  \]
\end{lemma}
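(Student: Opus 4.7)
The plan is to prove the stronger claim that $\omega \cdot \mu(F) = \mu(\Xi_F)$ for \emph{every} $\Sigma$-forest $F$, not only those with idempotent image. The lemma then follows at once: if $\mu(F)$ is idempotent in $H$, then $\mu(F) = m \cdot \mu(F) = \omega \cdot \mu(F) = \mu(\Xi_F)$.

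Before starting the induction I record the key behaviour of the candidate summands. Writing $v_a \coloneq \mu(a_\square) \in V$, one has $\mu(a^m) = v_a^m \odot \epsilon = v_a^\omega \odot \epsilon$ in $H$, using $v_a^m = v_a^\omega$ since $m$ is the idempotent power of $V$. Each such element $v_a^\omega \odot \epsilon$ is idempotent by (IDv); it therefore equals $(\omega+1)$ times itself, and (ZGh) then makes it central in~$H$. Consequently, given any finite subset $S$ of $\Sigma$, the sum $\Phi(S) \coloneq \sum_{a \in S} v_a^\omega \odot \epsilon$ is independent of the enumeration order of~$S$, is itself idempotent, and satisfies $\Phi(S_1) + \Phi(S_2) = \Phi(S_1 \cup S_2)$. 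In particular $\mu(\Xi_F) = \Phi(L(F))$, where $L(F)$ denotes the set of letters appearing in~$F$.

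The induction on $F$ then splits into three cases. If $F$ is empty the identity reduces to $\epsilon = \epsilon$. If $F$ is a single tree with root labelled $a$ and child forest $F'$, I apply (FLAT) to $\mu(F) = v_a \odot \mu(F')$ to obtain
\[
  \omega \cdot \mu(F) \;=\; v_a^\omega \odot \epsilon \;+\; \omega \cdot \mu(F');
\]
the induction hypothesis replaces the second summand by $\Phi(L(F'))$, and since $L(F) = \{a\} \cup L(F')$ the union identity for $\Phi$ delivers $\Phi(L(F)) = \mu(\Xi_F)$. A single leaf labelled $a$ is the special case $F' = \epsilon$. If $F = F_1 + F_2$ with both parts non-empty, (DISTh) gives $\omega \cdot \mu(F) = \omega \cdot \mu(F_1) + \omega \cdot \mu(F_2)$, and the induction hypothesis together with $\Phi(L(F_1)) + \Phi(L(F_2)) = \Phi(L(F_1) \cup L(F_2))$ concludes.

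The only delicate point is the manipulation of the sums $\Phi(S)$ inside $H$: one must be certain that permutation, duplication, and merging of summands introduce no side effects. All three are underwritten by the single observation that the elements $v_a^\omega \odot \epsilon$ are simultaneously idempotent (by (IDv)) and central (by (ZGh)), so they generate a commutative band inside $H$ on which addition behaves set-theoretically. Once this is secured, the induction is a mechanical unfolding using (FLAT) at each root of $F$ and (DISTh) at each horizontal split.
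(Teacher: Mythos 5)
Your proof is correct and follows essentially the same route as the paper: both strengthen to $\omega \cdot \mu(F) = \mu(\Xi_F)$ for arbitrary forests and induct on the structure of~$F$, handling the empty forest, a horizontal split via (DISTh), and a root-labelled tree via (FLAT), then using (IDv) and (ZGh) to rearrange and deduplicate the resulting sum of idempotents. The only difference is cosmetic: you package the idempotency-plus-centrality bookkeeping into the helper $\Phi(S)$, whereas the paper unfolds it in place.
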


Intuitively, this lemma tells us that for morphisms to ZG forest algebras, in
particular for the syntactic morphism, all forests mapped to an idempotent are
indistinguishable if they contain the same set of letters. Let us prove the
lemma:

\begin{proof}
  Let us show by induction on \(F\) that, for every forest~$F$, we have:
  \[\omega\cdot \mu(F)= \mu(\Xi_F).\]
  Showing this suffices to conclude, because the lemma assumes that $F$ is
  mapped to an idempotent, so that we have \(\mu(F) = \omega\cdot \mu(F)\).

  If \(F\) is empty then \(F=\Xi_{F}\) and thus \(\omega\cdot \mu(F) = \mu(F)=
  \mu(\Xi_{F})\).

  If \(F = F_{1}+ F_{2}\), then let \(b_{1},\ldots,b_{k_{1}}\) and \(c_{1},\ldots,c_{k_{2}}\) be the letters in \(F_{1}\) and \(F_{2}\).
  We have:
  \begin{align*}
    \omega\cdot \mu(F) &= \omega\cdot \mu(F_{1}) + \omega\cdot \mu(F_{2})  \tag{by (DISTh)} \\
            &= \mu(\Xi_{F_{1}}) + \mu(\Xi_{F_{2}}) \tag{by induction hypothesis} \\
            &= \mu(b_{1}^{m}) + \cdots + \mu(b_{k_{1}}^{m}) + \mu(c_{1}^{m}) + \cdots + \mu(c_{k_{2}}^{m}).
  \end{align*}
  Each of the term in the sum is an idempotent of \(V\), so by (IDv) in~\cref{lem:IDv} it is also an idempotent of \(H\), hence we can apply (ZGh)
  to commute them.
  Hence we can put them in any order and use idempotency to obtain only one copy of each letter.
  We conclude because the set of letters occurring in \(F\) is clearly the union
  of the letters of those occurring in \(F_{1}\) and of those occurring in \(F_{2}\).

  If \(F= a_{\square}(G)\), then let \(b_{1},\cdots,b_{k}\) be the letters in \(G\).
  We have:
  \begin{align*}
    \omega\cdot \mu(F) &= \omega \cdot  (\mu(a_{\square})\odot\mu(G)) \\
            &= \mu(a_{\square})^{\omega}\odot\epsilon + \omega\cdot \mu(G) \tag{by (FLAT)} \\
            &= \mu(a^{m}) + \mu(\Xi_{G}) \tag{by induction hypothesis} \\
            &=  \mu(a^{m}) +  \mu(b_{1}^{m}) +\cdots +   \mu(b_{k}^{m})
  \end{align*}
  We conclude exactly like in the previous case.
\end{proof}

Let $(V,H)$ be a forest algebra, let $\mu$ be a morphism to~$(V,H)$, and let $F$
be a forest.
We say that a context \(C\) is an \emph{idempotent factor} of \(F\) if
\begin{itemize}
  \item it is non-empty,
  \item \(\mu(C)\) is an idempotent of \(V\),
  \item there exists a context \(D\) and a forest \(G\) such that \(F = D(C(G))\).
\end{itemize}

We will now show that every sufficiently large forest must contain an idempotent
factor:

\begin{fact}\label{fact:idempotent_in_tree}
  Let $\mu$ be a morphism to a forest algebra $(V,H)$ and let $F$ be a forest.
  If we have $|F| > |V|^{5|V|^{6|V|}}$,
  then it is possible to find an idempotent factor in \(F\).
\end{fact}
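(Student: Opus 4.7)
The plan is to perform a case analysis on the shape of $F$: the size bound will force either a deep chain or a wide sibling sequence in $F$, and in either case a Ramsey-type argument furnishes an idempotent factor. Fix a Ramsey threshold $R$ of the form $|V|^{O(|V|)}$, large enough that any coloring of the pairs of an $R$-element set by at most $|V|$ colors admits a monochromatic triple $i<j<k$. Augmenting $F$ with a virtual root yields a tree $\hat{F}$, and if both the maximum depth of $\hat{F}$ and the maximum number of children of every node of $\hat{F}$ were less than $R$, then $|\hat{F}|$ would be bounded by roughly $R^R$, which is well below $|V|^{5|V|^{6|V|}}$. Hence one of the following must hold: $F$ contains a root-to-leaf path of length at least $R$, or some node of $\hat{F}$ (possibly the virtual root, corresponding to $F$ having many roots) has at least $R$ consecutive children.

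In the \emph{deep} case, fix a path $v_0, v_1, \ldots, v_R$ in $F$ with $v_{i+1}$ a child of $v_i$, and for $i<j$ let $c_{v_j}^{v_i}$ denote the context obtained from the subtree $t_{v_i}$ by replacing the subtree $t_{v_j}$ with a single $\square$-leaf. A direct check gives $c_{v_k}^{v_i} = c_{v_j}^{v_i} \odotVV c_{v_k}^{v_j}$ for $i<j<k$, so the elements $b_{ij} \coloneq \mu(c_{v_j}^{v_i}) \in V$ satisfy $b_{ik} = b_{ij} \odotVV b_{jk}$. Coloring each pair $\{i,j\}$ by $b_{ij}$ and applying Ramsey, we obtain $i<j<k$ with $b_{ij}=b_{jk}=b_{ik}$, whence $b_{ij} \odotVV b_{ij} = b_{ik} = b_{ij}$, so $b_{ij}$ is idempotent in $V$. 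The context $c_{v_j}^{v_i}$ is thus the desired idempotent factor: it is non-empty since $i<j$, and $F = D(c_{v_j}^{v_i}(t_{v_j}))$ where $D$ is the context obtained from $F$ by replacing $t_{v_i}$ with a $\square$-leaf.

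In the \emph{wide} case, let $u_1,\ldots,u_w$ (with $w \ge R$) be consecutive children of the relevant node of $\hat{F}$, and set $f_i \coloneq \mu(t_{u_i}) \in H$ and $h_{ij} \coloneq f_i + \cdots + f_{j-1}$ for $i<j$. Since $h_{ik} = h_{ij} + h_{jk}$, the same Ramsey argument---now with at most $|H| \le |V|$ colors by \cref{fact:H_submonoid_V}---yields $i<j$ with $h_{ij}$ idempotent in $H$. The crux is to convert this horizontal idempotent into a vertical one: consider the context $C \coloneq t_{u_i} + \cdots + t_{u_{j-1}} + \square$, whose image under $\mu$ is $h_{ij} \oplusHV \square$; since the map $h \mapsto h \oplusHV \square$ is the injective monoid morphism from $H$ to $V$ of \cref{fact:H_submonoid_V}, idempotence of $h_{ij}$ in $H$ transfers to idempotence of $\mu(C)$ in $V$. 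The context $C$ is non-empty, and writing $G$ for the forest of remaining subtrees of the relevant parent (after position $j-1$) and $D$ for the surrounding context of $F$ in which the children from position $i$ onward are replaced by a single $\square$-leaf, we obtain $F = D(C(G))$, confirming that $C$ is an idempotent factor.

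The main subtlety lies in the wide case: horizontal Ramsey naturally produces an idempotent in $H$ rather than in $V$, and the essential trick is the lifting via the embedding $h \mapsto h \oplusHV \square$, which transfers idempotents from $H$ to $V$. Once this lifting is available, the remainder is a routine Ramsey argument together with a straightforward size computation showing that a shallow and narrow tree cannot meet the stated size bound.
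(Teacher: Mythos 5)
Your proof is correct and structurally parallels the paper's: the same deep/wide dichotomy, the same size computation in the remaining case, and the same overall conclusion. The main difference is in the Ramsey-type tool. The paper invokes a sharpened monoid-Ramsey result of Jecker (its \cref{lem:idemword}), which says directly that any word of length at least $|M|^{5|M|}$ over a finite monoid $M$ has an infix evaluating to an idempotent; it applies this once to a vertical product decomposition (deep case) and once to a horizontal one via the contexts $G_i + \square$ (wide case). You instead color pairs $\{i,j\}$ by the image of the factor between positions $i$ and~$j$ and apply the classical multicolor Ramsey theorem to obtain a monochromatic triangle $i<j<k$, from which idempotence of $b_{ij}$ follows via $b_{ik}=b_{ij}\cdot b_{jk}$. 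This is the textbook proof of the idempotent-factor lemma, and it works fine here because the paper's exponent $5|V|^{6|V|}$ is generous enough to absorb the slightly worse $R(3;|V|)\le |V|^{O(|V|)}$ bound — your brief size check is correct, though you should make sure to handle the degenerate case $|V|=1$ separately (there Ramsey gives nothing useful, but the claim is trivial since every nonempty context maps to the identity, which is idempotent). The other small divergence is in the wide case, where you explicitly lift an idempotent of $H$ into $V$ via the embedding $h\mapsto h\oplusHV\square$ of \cref{fact:H_submonoid_V}; the paper instead works directly with the contexts $G_i+\square\in\SigmaV$, which lands in the same image of $H$ inside $V$ without naming the lift. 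Your version is arguably more transparent about why an idempotent of $H$ suffices. The trade-off: the paper's route via Jecker's lemma yields the cleanest matching constant $|V|^{5|V|}$, while yours is more self-contained (only classical Ramsey) at the cost of a looser intermediate bound that still fits under the stated threshold.
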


This is a more complicated analogue of finding idempotent factors in
long words (see \cite{jecker_ramsey_2021} for fine bounds on the word length):

\begin{lemma}[\cite{jecker_ramsey_2021}, Theorem 1]
  \label{lem:idemword}
  Let \(M\) be a finite monoid and let $\mu$ be a morphism to~$M$.
  For any word \(w\in M^{q}\) with \(q\geq |M|^{5|M|} \),
there is a subword of \(w\) that is mapped to an idempotent.
\end{lemma}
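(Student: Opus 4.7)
The plan is to establish the existence of an idempotent factor via a Ramsey-type coloring argument, and then to argue that the bound $|M|^{5|M|}$ requires a refined inductive analysis on the size of~$M$.

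First, I would associate to the word $w = m_1 \cdots m_q$ the coloring of pairs $(i,j)$ with $0 \leq i < j \leq q$ defined by the partial product $c(i,j) \coloneq m_{i+1} \cdot m_{i+2} \cdots m_j \in M$. The key observation, which follows immediately from associativity, is that for any $i < j < k$ we have $c(i,j) \cdot c(j,k) = c(i,k)$. Hence, if we can find three indices $i < j < k$ with $c(i,j) = c(j,k) = c(i,k) = x$, then $x \cdot x = x$, so $x$ is an idempotent of~$M$, and the factor $m_{i+1} \cdots m_j$ is mapped to~$x$. This reduces the whole problem to finding a monochromatic ``triangle'' for the coloring $c$.

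Second, the existence of such a triangle follows from Ramsey's theorem on $2$-element subsets of $\{0, 1, \ldots, q\}$ colored with at most $|M|$ colors. A direct invocation of the classical Ramsey bounds yields a tower-type estimate, which is far weaker than the claimed single-exponential bound $|M|^{5|M|}$. To achieve the sharper bound, I would follow Jecker's refined approach, proceeding by induction on $|M|$ via a case analysis based on how frequently each element of~$M$ appears as a value $c(0,i)$ along the prefixes of~$w$. Roughly: either some element of~$M$ occurs rarely, in which case one can isolate a long factor of~$w$ whose prefix-products avoid that element, reducing to a strictly smaller submonoid and applying the induction hypothesis; or every element of~$M$ occurs often as a prefix-product, in which case a pigeonhole argument on repeated prefix-values directly produces the required three matching indices $i < j < k$ without invoking Ramsey bounds.

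The main obstacle is tuning the case split so that the resulting recurrence on~$|M|$ closes exactly at the exponent~$5$: one needs to balance how short the ``rare-letter'' gap in the first case is allowed to be against how much room the second case demands for the pigeonhole to fire. The careful arithmetic of this trade-off is the technical core of Jecker's argument, and it is what replaces the much cruder tower bound given by a black-box Ramsey invocation.
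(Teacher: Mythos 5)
This lemma is not proved in the paper at all: it is imported verbatim from Jecker (Theorem~1 of the cited reference) and used as a black box, so there is no in-paper proof to compare against. Judged on its own terms, your first step is the standard and correct reduction: color each pair $0 \le i < j \le q$ by the partial product $c(i,j) = \mu(m_{i+1}\cdots m_j)$, observe that $c(i,j)\,c(j,k) = c(i,k)$, and note that a monochromatic triangle $i<j<k$ forces $c(i,j)$ to be idempotent, realized by the non-empty consecutive factor $m_{i+1}\cdots m_j$.

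Where the proposal goes astray is the second half. The claim that a black-box invocation of Ramsey's theorem yields only a tower-type estimate is wrong for this configuration: you only need a monochromatic triangle, and the multicolor Ramsey number for triangles satisfies the recurrence $R(3;n) \le n\,\bigl(R(3;n-1)-1\bigr)+2$, hence $R(3;n) \le \lceil e\, n!\rceil + 1 \le n^{5n}$ for all $n \ge 2$ (the case $|M|=1$ is trivial, the unique element being idempotent). Applying this to the $q+1$ vertices $0,\ldots,q$ with $q \ge |M|^{5|M|}$, your first step alone already proves the lemma exactly as stated, with the stated bound. Consequently, the deferral to ``Jecker's refined inductive analysis'' is both unnecessary here and, as written, not a proof: you describe a case split on frequencies of prefix-products without carrying out either case, and the pigeonhole in your ``frequent'' case does not by itself produce three indices with all three partial products equal (two equal prefix products $p_i=p_j$ only give an element absorbed on the left by $p_i$, not an idempotent). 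Jecker's finer machinery is what one needs for the genuinely stronger results of that paper (factorizations into many consecutive blocks all mapping to the same idempotent, with good dependence on the number of blocks), not for the single-idempotent-factor instance quoted here. In short: keep your first paragraph, replace the rest by the standard bound on $R(3;|M|)$, and you have a complete, self-contained proof that is simpler than the route you sketch.
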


Intuitively, we will use this lemma to prove \cref{fact:idempotent_in_tree}
because a sufficiently large forest is either
``wide'' (i.e., contains a large sibling set) or ``deep'' (i.e., contains a long
path). Let us prove our claim:

\begin{proof}[Proof of \cref{fact:idempotent_in_tree}]
For any forest $F$, one of the following three cases must occur:
\begin{itemize}
  \item There is a set of trees that are siblings of size greater than \(|V|^{5|V|}\) (that is to say that there
        is a node with more than \(|V|^{5|V|}\) children, or there are more than \(|V|^{5|V|}\) roots).
        In this case, we denote them by \(G_{1},\ldots,G_{q}\), with $q > |V|^{5|V|}$.
        Let \(C_{i}\) be the context \(G_{i}+\square\).
        We construct a word \(w\in V^{q}\) by setting \(w_{i} = \mu(C_{i})\).
        Because \(q\) is big enough, by \cref{lem:idemword} on the horizontal
        monoid,
        this word contains an idempotent \(\mu(C_{i}\cdots C_{j})\).
        Let \(D\) be the context that consists of \(F\) with \(G_{i}+\cdots+ G_{j}\) identified in a single node \(\square\),
        and let \(C = C_{i}(\cdots  (C_{j}))\).
        Thus \(F= D(C(\epsilon))\) and \(\mu(C)\) is an idempotent.
  \item There is a path from the root to a leaf of length greater than \(|V|^{5|V|}\).
        In this case, let 
        $u_1, \ldots, u_q$ be the nodes along this path and let
        \(a_{1},\ldots,a_{q}\) be their respective labels.
        For \(1\leq i \leq q\), let \(L_{i}\) (resp., \(R_{i}\)) be the forest with
        the left (resp., right) siblings of \(u_{i+1}\), and
        define \(C_{i} = L_{i}+a_{i}(\square)+R_{i}\).
        With these definitions, we have that \(F = C_{1}(C_{2}(\cdots (C_{q}(\epsilon))))\).
        We construct a word \(w\in V^{q}\) by setting \(w_{i} = \mu(C_{i})\).
        Because \(q\) is big enough,
        by \cref{lem:idemword} on the vertical monoid,
        we can find an idempotent \(\mu(C_{i}\cdots C_{j})\).
        Let \(D \coloneq C_{1}(\cdots (C_{i-1}))\), let
        \(C \coloneq C_{i}(\cdots (C_{j}))\), and let
        \(G\coloneq C_{j+1}(\cdots (C_{q}(\epsilon)))\).
        We have \(F = D(C(G))\) and \(\mu(C)\) is an idempotent.
  \item Every node has less than \(|V|^{5|V|}\) children and every path from the root to a leaf is of length less than \(|V|^{5|V|}\).
        In this case, \(F\) has size less than
        \(|V|^{5|V|\cdot |V|^{5|V|}}\), so
        such forests are excluded by the bound assumed in the result statement.
        \qedhere
\end{itemize}
\end{proof}

We are now ready to show that, for morphisms to a ZG forest algebra, every
forest~$F$ can be put into a \emph{normal form} which has the same image by the
morphism as~$F$. Intuitively, the normal form of a forest
consists of a forest on the alphabet of the \emph{rare letters}, i.e., those
letters with a ``small'' number of occurrences in~$F$, along with line trees
counting the number of occurrences of the \emph{frequent letters} modulo the
idempotent power of the vertical monoid. Formally:

\begin{figure}
  \begin{subfigure}[t]{0.38\linewidth}
    \centering
    \includegraphics[scale=0.75]{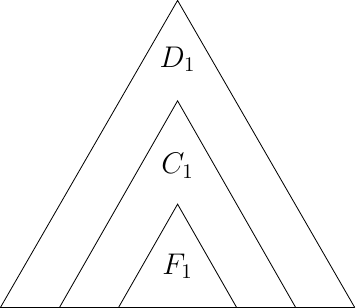}
    \caption{Identification of an idempotent $C_1$.}
    \label{subfig:fishing:find}
  \end{subfigure}
  {\unskip\ \vrule\ }
  \begin{subfigure}[t]{0.52\linewidth}
    \centering
    \includegraphics[scale=0.75]{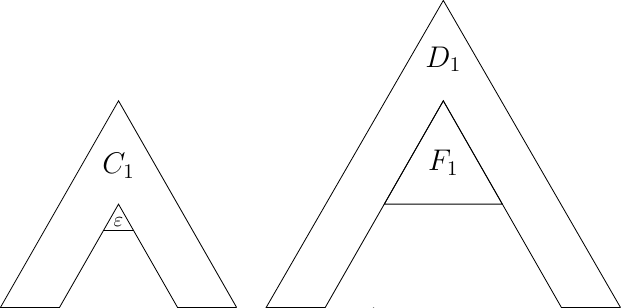}
    \caption{Extraction of~$C_1$ with (OUTh) and (OUTv).}
    \label{subfig:fishing:extract}
  \end{subfigure}

  \hrule

  \begin{subfigure}[t]{\linewidth}
  \vspace{\baselineskip}
    \centering
    \includegraphics[scale=0.75]{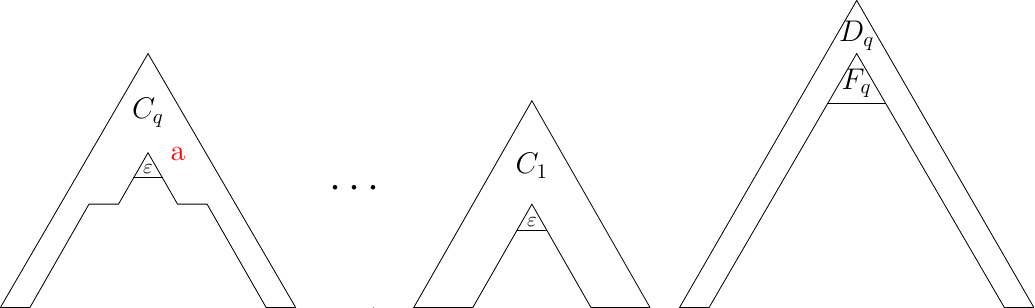}
    \caption{Repetition until an idempotent with a letter $a$ in \(\SigmaF\) is found.}
    \label{subfig:fishing:frequent}
  \end{subfigure}

  \hrule

  \begin{subfigure}[t]{0.45\linewidth}
  \vspace{\baselineskip}
    \centering
    \includegraphics[scale=0.75]{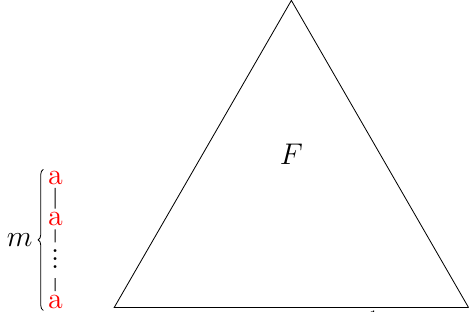}
    \caption{Extraction of many \(a\)'s with \cref{lem:normal_form_idempotent}, and reconstruction of \(F\).}
    \label{subfig:fishing:separate}
  \end{subfigure}
  {\unskip\ \vrule\ }
  \begin{subfigure}[t]{0.45\linewidth}
  \vspace{\baselineskip}
    \centering
    \includegraphics[scale=0.75]{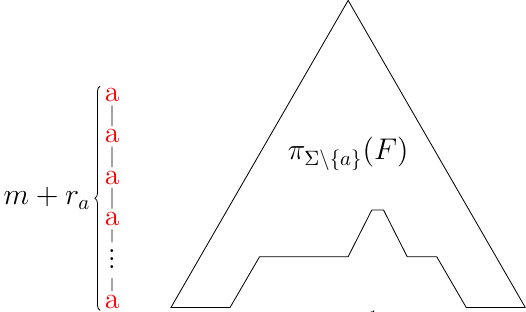}
    \caption{Extraction of every \(a\) from \(F\) with (OUTh) and (OUTv).}
    \label{subfig:fishing:fishing}
  \end{subfigure}

  \hrule

  \begin{subfigure}[t]{\linewidth}
  \vspace{\baselineskip}
    \centering
    \includegraphics[scale=0.75]{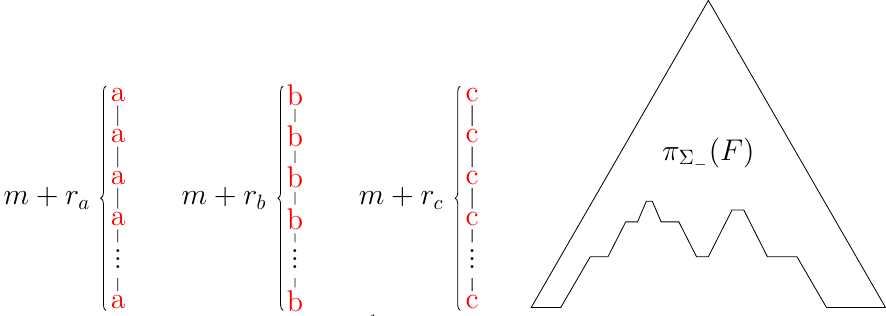}
    \caption{Extraction of every letter in \(\SigmaF\) from \(f\).}
    \label{subfig:fishing:induct}
  \end{subfigure}

  \caption{Proof of \cref{lem:normal_form_ZG}}
  \label{fig:fishing}
\end{figure}

\begin{lemma}\label{lem:normal_form_ZG}
  Write the letters of the alphabet $\Sigma$ as $a_1, \ldots, a_k$.
  Let $(V,H)$ be a forest algebra in ZG, and let
  $m$ be the idempotent power
  of~$V$.
  Let \(N =|V|^{5|V|^{6|V|}} \).
  Let $\mu$ be a morphism into~$(V,H)$, and 
  let \(F\) be a forest.
  Recalling the definition of the Parikh image (\cref{def:regcom}), let
  \((q_{a_1},\ldots,q_{a_k})\) be the Parikh image of \(F\), and let
  \(r_{a_i}=q_{a_i} \text{ mod } m\).
  We partition \(\Sigma=\SigmaR\cup \SigmaF\) with
  \begin{align*}
    \SigmaR &= \{ a\ |\ q_{a}< N\}, \text{~called the \emph{rare letters}}; \\
    \SigmaF &= \{ a\ |\ q_{a}\geq N\}, \text{~called the \emph{frequent letters}}.
  \end{align*}

  The image by~$\mu$ of~$F$ is then equal to that of the following forest:
  \[ \Phi_F \coloneq \sum_{a\in \SigmaF} a^{m+r_a} + \pi_{\SigmaR}(F) \]
  where $a^{m+r_a}$ denotes a line tree with $m+r_a$ nodes labeled~$a$, and
  $\pi_{\SigmaR}$ denotes the projection to the subalphabet~$\SigmaR$ as in
  \cref{def:virtsing}.
\end{lemma}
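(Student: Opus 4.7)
The plan is to prove the lemma by induction on the cardinality $|\SigmaF|$ of the set of frequent letters. The base case $\SigmaF = \emptyset$ is immediate since then every letter of $F$ is rare, whence $F = \pi_{\SigmaR}(F) = \Phi_F$. For the inductive step, I would pick $a \in \SigmaF$ and reduce the claim to
\[ \mu(F) = \mu(a^{m+r_a}) + \mu(F_a), \qquad F_a \coloneq \pi_{\Sigma \setminus \{a\}}(F). \]
The forest $F_a$ has one fewer frequent letter and preserves both the residues $r_b$ for $b \neq a$ and the rare-letter projection $\pi_{\SigmaR}(F)$, so the induction hypothesis applied to $F_a$ gives $\mu(F_a) = \mu(\Phi_{F_a})$; combining this with the reduction above and unfolding the definitions shows $\mu(F) = \mu(\Phi_F)$.

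The heart of the argument is the reduction, which I would establish via the ``fishing'' procedure illustrated in \cref{fig:fishing}. It relies on two technical moves based on the equations of \cref{apx:equations}. The \emph{idempotent-extraction move} asserts that whenever $F = D(C(G))$ with $\mu(C)$ idempotent in $V$, we have $\mu(F) = \mu(D(G)) + \mu(C(\epsilon))$; this follows because $\mu(C(\epsilon))$ is then idempotent in $H$ by (IDv), so it satisfies the hypothesis of (OUTh) applied to the auxiliary context corresponding to $\mu(D) \odot (\square + \mu(G))$. The \emph{single-$a$ move} asserts that for any factorization $F^{**} = E_x \odot a_\square \odot U_x$ singling out an $a$-node $x$, and assuming a summand $\mu(a^m)$ has already been extracted, we have
\[ \mu(F^{**}) + \mu(a^m) = \mu(E_x \odot U_x) + \mu(a^{m+1}). \]
I would prove this identity by first rewriting $\mu(a^m) + \mu(U_x)$ as $\mu(a_\square)^{\omega} \odot \mu(U_x)$ via (OUTv), then combining with $\mu(a_\square)$ above $U_x$ to form $\mu(a_\square)^{\omega+1}$, using centrality (from (ZG)) to pull this group element past $\mu(E_x)$, and finally extracting it via (OUTv).

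The fishing procedure then proceeds in three phases. In Phase 1, I would iteratively apply the idempotent-extraction move to $a$-free idempotent factors of the current forest; since $a$ is frequent (so $|F|_a \geq N$) and the $a$-count is preserved throughout this phase, the residual forest always has at least $N = |V|^{5|V|^{6|V|}}$ nodes, so by \cref{fact:idempotent_in_tree} an idempotent factor always exists, forcing Phase 1 to terminate only when the next idempotent factor $C^*$ contains $a$. Phase 2 extracts $C^*$ and applies \cref{lem:normal_form_idempotent} to replace $\mu(C^*(\epsilon))$ by $\mu(\Xi_{C^*(\epsilon)})$, producing a summand $\mu(a^m)$ alongside a purely $a$-free expression. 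Phase 3 then iterates the single-$a$ move on every remaining $a$-node of the main forest, evacuating each one into the extracted $a$-line tree.

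The main obstacle I anticipate is the bookkeeping at the end of Phase 3. The resulting line tree has some length $m + k$ where $k = q_a - |C^*(\epsilon)|_a$, and I need $\mu(a^{m+k}) = \mu(a^{m+r_a})$, which reduces to $k \equiv r_a \pmod m$ in the cyclic group $\langle \mu(a_\square)^{\omega+1}\rangle \subseteq V$; similarly I must check that the residual $a$-free summand has $\mu$-image $\mu(F_a)$. I expect both issues to be resolved by iterating Phases 1--3 a bounded number of times so that the modular residue lines up (using (IDv) to collapse the repeated $a^m$-line trees produced at each round) and by tracking the effect on non-$a$ letters using (ZGh) to commute the horizontally-idempotent extracted bodies freely, ultimately matching the desired target form.
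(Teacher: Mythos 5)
Your overall plan matches the paper's: induction on the number of frequent letters, find an idempotent factor containing some frequent letter $a$ via \cref{fact:idempotent_in_tree}, extract a sibling line tree $a^m$ using \cref{lem:normal_form_idempotent}, then iterate a single-$a$ fishing move driven by (OUTh), (OUTv) and centrality of group elements to evacuate every occurrence of $a$, and finally apply the induction hypothesis. The two technical moves you identify are correct and proved along the same lines as in the paper.

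There is, however, a genuine gap in the way you execute Phase~2, and you have noticed its consequences without resolving them. You propose to actually \emph{replace} $\mu(C^*(\epsilon))$ by $\mu(\Xi_{C^*(\epsilon)})$ and then proceed with Phase~3 on the modified expression. This destroys control in two ways. First, the number of $a$'s left to collect in Phase~3 is $q_a - |C^*(\epsilon)|_a$, and $|C^*(\epsilon)|_a$ is not controlled modulo $m$ (idempotency of $\mu(C^*)$ in $V$ gives no constraint on the raw letter count of $C^*$), so the final line tree length need not be congruent to $m + r_a$ modulo $m$; ``iterating Phases 1--3'' has no obvious reason to ever make the residue line up, since each round introduces another uncontrolled offset. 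Second, the $a$-free part after Phase~3 is not $F_a = \pi_{\Sigma\setminus\{a\}}(F)$ but a jumble of (a) the extracted $a$-free contexts $C_i(\epsilon)$ from Phase~1, (b) the non-$a$ line trees coming out of $\Xi_{C^*(\epsilon)}$ (which may involve other frequent letters from $\SigmaF$), and (c) the leftover core — and this is not the forest on which you want to invoke the induction hypothesis.

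The paper avoids both problems with a reconstruction trick at the end of its second step. It does \emph{not} keep the pieces separated: after establishing $\mu(a^m + C_q(\epsilon)) = \mu(C_q(\epsilon))$ via \cref{lem:normal_form_idempotent} (which only needs that the letters of $a^m + C_q(\epsilon)$ agree with those of $C_q(\epsilon)$, guaranteed because $a$ occurs in $C_q$), it plugs this back into the expansion
$\mu(F) = \mu(C_q(\epsilon)) + \cdots + \mu(C_1(\epsilon)) + \mu(F_q)$
and re-assembles $F$ unchanged, obtaining the clean identity $\mu(F) = \mu(a^m) + \mu(F)$. The fishing in Step~3 then operates on the \emph{original} $F$, so the residual after exhausting all $a$'s is exactly $\pi_{\Sigma\setminus\{a\}}(F)$, and the line tree has length $m + q_a \equiv m + r_a$ (mod $m$) by definition, with no corrective iterations needed. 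You should add this reconstruction step and drop the replacement by $\Xi_{C^*(\epsilon)}$; with it, your two technical moves give exactly the paper's argument.
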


\begin{proof}
  The proof is graphically represented in \cref{fig:fishing}.
  We proceed by induction on the number of frequent letters, i.e., of letters in
  \(\SigmaF\).
  If \(\SigmaF=\emptyset\), then \(\Phi_{F} = F\) and they have the same images under \(\mu\).
  Now assume that \(\SigmaF\) has at least one letter. The proof proceeds in several
  steps. First, we want to identify an idempotent factor of~$F$ that contains a
  letter $a$ in~$\SigmaF$. Second, we want to argue that adding a sibling tree
  $a^n(\epsilon)$ to~$F$ does not change its image by~$\mu$. Third, we want to use this
  sibling tree to collect all occurrences of~$a$ in~$F$, and argue that $F$ has
  the same image by~$\mu$ as $a^{n+r_a} + \pi_{\Sigma \setminus \{a\}}(F)$: this
  is the analogue of a result on the ZG-congruence on words \cite[Claim
  3.9]{amarilli2023locality}. Repeating this argument for each letter
  of~$\SigmaF$
  yields the claim.

  Let us first perform the first step, where we find an idempotent factor in \(F\)
  containing a letter in \(\SigmaF\).
  Let \(F_{0} = F\).
  Assume we have constructed a sequence of forests \(F_{0},\ldots,F_{i}\) and a sequence of contexts \(C_{1},\ldots, C_{i}\),
  such that every \(F_{j}\) has size greater than \(N\).
  Thanks to that, we can apply \cref{fact:idempotent_in_tree} to \(F_{i}\).
  This give the decomposition in \cref{subfig:fishing:find}.
  Hence we can write \(F_{i}= D_{i+1}(C_{i+1}(F'_{i+1}))\) with \(\mu(C_{i+1})\) an idempotent of \(V\).
  Let \(F_{i+1} \coloneq D_{i+1}(F'_{i+1})\).
  We claim that \(\mu(F_{i}) = \mu(C_{i+1}(\epsilon) + F_{i+1})\),
  as represented in \cref{subfig:fishing:extract}.
  Indeed, with \(v = \mu(D_{i+1})\), \(w=\mu(C_{i+1})\) an idempotent and \(h = \mu(F'_{i+1})\):
  \begin{align*}
    \mu(F_{i}) &= \mu(D_{i+1}(C_{i+1}(F'_{i+1}))) \\
             &= v\odot (w^{\omega} \odot h) \\
             &= v \odot (w^{\omega}\odot\epsilon + h) \tag{by (OUTv)} \\
             &=v \odot (\omega \cdot (w^{\omega}\odot \epsilon) + h) \tag{by (IDv)} \\
             &= v\odot h + \omega \cdot (w^{\omega}\odot\epsilon) \tag{by (OUTh)} \\
             &= v\odot h + (w^{\omega}\odotVH \epsilon) \tag{by (IDv)} \\
             &=  (w^{\omega}\odot\epsilon) + v\odot h \tag{by (ZGh)} \\
             &= \mu(C_{i+1}(\epsilon) + F_{i+1})
  \end{align*}
  For the use of (OUTh), we are using \(v\odot(\square + h)\) as the context and \(w^{\omega} \odot\epsilon\) as the forest.
  If \(C_{i+1}\) contains a letter in \(\SigmaF\), the construction terminates and we
  continue with the next paragraph.
  Otherwise, \(F_{i+1}\) and \(F\) have the same number of occurrences of every
  letter of \(\SigmaF\), which is non empty.
  This implies that \(F_{i+1}\) has size greater than \(N\), and we can repeat the process.
  The size of \(F_{i}\) decreases at each step, so the construction must terminate.

  Let \(F_{0},\ldots,F_{q}\) and \(C_{1},\ldots,C_{q}\) be the obtained
  sequences, and \(a\) be the letter in \(\SigmaF\) that appears in \(C_{q}\).
  In the end we obtain that, as represented in \cref{subfig:fishing:frequent}:
  \begin{equation} 
    \label{eqn:exploding}
    \mu(F) = \mu(C_{q}(\epsilon)) + \cdots + \mu(C_{1}(\epsilon)) + \mu(F_{q}).
  \end{equation}
  In this equation, $C_q$ is an idempotent context containing $a$, so we have
  performed the first step.

  \medskip

  Let us now show the second step, where we argue that we can add a sibling tree
  $a^m$ to~$F$ without changing the image by~$\mu$. We will show this first
  by adding the new sibling tree to $C_q(\epsilon)$.
  Consider the forest \(F' = a^{m} + C_{q}(\epsilon)\).
  The contexts \(\mu(a^{m}_{\square})\) and \(\mu(C_{q})\) are vertical idempotents, and therefore, by (IDv), both \(\mu(a^{m})\) and \(\mu(C_{q}(\epsilon))\) are horizontal idempotents.
  So by (DISTh), \( \mu(g)\) is also an idempotent.
  Let \(\Xi_{F'}\) and \(\Xi_{C_{q}(\epsilon)}\)  be the normal-form forests as defined as in the statement of \cref{lem:normal_form_idempotent}.
  We know that \(F'\) and \(C_{q}(\epsilon)\) contain the same letters (because the
  letter $a$ occurs in~$C_q$), and therefore \(\Xi_{F'} = \Xi_{C_{q}(\epsilon)}\).
  So by \cref{lem:normal_form_idempotent}:
  \[ \mu(F') = \mu(\Xi_{F'}) = \mu(\Xi_{C_{q}(\epsilon)}) = \mu(C_{q}(\epsilon)).  \]
  This gives, as represented in \cref{subfig:fishing:separate}:
  \begin{align*}
    \mu(F) &=\mu(a^{m}) + \mu(C_{q}(\epsilon)) + \cdots + \mu(C_{1}(\epsilon)) + \mu(F_{q}) \\
         &= \mu(A^{m}) + \mu(F)
  \end{align*}
  where in the last equality we have used \cref{eqn:exploding} to reconstruct
  $F$. We have completed the second step of the proof.

  \medskip

  We now proceed with the third (and last) step of the proof, where we use the extra factor to
  collect the occurrences of~$a$ in~$F$.
  Recall that $q_a = |F|_a$ denotes the number of~$a$'s in~$F$.
  We build a sequence of forests \(G_{0},\ldots, G_{p}\)
  such that
  \begin{itemize}
    \item \(G_{0}=F\);
    \item for every \(0\leq i\leq q_a\), we have \(\mu(a^{m+i} + G_{i}) = \mu(F)\);
    \item for every \(0\leq i\leq q_a\), we have $|G_i|_a = q_a - i$, so in particular $|G_{q_a}|_a = 0$.
  \end{itemize}

  We show how to construct \(G_{i+1}\) 
  from \(G_{i}\).
  If $G_i$ no longer contains any~$a$, then we have obtained $G_{q_a}$ and we have
  finished, so
  assume that there still is an \(a\) in \(G_{i}\). In this case, we can write \(G_{i} = C'_{i}(a_{\square}(G_{i}'))\).
  Let \(G_{i+1} = C_{i}'(G_{i}')\).
  Let \(v = \mu(a_{\square})\), \(w_{i}= \mu(C'_{i})\) and \(h_{i}= \mu(G_{i}')\).
  The value \(v^{\omega+i}\odot\epsilon\) can be written, thanks to \cref{lem:IDv}, as \((\omega+1) \cdot (v^{\omega+i}\odot\epsilon)\).
  So we can apply (OUTh) with \(v^{\omega+i}\odot\epsilon\).
  We have that:
  \begin{align*}
    \mu(F) &= \mu(a^{m+i} + G_{i}) \\
         &= v^{\omega+i}\odot\epsilon + w_{i}\odot(v\odot h_{i}) \\
         &=  w_{i}\odot(v^{\omega+i}\odot\epsilon + v\odot h_{i}) \tag{by (OUTh)} \\
         &=  w_{i}\odot(v^{\omega+i}\odot(v\odot h_{i})) \tag{by (OUTv)} \\
         &=  w_{i}\odot(v^{\omega+i+1}\odot h_{i})  \\
         &=  w_{i}\odot(v^{\omega+i+1}\odot\epsilon + h_{i}) \tag{by (OUTv)} \\
         &= v^{\omega+i+1}\odot\epsilon + w_{i}\odot h_{i} \tag{by (OUTh)} \\
         &=  \mu(a^{m+i+1} + G_{i+1})
  \end{align*}

  For the first use of (OUTh), we are using \(w_{i}\odot(\square \oplus (v\odot h_{i}))\) as the context and \( v^{\omega+i}\odot \epsilon\) as the forest.
  For the second use of (OUTh), we are using the context \( w_{i}\odot(\square \oplus  h_{i}))\) and the forest \( v^{\omega+i+1}\odot \epsilon\).
  In both cases, we are using (IDv) implicitly to argue that the forests are idempotent in \(H\).

  We have indeed that \(G_{i+1}\) has one less \(a\) than \(G_{i}\).
  In the end, thanks to the conservation of the number of \(a\)'s, we have that
  \begin{align*}
    \mu(F) &= \mu(a^{m+q_{a}} + G_{q_a}) \\
     &= \mu(a^{m+r_{a}} + G_{q_a})
  \end{align*}
  with \(G_{q_a}\) being the projection of \(F\) on \(\Sigma\setminus \{a\}\)
  and \(q_a\) is the
  number of \(a\)'s in \(F\).
  The situation is represented in \cref{subfig:fishing:fishing}.
  The last equality comes from the fact that \(\mu(a^{m}) = \mu(a^{2m})\), thus
  we can substract \(m\) to \(q_{a}\) until we reach \(r_{a} = q_{a} \text{ mod } m\). This concludes the third step.

  \medskip
  To finish the proof, we use the induction hypothesis on \(G_{q_a}\), that has
  one less letter in \(\SigmaF\) than \(F\).
  We obtain the figure in \cref{subfig:fishing:induct}.
\end{proof}

We can finally conclude the proof of \cref{thm:almost-comm_zg}:

\begin{proof}[Proof of~\cref{thm:almost-comm_zg}]
  By \cref{lem:almost_comm_satisfies_ZG_words}, we have the left-to-right
  implication, so we prove the other implication: 
  let \(\LL\) be a language recognised by a morphism $\mu$ to a forest algebra
  $(V, H)$ in ZG.
  Let \(N =|V|^{5|V|^{6|V|}}\) and let \(m\) be the idempotent power of \(V\).

  For any $\Sigma$-forest $E$, 
  remember the definition of the normal form forest \(\Phi_{E}\) from the statement of \cref{lem:normal_form_ZG}.
  We call the set \(\SigmaR\) the set of \emph{rare} letters in \(E\) and
  \(\SigmaF\) the
  set of \emph{frequent} letters in \(E\), and write them \(\SigmaR(E)\) and
  \(\SigmaF(E)\)

  We then define an equivalence relation on $\Sigma$-forests as:
  \[
    F\sim G \text{ iff }
    \left\{\begin{array}{l}
      \SigmaR(F) = \SigmaR(G) \text{ and } \SigmaF(F)=\SigmaF(G) \\
            \pi_{\SigmaR(F)}(F) = \pi_{\SigmaR(G)}(G)  \\
           \forall a\in \SigmaF(F), |F|_{a} \equiv |G|_{a} \text{ mod } m
    \end{array}\right. . \]

  One can check that, by definition, \(F\sim G\) then \(\Phi_{F}=\Phi_{G}\). Thus, 
  by \cref{lem:normal_form_ZG}, if  \(F\sim G\)
  then \(\mu(F)=\mu(G)\).
  Moreover, there are finitely many equivalence classes of \(\sim\).
  Indeed, trees of the form \(\pi_{\SigmaR(F)}(F)\) have at most \(|\Sigma|\cdot N\) letters.
  From these two facts we deduce that \(\LL\) is a finite union of equivalence classes of \(\sim\).
  All is left to do is to show that an equivalence class \(X\) of \(\sim\) is an almost-commutative language.

  Let us now fix the equivalence class $X$: choose disjoint subalphabets
  \(\SigmaR\) and \(\SigmaF\) such that \(\Sigma= \SigmaR\cup \SigmaF\), let \(E\) be a forest over
  \(\SigmaR\) having at most $|\Sigma| \cdot N$ letters,
  and let \(r\in \{0,\ldots,n-1\}^{\SigmaF}\) such that
  \(X\) is the set of forests whose rare letters are \(\SigmaR\), whose frequent
  letters are \(\SigmaF\), whose projection over \(\SigmaR\) is \(E\) and
  where for every \(a\in \SigmaF\), the number of \(a\)'s is congruent to \(r_{a}\) modulo \(m\).
  We define \(\LL_{1}\) to be the virtually-singleton language of forests whose projection over \(\SigmaR\) is \(E\).
  Let \(S\) be the 
  set of the vectors \(x\) such that
  \begin{itemize}
    \item \(x_{a} \geq N\) and \(x_{a} \equiv r_{a}\) modulo \(n\) for every \(a\in \SigmaF\),
    \item \(x_{a} = |E|_{a}\) for every \(a\in \SigmaR\).
  \end{itemize}
  We define \(\LL_{2}\) to be the commutative language of forests whose Parikh
  image is in \(S\), which is easily seen to be regular because \(S\) is
  ultimately periodic.
  We have that
  \[ X = \LL_{1} \cap \LL_{2}, \]
  proving that it is almost-commutative.

  Thus, $L$ is a finite union of equivalence classes which are all
  almost-commutative languages, so $L$ is almost-commutative. This concludes the
  proof.
\end{proof}

\end{toappendix}

Proving this algebraic characterization is the main technical challenge of the
paper.
Let us sketch the proof of \cref{thm:almost-comm_zg}, with the details given in
Appendix~\ref{apx:zgproof}:

\begin{proofsketch}
  The easy direction is to show that almost-commutative languages have a
  syntactic forest algebra in ZG. We first show this for commutative languages
  (whose vertical monoids are unsurprisingly commutative) and for
  virtually-singleton languages (whose vertical monoids are nilpotent, i.e.,
  are in the class MNil of~\cite{amarilli2021dynamic,Straubing82}). We conclude
  because satisfying the ZG equation is preserved under Boolean operations.

  The hard direction is to show that any regular language $L$ whose
  syntactic forest algebra is in ZG is 
  almost-commutative, i.e., can be expressed as a finite Boolean combination of
  virtually-singleton and regular-commutative languages. For this, we show that
  morphisms to a ZG forest algebra must be determined by the
  following information on the input forest: which letters are \emph{rare}
  (i.e., occur a constant number of times) and which are \emph{frequent}; how
  many times the frequent letters appear modulo the idempotent power of the
  monoid; and what is the projection of the forest on the rare letters. These
  conditions amount to an almost-commutative language, and are the analogue for
  trees of results on ZG languages \cite{amarilli2021dynamic}. The proof is
  technical, because we must show how the ZG equation on the vertical monoid
  implies that every sufficiently frequent letter commutes
  both vertically and horizontally.
\end{proofsketch}

\subparagraph*{Hardness for syntactic forest algebras outside ZG.}
To show our conditional hardness result (\cref{thm:lb}), what remains is to
show that the dynamic membership problem is hard for regular
languages with a neutral letter and whose syntactic forest algebra is not in ZG:

\begin{toappendix}
  \subsection{Proof of \cref{prp:nonzghard}}
\end{toappendix}

\begin{propositionrep}
  \label{prp:nonzghard}
  Let $L$ be a regular forest language, and assume that it has a neutral letter
  and that its syntactic forest
  algebra is not in ZG. Subject to \cref{conj:prefix_uone}, the dynamic membership
  problem for~$L$ cannot be solved in constant time per update.
\end{propositionrep}

\begin{proofsketch}
  We reduce from the case of words: from two contexts $v$ and
  $w^{\omega+1}$ witnessing that the ZG equation does not hold, we consider
  forests formed by the vertical composition of a sequence of
  contexts which can be either $v$ or $w^{\omega+1}$, with a suitable context at
  the beginning and end. Let $L'$ be the word language of those sequences of
  contexts giving a forest in~$L$: we show that the syntactic monoid of~$L'$
  is not in ZG, and as $L'$ features a neutral letter, we conclude by~\cite{amarilli2021dynamic} that $L'$ does not enjoy $O(1)$
  dynamic membership assuming \cref{conj:prefix_uone}. Further, dynamic
  membership to~$L'$ can be achieved using a data structure for the same problem
  for~$L$ on the forests that we constructed, so hardness also holds for~$L$.
\end{proofsketch}

\begin{proof}
  Let \((V,H)\) be the syntactic forest algebra of \(L\),
  we know that \(V\) is not in ZG.
  Hence, let $v, w \in V$ be such that $v^{\omega+1} w \neq w v^{\omega+1}$.
  By surjectivity, let $C$ and $D$ be contexts achieving $v$ and~$w$, so that
  $\mu(C^{m+1}(D)) \neq \mu(D(C^{m+1}))$ for \(m\) any multiple of the idempotent power of \(V\).
  By the minimality property of the syntactic forest algebra,
  there exists a context $E$ and a
  forest $F$ that distinguishes $C^{m+1}(D)$ and $D(C^{m+1})$, i.e., precisely one of
  $E(C^{m+1}(D(F)))$ and $E(D(C^{m+1}(F)))$ is in~$L$.

  We will construct a language \(L'\) of words over the alphabet
  \(\Sigma'=\{\square,C,D\}\) enjoying the following properties: (1.) $L'$ will
  feature a neutral letter (namely $\square$), (2.) the dynamic membership problem
  for~$L'$ will not be in $O(1)$ subject to~\cref{conj:prefix_uone}, and (3.) a
  $O(1)$-algorithm for dynamic membership to~$L$ would yield such an algorithm
  for~$L'$.

  To define~$L'$, we can identify a word in \(\Sigma'\) with the context that consists of the
  concatenation of all letters of the word.
  The language \(L'\) is then defined as the set of words \(w\) such that
  \(E(w(F))\) is in \(L\): note that the letter \(\square\) of~$\Sigma'$ is
  indeed neutral for~$L'$, which establishes point (1.).

  To establish point (2.),
  we need to argue about properties of the syntactic monoid \(M\) of~$L'$.
  We have not defined the algebraic theory for word languages $L'$, though it is similar to
  the algebraic theory of forests.
  All we need to know is that there is a morphism \(\nu: (\Sigma')^{*} \rightarrow M\) that recognizes \(L'\), i.e.\
  there is \(P\subseteq M\) such that \(P=\nu(L')\).
  We claim that this monoid \(M\) is not in ZG.
  Let \(m\) be a multiple of both the idempotent powers of \(V\) and \(M\).
  We know that precisely one of \(E(C^{m+1}(D(F)))\) and \((E(D(C^{m+1}(F))))\) is in \(L\).
  By definition of \(L'\), it implies that
  \(\nu(C^{m+1}(D))\neq\nu(D(C^{m+1}))\),
  because precisely one of \(C^{m+1}(D)\) and \(D(C^{m+1})\) is in \(L'\).
  Thus the equation of ZG is violated in~$M$ by \(x=\nu(C)\) and \(y=\nu(D)\),
  as \(x^{\omega+1}y\neq yx^{\omega+1}\). 
  It follows from \cite{amarilli2021dynamic}
  that, subject to~\cref{conj:prefix_uone},  
  a language with a neutral letter (by (1.)) and whose syntactic monoid is not
  in ZG cannot have a \(O(1)\) dynamic membership problem. This establishes
  point (2.).

  To establish point (3.), 
  we denote by \(\gamma_{\square}\) the context \(C(D)\) in which all labels are replaced by the neutral letter.  Similarly, \(\gamma_{D}\) (resp., \(\gamma_{C}\)) is \(C(D)\) in which all
  labels in \(C\) (resp., in \(D\)) are replaced by the neutral letter.
  Note that the contexts \(\gamma_{\square}\), \(\gamma_{C}\), and \(\gamma_{D}\) all have the same shape.
  Now, if we want to maintain a word \(w\) of size \(n'\), we instantiate \(n'\) concatenated copies of \(\gamma_{\square}\) prefixed by \(E\) and suffixed by \(F\).
  We denote the result by \(T\). It is equivalent to \(E(F)\) at this stage.
  This preprocessing is in linear time, and the size $n$ of the constructed
  forest is in $\Theta(n')$.
  Then a substitution update in \(w\) to \(\square\) (resp., to \(C\), to \(D\))
  is translated into a constant number of relabeling updates in \(t\),
  so that \(\gamma_{\square}\) (resp., \(\gamma_{C}\), \(\gamma_{D}\)) appears at the right position.
  This implies that at every moment, \(T\) is equivalent to the forest \(E(w(F))\).
  Therefore, \(w\in L'\) is the same as \(T\in L\), establishing point (3.).

  Finally, we reason by contradiction in order to conclude.
  Assume that \(L\) has a \(O(1)\) dynamic membership algorithm.
  Then, by point (3.), \(L'\) also has a \(O(1)\) dynamic membership algorithm.
  This contradicts point (2.), which concludes.
\end{proof}

Note that \cref{prp:nonzghard} is where we use the assumption that there is a 
neutral letter. The result is not true otherwise, as we explain
below (and will discuss further in \cref{sec:extension}):

\begin{example}
  \label{exa:l0}
  Consider the language $L_0$ of forests over $\Sigma = \{a, b, c\}$ where there
is a node labeled $a$ whose next sibling is labeled~$b$. Membership to~$L_0$ can
be maintained in $O(1)$, like the language of words $\Sigma^* ab
\Sigma^*$. (Note that $c$ is not a neutral letter, because $ab$ is accepted but $acb$ is not.)
However, one can show that the syntactic
forest algebra of~$L_0$ is not in ZG. By contrast, adding a neutral letter
to~$L_0$
yields a language (with the same
syntactic forest algebra) with no $O(1)$ dynamic membership
algorithm under \cref{conj:prefix_uone}.
\end{example}

With \cref{prp:nonzghard} and \cref{thm:almost-comm_zg}, we can conclude the proof of \cref{thm:lb}:

\begin{proof}[Proof of \cref{thm:lb}]
  We already know that almost-commutative languages can be maintained efficiently
(\cref{thm:almost_cor_in_RAM}). Now, 
given a regular
forest language $L$ which is not almost-commutative and features a neutral letter,
we know by \cref{thm:almost-comm_zg}
  that its syntactic forest algebra is not in ZG, so we
conclude by \cref{prp:nonzghard}.
\end{proof}

\begin{toappendix}
  \vfill
\doclicenseThis
\end{toappendix}

\section{Conclusions and Future Work}
\label{sec:extension}
We have studied the problem of dynamic membership to fixed tree
languages under substitution updates. We have shown an $O(\log n / \log \log n)$
algorithm for arbitrary regular languages, and introduced the class of
almost-commutative languages for which dynamic membership can be decided in
$O(1)$. We have shown that, under the prefix-U1 conjecture, and if the
language is regular and features 
a neutral letter, then it must be almost-commutative to be
maintainable in constant time. Our work leaves many questions open which we
present below:
characterizing the
$O(1)$-maintainable languages without neutral letter, identifying languages with
complexity between $O(1)$ and $\Theta(\log n / \log \log n)$, and other open
directions.

\subparagraph*{Constant-time dynamic membership without neutral letters.}
Our conditional characterization of constant-time-maintainable regular languages
only holds in the presence of a neutral letter. In fact, without neutral letters,
it is not difficult to find non-almost-commutative forest languages with
constant-time dynamic membership, e.g., the language $L_0$ from \cref{exa:l0}.
There are more complex examples, e.g., dynamic membership in~$O(1)$ is
possible for the word language ``there is exactly one $a$ and exactly one $b$,
the $a$ is before the~$b$, and the distance between them is even'' (it is in
QLZG~\cite{amarilli2021dynamic}), and the same holds for the analogous forest
language.
We can even precompute ``structural information'' on forests of a more complex
nature than
the parity of the depths, e.g., to maintain in $O(1)$ ``there
is exactly one $a$ and exactly one $b$ and the path from $a$ to $b$ is a
downward path that takes only first-child edges''. We also expect other constant-time
tractable cases, e.g., ``there is exactly one node labeled $a$ and exactly
one node labeled $b$ and their least common ancestor is labeled $c$'' using
a linear-preprocessing constant-time data structure to answer least common
ancestor queries on the tree structure~\cite{harel1984fast}; or ``there is a node labeled
$a$ where the leaf reached via first-child edges is labeled~$b$''. These
tractable languages are not almost-commutative (and do not feature neutral
letters), and a natural direction for future work would be to characterize the
regular forest languages maintainable in $O(1)$ without the neutral letter
assumption.

\subparagraph*{Intermediate complexity for dynamic membership.}
We have explained in \cref{sec:prelim} that dynamic membership is in
$\Omega(\log n / \log \log n)$ for some regular forest languages, and
in \cref{sec:zgupper} that it is in $O(1)$ for almost-commutative
languages. One natural
question is to study intermediate complexity regimes. In the setting of
word languages, it was shown in~\cite{skovbjerg1997dynamic}
(and extended in~\cite{amarilli2021dynamic})
that
any aperiodic language~$L$ could be maintained in $O(\log \log n)$ per update.
This implies that some forest languages can be maintained with the same
complexity, e.g., the forests whose nodes in the prefix ordering form a word in an
aperiodic language~$L$.

The natural question is then to characterize which forest languages enjoy dynamic
membership between $O(1)$ and the general $\Theta(\log n /
\log \log n)$ bound.
We leave this
question open, but note a difference with the setting for words:
there are some aperiodic forest languages (i.e., both monoids of the syntactic
forest algebra are aperiodic) to which an $\Omega(\log n / \log \log n)$ lower
bound applies, e.g., the language for the existential marked ancestor problem
reviewed at the end of \cref{sec:prelim}. An intriguing question is whether
there is a dichotomy on 
regular forest languages, already in the aperiodic case, between those with 
$O(\log \log n)$ dynamic membership, and those with a
$\Omega(\log n / \log \log n)$ lower bound.

\subparagraph*{Other questions.}
Beyond relabeling updates
it would be natural to study the complexity of dynamic membership under
operations that can change the shape of the forest, e.g., leaf insertion and leaf deletion.
Another question concerns the support for more general queries than dynamic
membership, e.g., enumerating the answers to non-Boolean queries like
in~\cite{losemann2014mso,amarilli2018enumeration} (but with language-dependent
guarantees on the update time to improve over $O(\log n)$). Last, another
generalization of dynamic membership to forest languages is to consider the
same problem for context-free
languages, e.g., for Dyck languages (\cite[Proposition~1]{husfeldt1998hardness}),
or for visibly pushdown languages. Note that this is a different setting
from forest languages because substitution updates may change the shape of
the parse tree.

\vfill
\pagebreak

\bibliography{bib}

\end{document}